\newtheorem{assumption}[theorem]{Assumption}
\newcommand{\BALD}{\begin{aligned}}
	\newcommand{\EALD}{\end{aligned}}
\newcommand{\BALDS}{\begin{aligned*}}
	\newcommand{\EALDS}{\end{aligned*}}
\newcommand{\BCAS}{\begin{cases}}
	\newcommand{\ECAS}{\end{cases}}
\newcommand{\BEAS}{\begin{eqnarray*}}
	\newcommand{\EEAS}{\end{eqnarray*}}
\newcommand{\BEQ}{\begin{equation}}
\newcommand{\EEQ}{\end{equation}}
\newcommand{\BIT}{\begin{itemize}}
	\newcommand{\EIT}{\end{itemize}}
\newcommand{\BMAT}{\begin{bmatrix}}
	\newcommand{\EMAT}{\end{bmatrix}}
\newcommand{\BNUM}{\begin{enumerate}}
	\newcommand{\ENUM}{\end{enumerate}}
\def\reals{\mathbf{R}}
\newcommand{\deltavec}{\mbox{\boldmath $\delta$}}
\newcommand{\ta}{\tilde{a}}
\newcommand{\thetas}{\theta^*}
\newcommand{\deltas}{\delta^*}
\newcommand{\tdelta}{\widetilde{\delta}}
\newcommand{\ttheta}{\widetilde{\theta}}
\newcommand{\tthetadk}{\widetilde{\theta}_k^d}
\newcommand{\Thetah}{\widehat\Theta}
\newcommand{\htau}{\hat{\tau}}
\newcommand{\drho}{\dot{\rho}}
\newcommand{\ddrho}{\ddot{\rho}}
\newtheorem{result}[theorem]{Result}
\newcommand{\indicator}{\mathbf{1}}
  \setlist{leftmargin=*,noitemsep}
\newenvironment{sproof}[1]{\noindent {\textit{#1.}}}{\hfill \BlackBox\\[2mm]} 
\def\hdelta{\widehat{\delta}}
\def\htheta{\widehat{\theta}}
\begin{document}

\title{Meta-analysis of heterogeneous data: integrative sparse regression in high-dimensions}
\author{\name Subha Maity \email smaity@umich.edu \\
        \name Yuekai Sun \email yuekai@umich.edu\\
        \name Moulinath Banerjee \email moulib@umich.edu\\
       \addr Department of Statistics\\
       University of Michigan \\
       Ann Arbor, MI}
    \editor{Qiang Liu}
  \maketitle

\begin{abstract}
We consider the task of meta-analysis in high-dimensional settings in which the data sources  are similar but non-identical. To borrow strength across such heterogeneous datasets, we introduce a global parameter that emphasizes interpretability and statistical efficiency in the presence of heterogeneity. We also propose a one-shot estimator of the global parameter that preserves the anonymity of the data sources and converges at a rate that depends on the size of the combined dataset.   For high-dimensional linear model settings, we demonstrate the superiority of our  identification restrictions in adapting to a previously seen data distribution as well as predicting for a new/unseen data distribution. Finally, we demonstrate the benefits of our approach on a large-scale drug treatment dataset involving several different cancer cell-lines\footnote{Codes are available in  \href{https://github.com/smaityumich/MrLasso}{https://github.com/smaityumich/MrLasso}.}.
\end{abstract}

\begin{keywords}
  Robust statistics, sparse-regression, meta-analysis
\end{keywords}

\section{Introduction}
\label{sec:intro}

We consider the task of synthesizing information from multiple datasets. This is usually done to enhance statistical power by increasing the effective sample size. However, as seen in many pooled studies, the inherent heterogeneity among the studies must be managed carefully. For example, in pooled genome-wide association studies (GWAS), differences in study populations and measurement methods may confound the association between the biomarkers and the outcome \citep{leek2010Tackling}. To obtain meaningful conclusions, practitioners must properly model the heterogeneity in pooled studies.

In this paper, we consider the task of fitting high-dimensional regression models to $m$ similar, but non-identical datasets. To borrow strength across the datasets, we must distinguish between the common and idiosyncratic parts of the datasets. Let $\theta_k^*\in\reals^d$ be the vector of regression coefficients for the $k$-th dataset. Without loss of generality, we posit that the $\theta_k^*$'s are related through an additive structure:
\begin{equation}
\theta_k^* = \theta_0^* + \delta_k^*,\quad k\in[m].
\label{eq:additive-decomposition}
\end{equation}
This is a decomposition of the $\theta_k^*$'s into common and idiosyncratic parts: the \emph{global parameter} $\theta_0^*\in\reals^d$ represents the common part in the datasets, while the $\delta_k^*$'s represent the idiosyncratic parts. By itself, this decomposition does not identify the common and idiosyncratic parts because the decomposition is not unique: for any $\delta\in\reals^d$, $(\theta_0^*,\delta_1^*,\dots,\delta_m^*)$ and $(\theta_0^* - \delta,\delta_1^* + \delta,\dots,\delta_m^* + \delta)$ are different decompositions of the same $\theta_k^*$'s. Thus it is necessary to impose additional identification restrictions on $\theta_0^*$ and the $\delta_k^*$'s. Note that this identification issue is not specific to the additive decomposition \eqref{eq:additive-decomposition}; any parameterization of the $\theta_k^*$'s that introduces a global parameter (to model the common parts of the datasets) is overparameterized/underdetermined. 

As we shall see, the choice of identification restriction is crucial to the efficacy of meta-analysis; an improperly defined global parameters may \emph{totally negate its benefits}. For example, consider the common ANOVA identification restriction:
\[\textstyle
\sum_{k=1}^m\delta_k^* = 0.
\]
This restriction implies that the global parameter is the average of the $\theta_k^*$'s: 
\begin{equation}\textstyle
\theta_0^* \triangleq \frac1m\sum_{k=1}^m\theta_k^*.
\label{eq:anovaIdRestriction}
\end{equation}
In some cases (\eg\ when the $\theta_k^*$'s have disjoint supports), the global parameter can be up to $m$-times denser than the local parameters, which negates any statistical efficiency gains from borrowing strength across the datasets. We defer a more comprehensive discussion of the inadequacies of common identification restrictions to Section \ref{sec:identifiability}.

In this paper, we define the global parameter as 
\begin{equation}
    \textstyle
(\theta_0^*)_j \triangleq \argmin_{\mu_j\in\reals}\sum_{k=1}^m\Psi((\theta_k^*)_j - \mu_j),\quad j\in[d],
\label{eq:redescendind}
\end{equation}
where $\Psi$ is a \emph{re-descending loss function} \citep{huber1964Robust}. Intuitively, this definition separates $(\theta_1^*)_j, \dots, (\theta_m^*)_j$ into outlier and inliers and defines $(\theta_0^*)_j$ as the average of the inliers. Thus the definition identifies the location of the inliers as the common part of the datasets. This identification restriction is motivated by an $\eps$-contamination model for the $\theta_k^*$'s:
\begin{equation}
    \textstyle
    (\theta_k^*)_j \sim (1-\eps) F_j + \eps G_j,
    \label{eq:gen-contamination}
\end{equation}
where $\eps < \frac12$ is the fraction of outliers, $F_j$ is the distribution of the inliers, and $G_j$ is the distribution of the outliers (among the $j$-th regression coefficients). The two main benefits of this identification restriction are
\begin{enumerate}
\item \emph{interpretability:} if a majority of $(\theta_1^*)_j,\dots,(\theta_m^*)_j$ are zero, then $(\theta_0^*)_j$ is zero. More generally, if there is a majority value among $(\theta_1^*)_j,\dots,(\theta_m^*)_j$ (i.e. a common value shared by more than half of $(\theta_1^*)_j,\dots,(\theta_m^*)_j$), then $(\theta_0^*)_j$ is the majority value. 
 Furthermore, if most of the local parameter coordinates $(\theta_1^*)_j,\dots,(\theta_m^*)_j)$ are in a \emph{bulk} (considered as inliers) and a few of them are significantly different (outliers), then $(\theta_0^*)_j$ is the average of only the inliers. 
This ensures that the global parameter is \emph{sparse} and \emph{interpretable} as the common part of the local datasets.
\item \emph{statistical efficiency:} $\theta_0^*$ can be estimated at a faster rate than the $\theta_k^*$'s. This ensures meta-analysis leads to gains in statistical efficiency.
\end{enumerate}
We note that the ANOVA identification restriction does not share these two benefits. 

The rest of this paper is organized as follows. In Section \ref{sec:identifiability}, we describe the pitfalls of several common definitions of the global parameter in integrative studies and propose a robustness-based definition to avoid such pitfalls. We design an estimator of the robustness-based global parameter in Section \ref{sec:estimator} and prove in Section \ref{sec:theoreticalProperties} that it converges at a rate \emph{that depends on the total number of samples in all the datasets}. Finally, we demonstrate the benefits of our approach in predicting the response of rare cancers to therapeutics with the Cancer Cell Line Encyclopedia (CCLE).

\subsection{Related work}

The goal of meta-analysis is borrowing strength from different datasets, and the most common approach is a two-step method in which the local parameters are first estimated from their respective datasets and then combined with (say) a fixed-effects model \citep{hedges2014statistical} or a random-effects model  \citep{dersimonian1986meta}. In the high-dimensional setting, we must also perform variable selection to reduce the prediction/estimation error and ensure that the fitted model is interpretable.

\citet{gross2016Data,asiaee2018High} studied the case of heterogeneous linear regression models, where it is assumed that the underlying distribution of data sets are not the same. The former focuses on the prediction aspects of the problem, while the latter deals with the estimation aspects. Both papers reduce the problem of meta-analysis into a single lasso exercise, while the latter uses a version of the gradient descent algorithm to estimate parameters. Heterogeneity in the Cox model was studied by  \citet{cheng2015Identification} where the likelihood was maximized using a suitable lasso problem. However, all these methods required the full data sets for analysis to be available on a single platform. This raises the question of communication efficiency and privacy concerns pertaining to the data sets. \cite{cai2021individual} proposed a communication-efficient integrative analysis for high-dimensional heterogeneous data which addresses the issue of privacy preservation. In their two-step estimation procedure, they used lasso estimates and covariance matrices to obtain an estimator for the shared parameter, which, in a nutshell, is the average of the debiased lasso estimates.

There is a line of work on distributed statistical estimation and inference, \eg,  \citet{lee2017Communicationefficient, battey2018distributed, jordan2016CommunicationEfficient}, which is distinguished from our work by the additional assumption of no heterogeneity: \ie\ the datasets are identically distributed. In this line of work, there are two general approaches: averaging local estimates of the parameters \citep{lee2017Communicationefficient,battey2018distributed} and averaging local estimates of the score/sufficient statistic \citep{jordan2016CommunicationEfficient}. Although averaging the score has computational benefits over averaging the parameters, the latter is more amenable to meta-analysis because modeling the heterogeneity in the parameters is easier than modeling heterogeneity in the score.

\section{Identifying the global parameter}
\label{sec:identifiability}

We now quickly review the developments thus far. To distinguish between the common and idiosyncratic parts of the local parameters, we model the \emph{local parameters} $\theta_1^*,\dots,\theta_m^*$ with an additive model:
\[
\theta_k^* = \theta_0^* + \delta_k^*,\quad k\in[m].
\]
By itself, this additive model does not \emph{identify} the global parameter $\theta_0^*$ because the model is overparameterized. Thus it is necessary to impose additional identification restrictions to uniquely identify the global parameter. As we saw in Section \ref{sec:intro}, the choice of identification restriction is crucial to the statistical efficiency of meta-analysis. Recall we look for two properties in an identification restriction: interpretability and statistical efficiency. We describe them here again for the reader's convenience:
\begin{enumerate}
\item \emph{interpretability:} if there is a majority value among $(\theta_1^*)_j,\dots,(\theta_m^*)_j$ (i.e. a common value shared by more than half of $(\theta_1^*)_j,\dots,(\theta_m^*)_j$), then $(\theta_0^*)_j$ is the majority value. This encourages sparsity in the global parameter $\theta_0^*$ because if a majority of $(\theta_1^*)_j,\dots,(\theta_m^*)_j$ are zero, then $(\theta_0^*)_j$ is zero. 
\item \emph{statistical efficiency:} $\theta_0^*$ can be estimated at a faster rate than the $\theta_k^*$'s. This ensures meta-analysis leads to gains in statistical efficiency.
\end{enumerate}
In the rest of this section, we show that standard identification restrictions do not satisfy the two preceding properties. After describing the inadequacies of standard identification restrictions, we present two that satisfy our desiderata.

\subsection{Inadequacies of standard identification restrictions}

\paragraph{Mean:} The usual approach to borrowing strength across heterogeneous datasets is to model the variation among the local parameters with a prior and consider the (hyper)parameter of the prior as the global parameter. The celebrated empirical Bayes approach is a prominent example. One of the simplest and most widely used priors is the Gaussian prior: 
\[
(\theta_k^*)_j \sim N((\theta_0^*)_j,\sigma_0^2).
\]
The standard estimator of $(\theta_0^*)_j$ is $\frac1m\sum_{k=1}^m(\theta_k^*)_j$, which leads to the mean identification restriction \eqref{eq:anovaIdRestriction}. Unfortunately, this identification restriction does not satisfy our desiderata. Intuitively, the issue is it aggregates all local parameters in the definition of the global parameter regardless of whether a local parameter is similar to the other local parameters. This causes $(\theta_0^*)_j$ to lack interpretability because even a single non-zero local parameter is enough to nudge the global parameter away from zero. In the worst case (when the sparsity patterns of the local parameters are disjoint), the sparsity of the global parameter can be $m$ times the sparsity of the local parameters. This extra complexity of the global parameter negates any gains in statistical efficiency from meta analysis.

\begin{figure}
\centering
\includegraphics[width=0.75\linewidth]{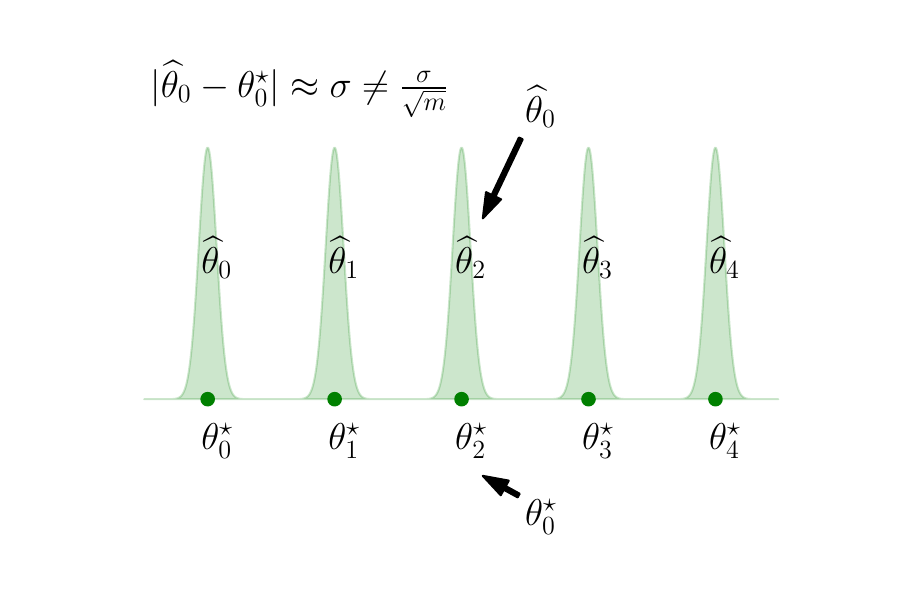}
\caption{Median fails to borrow strength form well-separated local parameter estimates. }
\label{fig:median}
\end{figure}

\paragraph{Median:} To address the sensitivity of the mean identification restriction to outliers, we consider identifying the global parameter with more robust centrality measures. One obvious choice is the median:
    \begin{equation}\textstyle
\theta_0^* \triangleq \arg\min_{\theta_0\in\reals^d}\sum_{k=1}^m\|\theta_k^* - \theta_0\|_1.
\label{eq:median}
\end{equation} Unfortunately, in certain scenarios, the median may \emph{fail to borrow strength} across datasets, making it statistically inefficient. If the local parameters are well-separated (see Figure \ref{fig:median}), then the global parameter is one of the local parameters. In this case, it is impossible to estimate the global parameter at a faster rate than the equivalent local parameter. We provide a detailed example in Appendix \ref{sec:counter-example} (see Example \ref{ex:medianFail}).

We note that \citet{asiaee2018High} and \citet{gross2016Data} both considered identification restrictions similar to \eqref{eq:median}. In particular, they studied the estimator 
\[
(\htheta_0, \hdelta_1, \ldots, \hdelta_m) \in \argmin_{(\theta_0,\delta_1, \dots , \delta_m)}\left\{\begin{aligned}&\textstyle\frac{1}{2N} \sum_{k=1}^m  \left\| \bY_{k} - \bX_{k} (\theta_0+ \delta_k)  \right\|_2^2 \\
&\textstyle\quad+ \frac1m(\lambda_0\|\theta_0\|_1 + \sum_{k=1}^m\lambda_k \|\delta_k\|_1)\end{aligned}\right\},
\] 
where $\bX_k\in\reals^{n_k\times d}$ and $\bY_k\in\reals^{n_k}$ are the features and responses in the $k$-th dataset and $N = \sum_{k = 1}^m n_k$ is the total sample size. This estimator implicitly enforces the identification restriction
\[\textstyle
\theta_0^* \triangleq \argmin_{\theta_0} \left(\sum_{ k =1}^m \|\thetas_k -  \theta_0 \|_1 + c\|\theta_0\|_1  \right)
\] 
for some $c > 0$, which, unfortunately, suffers from a similar drawback as \eqref{eq:median}. We refer to Result \ref{result:id-data-shared-lasso} in Appendix \ref{sec:counter-example} for the details. We note that this issue is reflected in \cite{asiaee2018High}'s theoretical results. In particular, they showed that 
\[\textstyle
\|\widehat\theta_0 - \theta_0^*\|_2 + \sum_{k=1}^m\frac{1}{\sqrt{m}}\|\widehat\delta_k - \delta_k^*\|_2 \lesssim_P \max_km(\frac{\max_ks_k\log d}{N})^{\frac12},
\]
where $s_k$ is the sparsity of $\theta_k^*$. If we assume $\theta_0^*$, $\delta_1^*,\dots,\delta^*_m$ are all $s$-sparse, then the right side simplifies to $\sqrt{m}(\frac{s\log d}{n})^{\frac12}$. While this is the fastest rate of convergence we expect for the left side, it suggests that the estimator of the global parameter $\htheta_0$ converges at a rate that depends on the local sample size. 

\paragraph{Huber loss:} As an alternative to the median, we consider minimizing Huber's loss \citep{huber1964Robust}. Huber's loss function is defined as  \[ L_\eta (x)  = \begin{cases} \frac12 x^2 & \text{if } |x|\le \eta \\
\eta \left(|x| - \frac12 \eta \right) & \text{otherwise},
\end{cases}\]
for some $\eta > 0$, and the global parameter is defined (coordinate-wise) as:
\begin{equation}\textstyle
(\theta_0^*)_j \triangleq \argmin_{\theta_0} \sum_{k=1}^m L_\eta((\theta_k^*)_j - \theta_0)
    \label{eq:huber-id}
\end{equation}
We hope that the quadratic part of Huber's loss allows it to borrow strength effectively across datasets. Unfortunately, this leads to a loss of interpretability. Consider a problem in which all but one local parameter values are identically zero. It is possible to show that the global parameter is non-zero for any $\eta>0$ (see Example \ref{ex:huber-loss-counter} in Appendix \ref{sec:counter-example}).

\subsection{Two interpretable and statistically efficient global parameters}

The three examples presented in this subsection are hardly exhaustive. We list them here to illustrate the delicacy of the choice of identification restriction. In the rest of this section, we provide two examples of identification restrictions that satisfy our desiderata. 

Both the examples have hyperparameters, that one must choose carefully, depending on the properties of the local datasets to satisfy our desiderata. At first, this seems unusual because the global parameter depends on the hyperparameters. However, the global parameter is defined by the data scientist to facilitate meta-analysis, and the hyperparameters \emph{reflect the discretion of the data scientist}. Thus the dependence of the global parameter on the hyperparameters is natural. From another perspective, different values of hyperparameter lead to possible global parameters. However, not all the global parameters are interpretable and neither can they be estimated in a statistically efficient manner. We only consider  global parameters that have these desirable properties.

\paragraph{Re-descending loss:} In this paper, we define the global parameter as the minimum of a re-descending loss function \citep{huber1964Robust}. More concretely, we consider the quadratic re-descending loss function: $\Psi_\eta(x) \triangleq \min \{x^2,\eta^2\}$, and we define the global parameter as
\begin{equation}\textstyle
 	(\theta_0^*)_j \triangleq \argmin_{x\in\reals} \sum_{k= 1}^m \Psi_{\eta_j} ((\theta_{k}^*)_j-x)\,.
 	\label{eq:redescendindIDRestriction}
 \end{equation}
It is possible to derive similar results for other re-descending loss functions, but we focus on the quadratic re-descending loss function here. As we shall see, $(\theta_0^*)_j$ is not only interpretable but also statistically efficient.

First, we check that \eqref{eq:redescendindIDRestriction} leads to an interpretable global parameter. The quadratic re-descending loss function has the property that its derivative vanishes outside $[-\eta_j,\eta_j]$, so \eqref{eq:redescendindIDRestriction} ignores any local parameters that are outside this interval. Thus local parameters that are far from the bulk of the local parameter values are considered outliers and ignored by \eqref{eq:redescendindIDRestriction}, thereby ensuring that the global parameter is interpretable. For example, if there is a majority value among $(\theta_1^*)_j,\dots,(\theta_m^*)_j$, then for a suitable choice of $\eta_j$, \eqref{eq:redescendindIDRestriction} ignores all local parameters that are different from the majority value. 

Second, we argue that it is possible to estimate $(\theta_0^*)_j$ at a fast rate. Consider $(\theta_0^*)_j$ as an $M$-estimator in which the effective sample size is the number of local parameters in bulk (not considered outliers). This suggests that it is possible to estimate $(\theta_0^*)_j$ at a rate that depends on the number of local parameters in the bulk. As we shall see, the simple approach of replacing the local parameters with their estimates in \eqref{eq:redescendindIDRestriction} leads to an estimator that achieves this goal. We refer to Lemma \ref{lemma:integrated-hp-bound} for a formal statement of a result to this effect.

We note that the choice of $ \eta_j$ plays a crucial role in the interpretability and statistical efficiency of the global parameter $(\theta_0^*)_j$. If $ \eta_j $ is very large, then none of the local parameter values will be identified as outliers and \eqref{eq:redescendindIDRestriction} is equivalent to the mean identification restriction. This leads to a global parameter that is sensitive to outlier local parameters. On the other hand, if $ \eta_j $ is small, then \eqref{eq:redescendindIDRestriction} ignores most local parameter values because it considers them as outliers. This reduces the statistical efficiency of borrowing strength across datasets because most datasets are ignored. Thus $\eta_j$ must be chosen in a way that balances interpretability and statistical efficiency.

We note that the quadratic re-descending loss has a close connection with the $\eps$-contaminated model \eqref{eq:gen-contamination}, where the inliers parameters are normally distributed:
\begin{equation}
    (\theta_k^*)_j \sim (1-\eps) \cN\big((\theta_0^*)_j, \sigma^2\big) + \eps G_j\,.
\end{equation} In the absence of contamination ($\eps = 0$), it is not hard to check that minimizing the squared loss function leads to the maximum likelihood estimator for the global parameter $\theta_0^*$. In the presence of outliers, we can avoid the corrupting effects of the outliers by using a robust version of the squared loss function. There are many choices of such robust loss; one such choice is a re-descending loss function \citep{hampel2005Robust}. 

\paragraph{Quadratic + $\ell_1$ loss} As an alternative to the re-descending loss, we could also consider minimizing a convex combination of quadratic and $\ell_1$ loss functions:
\begin{equation}\textstyle
\theta_0^* \triangleq \arg\min_{\theta\in\reals}\sum_{k=1}^m \frac{1}{1+\lambda}\big\{\lambda\|\theta_k^* - \theta\|_1 + \frac12\|\theta_k^* - \theta\|_2^2\big\}\,.
\label{eq:sq-abs}
\end{equation} 
This identification restriction combines the mean and the median identification restrictions, but in a different way than Huber's loss function. It is not hard to check that this identification restriction satisfies our desiderata. The quadratic part of \eqref{eq:sq-abs} allows us to borrow strength across datasets, while it is possible to pick $\lambda$ in a way such that $\theta_0^*$ is interpretable. For example, if all but one of the local parameters are zero, it is possible to pick $\lambda$ large enough so that the global parameter is zero, in contrast to the Huber loss function. That said, we focus on the re-descending loss here because it has better empirical performance (see Section \ref{sec:computationalResults}).

\section{Communication-efficient data enriched regression}
\label{sec:estimator}
In this section, we suggest a privacy-preserving communication-efficient estimator for the global parameter $ \theta^*_0 $  which borrows strength over different datasets. The privacy concern limits us to communicate with the datasets only through some summary statistics. So, the high-level idea to estimate $ \theta_0^* $ is to start with some estimator of the local parameters computed from the datasets. Then the global parameter is estimated only using local estimates, without any further communication among datasets.

We describe the debiased lasso estimator for local parameters in the set-up of  $\ell_1$ regularized M-estimators. The case of the linear regression model can be considered as a special case.
 Let $\rho(y,a)$ be a loss function, which is convex in $a$, and $\drho$, $\ddrho$ be its derivatives with respect to $a$. That is
\[
\drho(y,a) = \frac{d}{da}\rho(y,a),\quad\ddrho(y,a) = \frac{d^2}{da^2}\rho(y,a).
\] 
Define $\ell_k(\theta_k) = \frac1{n_k}\sum_{i=1}^{n_k}\rho(\by_{ki},\bx_{ki}^T\theta_k)$, where the sum is only over the pairs on dataset $k$. The lasso estimator for local parameter $ \theta_{k}^* $ is given by

\begin{equation}
	\ttheta_{k} \coloneqq\argmin_{\theta\in\reals^d}\ell_k(\theta) + \lambda_k\|\theta\|_1.
	\label{eq:lasso-m-estimator}
\end{equation}
Since averaging only reduces variance, not bias, we gain (almost) nothing by averaging the biased lasso estimators. That is, the MSE of the naive average estimator is of the same order as that of the local estimators. The key to overcoming the bias of the averaged lasso estimator is to ``debias'' the lasso estimators before averaging.

The \emph{debiased lasso estimator} as in  \citet{vandegeer2014asymptotically} is 
\begin{equation}
\ttheta_{k} ^d = \ttheta_{k} - \Thetah_k\nabla\ell_k(\ttheta_k),
\label{eq:debiased-lasso-m-estimator}
\end{equation}
where $\Thetah_k$ is an approximate inverse of $\nabla^2\ell_k(\ttheta_k)$. The choice of  $ \Thetah_k $ in the correction term crucial to the performance of the debiased estimator. In particuar, $\Thetah_k$ must satisfy
\[
\|I - \Thetah_k\nabla^2\ell_k(\ttheta_k)\|_\infty \lesssim \left(\frac{\log d}{n_k}\right)^{\frac12}.
\]

One possible approach to forming $\Thetah_k$, as in \citet{vandegeer2014asymptotically},  is by nodewise regression on the weighted design matrix $\bX_{\widehat\theta_k} := W_{\widehat\theta_k}\bX_k$, where, $\bX_k $ is the $n_k \times d$ design matrix for $k$-th dataset, defined as $ \bX_k = \BMAT \bx_{k1} & \bx_{k2} & \dots & \bx_{kn_k} \EMAT^\top ,$ and   $W_{\widehat\theta_k}$ is $n_k \times n_k$ diagonal matrix, whose diagonal entries are
\[
\bigl(W_{\widehat\theta_k}\bigr)_{i,i} := \ddrho(y_{ki},x_{ki}^T\widehat\theta_{k})^{\frac12}.
\]
That is, for $j \in [p]$ that machine $k$ is debiasing, the machine solves
\begin{equation}
\hgamma^{(k)}_j := \argmin_{\gamma\in\reals^{p-1}}\frac{1}{2n_k}\|\bX_{\widehat\theta_k,j} - \bX_{\widehat\theta_k,-j}\gamma\|_2^2 + \lambda_j\|\gamma\|_1,\,j\in[p],
	\label{eq:gamma}
\end{equation}
and forms
\begin{equation}
\left(\Thetah_k\right)_{j,\cdot} = \frac{1}{\htau_{kj}^2}\BMAT-\hgamma^{(k)}_{j,1} & \dots & -\hgamma^{(k)}_{j,j-1} & 1 & -\hgamma^{(k)}_{j,j+1} & \dots &-\hgamma_{j,p}^{(k)}\EMAT,
\label{eq:theta-hat-glm}
\end{equation}
where
\[
\htau_{kj} = \left(\frac1{n_k}\|\bX_{\widehat\theta_k,j} - \bX_{\widehat\theta_k,-j}\hgamma_j\|_2^2 + \lambda_j\|\hgamma_j\|_1\right)^\frac12.
\]

After calculating the \textit{debiased lasso estimators} $ \tthetadk$ in local datasets, the integrated estimator is obtained using similar optimization problem as in \eqref{eq:redescendindIDRestriction}.  $ j $-th co-ordinate of the \emph{integrated debiased lasso} estimator  is obtained as
\begin{equation}
	\left(\ttheta_0\right)_j = \argmin_{x} \sum_{k=1}^{m}  \Psi_{\eta_j}\left(\left(\ttheta^d_k\right)_j-x\right),
	\label{eq:averaged-debiased-lasso}
\end{equation}
where, $ \eta_j $'s used in the above equation are the same ones that are used to identify the global parameter. 
The integrated estimator $ \tilde\theta_0 $  calculated in \eqref{eq:averaged-debiased-lasso} has a serious drawback. Loosely speaking, the use of quadratic re-descending loss to define  $ \ttheta_0 $ gives us the co-ordinate wise mean of debiased lasso estimates, after dropping the outlier values.  Since the debiased estimates are no longer sparse, $ \ttheta_0 $ is also dense. This detracts from the interpretability of the coefficients and makes the estimation error large in the  $\ell_2$ and $\ell_1$ norms. To remedy both problems, we threshold $ \ttheta_0. $ Below we describe the hard-threshold and soft-threshold on $ \ttheta_0 $:
\begin{align}
\label{def:st-ht}
HT_t\left(\ttheta_0\right) &\gets \left(\ttheta_0\right)_j \cdot \ones_{\left\{\abs{\left(\ttheta_0\right)_j} \ge t\right\}}, \text{or}\\\notag
ST_t\left(\ttheta_0\right) &\gets \sign\left( \left(\ttheta_0\right)_j\right)\cdot  \max\left\{\left| \left(\ttheta_0\right)_j\right|-t,\,0\right\}.
\end{align}

The final estimator  $\widehat\theta_0$ is obtained by hard-thresholding or soft-thresholding  $\ttheta_0$ at $t\sim \sqrt{\frac{\log d }{N_{\text{min}}}},$ where, $ N_{\min} = m \min_k n_k, $ \ie, \ $ \widehat\theta_0 = HT_t\left(\ttheta_0\right)  \text{ or }  ST_t\left(\ttheta_0\right), $ and the final estimators for $\deltavec_k$'s are obtained by hard-thresholding or soft-thresholding $\ttheta_k^d-\ttheta_0 $ at a level $t'\sim \sqrt{\frac{\log d }{n_k}},$ \ie, \ $\widehat\delta_k \gets HT_{t'}\left( \tdelta_k \right) $ or $ ST_{t'}\left( \tdelta_k \right) .$ 

A step by step process to obtain the  global estimator $ \widehat\theta_0 $ for linear regression models is described in Algorithm \ref{Alg:CommEffEst}. 
The Algorithm requires suitable choices of the parameters $\{\eta_j\}$, $t$ and $\{t_k\}$. In our numerical studies we pick these choices via the cross-validation approach, described in Algorithm \ref{Alg:cv}.  
If appropriate choices of $\eta_j$'s are available from background knowledge then one can use those instead of picking them from cross-validation. In practice,  cross-validating over $(\{\eta_j\}, t, \{t_k\})\big)$ might be computationally challenging since there are $d + m +1$ many unknown parameters for $m$ local datasets and   $d$ as the covariate dimension; and specifically $d$ can be quite large. In our experiments we further simplify the choice by assuming $\eta_j$'s have equal value $\eta$ and $t_k = t \sqrt{N_{\min}/n_k}$. This simplifies the cross-validation parameters  $(\{\eta_j\}, t, \{t_k\})\big)$ to just two parameters $(\eta, t)$.

\begin{algorithm}

	\DontPrintSemicolon
	
	\textbf{Input:} $\left\{\cD_k = (\{\bx_{ki},\by_{ki}\})_{i=1}^{n_k} :k\in[m]\right\},\  \{ \eta_j \}_{j \in[d]}$.\\
	\textbf{Output:} $\left(\widehat{\theta_0},\widehat\delta_1,\ldots,\widehat\delta_m\right)$.\\
	\For{$ k = 1,2,\cdots,m, $}{
	$ \ttheta_k \gets \argmin_{\theta\in\reals^d}\frac{1}{2n_k}\sum_{i=1}^{n_k} \rho (\by_{ki},\bx_{ki}^T\theta) + \lambda_k\|\theta\|_1 $
	
	$ 	\ttheta^d_k \gets \ttheta_{k} - \Thetah_k\nabla\ell_k(\ttheta_k), $ where, $ \Thetah_k $ are defined as in \ref{eq:theta-hat-glm},
}

	$\left(\tilde{\theta_0}\right)_j\gets \argmin_{x} \sum_{k=1}^{m} \Psi_{\eta_j}\left(\left(\ttheta^d_k\right)_j-x\right), $ 
	
	$\widehat{\theta}_0\gets HT_t\left(\ttheta_0\right) \text{ or } ST_t\left(\ttheta_0\right)$ for $t\sim \sqrt{\frac{\log d }{N_{\min}}},$ where, $ N_{\min} = m\min_k n_k, $

	\For{$k\in [m],$}
	{
		$\tdelta_k \gets \ttheta^d_k - \ttheta_0,$
		
		$ \widehat\delta_k \gets HT_{t_k}\left( \tdelta_k \right) $ or $ ST_{t_k}\left( \tdelta_k \right) $ for $ t_k\sim \sqrt{\frac{\log d }{n_k}}. $
		
	}

	\caption{MrLasso($\{\eta_j\}, t, \{t_k\}$)}
		\label{Alg:CommEffEst}
\end{algorithm}

\begin{algorithm}

	\DontPrintSemicolon
	
	\textbf{Input:} $\cD = \left\{\cD_k = (\{\bx_{ki},\by_{ki}\})_{i=1}^{n_k} :k\in[m]\right\},\ C$, $G$ the grid of ($\{\eta_j\}$, $t$, $\{t_k\}$).\\
	\textbf{Output:} $\left(\widehat{\theta_0},\widehat\delta_1,\ldots,\widehat\delta_m\right)$.\\
	
	$\{\cD^{(c)}\}_{c = 1}^C \gets C \text{ equal partition of } \cD$ 
	
	\For{$(\{\eta_j\}, t, \{t_k\}) \in G$}{
	
	\For{$c = 1, \dots, C$}
	{
	$\cD^{(-c)}\gets \cD \backslash \cD^{(c)}$
	
	$\left(\widehat{\theta_0}^{(c)},\widehat\delta_1^{(c)},\ldots,\widehat\delta_m^{(c)}\right) \gets \text{MrLasso}\big(\cD^{(-c)}, (\{\eta_j\}, t, \{t_k\})\big) $

    $\cE_c \gets \frac 1{|\cD^{(c)}|}\sum_{k = 1}^m \sum_{(x, y)\in \cD^{(c)}_k}\rho \{y, x^\top (\widehat{\theta_0}^{(c)} + \widehat\delta_k^{(c)})\}$

	}
$\cE(\{\eta_j\}, t, \{t_k\}) \gets \frac 1C \sum_{c = 1}^C \cE_c$	
	
}
	
	$(\{\hat \eta_j\}, \hat t, \{\hat t_k\}) \gets \argmin ~ \cE(\{\eta_j\}, t, \{t_k\})$

	$\left(\widehat{\theta_0},\widehat\delta_1,\ldots,\widehat\delta_m\right) \gets \text{MrLasso}\big(\cD, (\{\hat \eta_j\}, \hat t, \{\hat t_k\})\big) $

	\caption{Cross-validation}
		\label{Alg:cv}
\end{algorithm}

\section{Theoretical properties of the communication-efficient estimator}

\label{sec:theoreticalProperties}
To present the theoretical justification of consistency of the estimators we shall focus on $\ell_1,\ell_2,$ and $\ell_{\infty}$ consistency of the estimators. Before getting into the assumptions and results, we first define some notations. 	We use $\|\cdot\|_1$, $\|\cdot\|_2$, and $\|\cdot\|_\infty$ to denote usual $\ell_1$-norm, $\ell_2$-norm,
and $\ell_\infty$-norm respectively.

The performance of global estimator depends on the debiased lasso estimators from the local datasets. Hence, it is important to have a reasonable performance for local estimators. We study the $ \ell_\infty $ error rate of the debiased lasso estimator $ \ttheta^d $ for the parameter $ \theta, $ calculated form  the dataset $ \{\bx_i,\by_i\}_{i=1}^n. $ For a random vector $ (\by,\bx) $, whose distribution is parametrized by $ \theta, $ we assume that  $ \Ex [\rho(\by,\bx^T\theta)] $ is uniquely minimized at $ \theta. $ As before, the debiased lasso estimator is defined as $ \ttheta^d \triangleq \ttheta - \Thetah \nabla \ell (\ttheta)  ,$ where, $ \ttheta = \argmin_{\theta\in\reals^d} \ell(\theta ) + \lambda\|\theta\|_1, $ for $ \ell(\theta ) = \frac{1}{n} \sum_{ i= 1}^n \rho(\by_i,\bx_i^T\theta ), $ and $ \Thetah  $ is calculated according to \eqref{eq:theta-hat-glm}. We assume that the actual parameter value  $ \theta^* $ is $ s_0 $ sparse, \ie,  $ \theta^* $ has $ s_0 $ many non-zero entries. Letting $ \Theta \triangleq (\Ex \nabla^2 \ell(\thetas))^{-1} $ we assume the $j  $-th row of $ \Theta $, which is denoted as $ \Theta_{j,\cdot} $ is $ s_j $-sparse. Denote $s^* = \max_j s_j$.   We make the following assumptions on distributions of local datasets. These assumptions establish high probability bounds for the local debiased lasso estimates.

\begin{assumption}
\label{assump:debiased-lasso-assumption}
    Under the M-estimation setup, we assume the following. 
    \begin{enumerate}
        \item The pairs  $\{(x_i, y_i)\}_{i = 1}^n$ are \iid\  $P$ and for some $K\ge 1,$  $\|\bX\|_{\infty} = \max_{i, j} |X_{i,j}| = \cO(K)$ and the projection of $ \bX_{\thetas, j} $ on the row space of $ \bX_{\thetas, -j} $ in the $ \Sigma_\thetas \triangleq \Ex [\nabla^2 \ell(\thetas)] $ inner product  is bounded: $  \|\bX_{\thetas, -j}\gamma_{j}\|_\infty = \cO(K) $ for any $ j, $ where \[ \gamma_{j} \triangleq  \argmin_{\gamma\in\reals^{p-1}} \Ex\left[ \|\bX_{\thetas, j}- \bX_{\thetas, -j}\gamma\|_2^2 \right]. \] We define $ \tau_j^2 $ as $ \tau_j^2 \triangleq \Ex \big[\frac1n \| \bX_{\thetas, j}-\bX_{\thetas, -j}\gamma_j\|_2^2\big]. $
        
        \item It holds that $K^2 s_j \sqrt{\log d/n} = o(1).$
         \item The smallest eigenvalue of $ \Sigma_{\thetas} $ is bounded away from zero, and moreover, $\|\Sigma_{\thetas}\|_{\infty } = \cO(1).$
        \item There exists a  $\delta>0$ such that  for any  $\theta$ that satisfies  $ \|\theta-\thetas\|_1\le \delta $ it holds that  $ \ddrho (\by_i,\bx_i^T\theta) $ stay away form zero and that $\|\ddrho (\by_i,\bx_i^T\theta)\|_{\infty} = \cO(1)$. Furthermore, for any such $\theta$ and all $x, y$ \[ |\ddrho(y,x^T\theta)-\ddrho(y,x^T\thetas)|\le |x^T(\theta-\thetas)|. \]
        	\item With probability at least $ 1-o(d^{-1}) $ we have $ \frac1n \|\bX(\ttheta-\thetas)\|_2^2 \lesssim s_0 \frac{\log d}{n}  $ and $ \|\ttheta - \thetas\|_1 \lesssim s_0\sqrt{\frac{\log d}{n}} $.
		\item The derivative $  \drho(y,a),\ \ddrho(y,a) $ exists for all $ y,a, $ and for some $ \delta $-neighborhood, $ \ddrho(y,a) $ is locally Lipschitz: for some $L> 0$ \[ \max_{a_0\in\{x_i^T\thetas\}} \sup_{|a-a_0|\vee |\widehat a - a_0|\le \delta} \sup_{y\in\cY} \frac{|\ddrho(y,a)-\ddrho(y,\widehat a)|}{|a-\widehat a|} \le L. \] Moreover,
		\[ \max_{a_0\in\{x_i^T\thetas\}} \sup_{y\in\cY} |\drho(y,a_0)| = \cO(1), \ \max_{a_0\in\{x_i^T\thetas\}} \sup_{|a-a_0|\le \delta} \sup_{y\in\cY} |\ddrho(y,a_0)| = \cO(1). \]
		\item The diagonal entries of $ \Theta \Ex \left[ \nabla\ell(\thetas) \nabla\ell(\thetas)^T \right]\Theta $ are bounded by $\sigma^2$.
    \end{enumerate}
\end{assumption}
The list of conditions in Assumption \ref{assump:debiased-lasso-assumption} are standard ones to establish $\ell_\infty$ convergence for the debiased lasso estimator and are adapted from \cite[Assumptions (D1)-(D5)]{vandegeer2014asymptotically} and \cite[Assumptions (B1)-(B7)]{lee2017Communicationefficient}.   
Condition 5 that directly assumes convergence of estimation and prediction error for lasso is implied by the other assumptions.  We refer to \cite[Chapter 6, Section 6.7]{buhlmann2011statistics} for the details, where the necessary compatibility condition is inherited from the condition 3. 
Here we state this to simplify the exposition. We refer to the corresponding literature (specifically \cite{lee2017Communicationefficient}) for further discussions about the conditions.

In the literature on debiased lasso for high-dimensional  M-estimation, it is usually assumed that $K = \cO(1)$ (see \cite[Section 3.3.1]{vandegeer2014asymptotically}, \cite[Section 5]{lee2017Communicationefficient}). For sub-gaussian covariates one can use a high-probability bound of the order $K = \cO_\Pr(\sqrt{\log(dn)})$. 
The high probability $\ell_\infty$ bound for local debiased lasso estimate follows.

\begin{lemma}
	 Under Assumption \ref{assump:debiased-lasso-assumption}  the debiased lasso estimator $ \ttheta^d $ satisfies the following high probability bound.
	 \begin{equation}
	 	\text{for some }c>0,\  \Pr\left(\|\ttheta^d - \theta^*\|_\infty > \sigma \sqrt{\frac{12\log d}{n}}  + c\frac{ \max\{Ks^*, K^2 s_0\}\log d}{n}\right) \le o(d^{-1}).
	 	\label{eq:concentration-local-estimator}
	 \end{equation}
	 \label{lemma:local-estimate-tail-bound}
\end{lemma}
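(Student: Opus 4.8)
The plan is to start from the linearization in assumption (A3) and control the three resulting pieces separately. Using (A3), write
\[
\ttheta^d - \thetas \;=\; -\,\Theta\nabla\ell(\thetas)\; -\; (\Thetah - \Theta)\nabla\ell(\thetas)\; +\; \Delta,
\]
so that the rate will be governed by the ``oracle'' term $-\Theta\nabla\ell(\thetas)$, while the cross term and the remainder $\Delta$ should both be of the negligible order $s^*\log d/n$. Note first that $\thetas$ uniquely minimizes the population risk $\Ex[\rho(\by,\bx^T\theta)]$, so $\Ex[\nabla\ell(\thetas)]=0$ and each coordinate $[\Theta\nabla\ell(\thetas)]_j = \frac1n\sum_{i=1}^n \Theta_{j,\cdot}^T\bx_i\,\drho(\by_i,\bx_i^T\thetas)$ is an average of i.i.d.\ mean-zero summands. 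By (A2) and (A6), each summand is bounded in absolute value by $M\|\Theta_{j,\cdot}\|_1\le MK_3\sqrt{s_j}$, and by (A5) the variance of $[\Theta\nabla\ell(\thetas)]_j$ is at most $\sigma^2/n$.

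The main step is Bernstein's inequality applied coordinatewise to $[\Theta\nabla\ell(\thetas)]_j$, with variance budget $v=\sigma^2/n$ and per-term bound $b = MK_3\sqrt{s^*}/n$. Taking $t = \sigma\sqrt{12\log d/n}$ makes the sub-Gaussian part of the exponent equal to $6\log d$, while the sub-exponential correction $bt/v$ is $O(\sqrt{s^*\log d/n})=o(1)$ under the standard scaling $s^*\log d = o(n)$; hence $\Pr\big(|[\Theta\nabla\ell(\thetas)]_j| > \sigma\sqrt{12\log d/n}\big)\le d^{-5+o(1)}$, and a union bound over the $d$ coordinates gives $\|\Theta\nabla\ell(\thetas)\|_\infty \le \sigma\sqrt{12\log d/n}$ off an event of probability $o(d^{-1})$. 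For the cross term, Hölder's inequality gives $|[(\Thetah-\Theta)\nabla\ell(\thetas)]_j| \le \|\Thetah_{j,\cdot}-\Theta_{j,\cdot}\|_1\,\|\nabla\ell(\thetas)\|_\infty$; assumption (A4) controls the first factor by $K_2 s_j\sqrt{\log d/n}$ off an event of probability $o(d^{-1})$, and a Hoeffding bound (each coordinate of $\nabla\ell(\thetas)$ being a mean-zero average of summands bounded by $M$) plus a union bound gives $\|\nabla\ell(\thetas)\|_\infty \lesssim M\sqrt{\log d/n}$ off an event of probability $o(d^{-1})$, so the cross term is $O(s^*\log d/n)$. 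The remainder satisfies $\|\Delta\|_\infty = O(s^*\log d/n)$ off an event of probability $o(d^{-1})$ directly by (A3).

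Finally, I would intersect the three high-probability events (complement still $o(d^{-1})$) and apply the triangle inequality; the two $O(s^*\log d/n)$ contributions merge into a single $K s^*\log d/n$ term, with $K$ absorbing $K_1$, $K_2$, $K_3$ and $M$, which yields \eqref{eq:concentration-local-estimator}. The one delicate point is the Bernstein step: one must verify that the sub-exponential correction coming from the boundedness constant $MK_3\sqrt{s^*}$ does not inflate the leading constant in front of $\sqrt{\log d/n}$ (this is precisely where $s^*\log d = o(n)$ is needed), and that the budget $s^*\log d/n$ is large enough to swallow both the cross term and $\Delta$ so that the bound takes the clean two-term form. The rest is a routine union bound.
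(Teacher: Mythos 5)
Your proposal follows essentially the same route as the paper's proof: the identical three-term decomposition via (A3), coordinatewise Bernstein on the oracle term $\Theta\nabla\ell(\thetas)$ with the same choice $t=\sigma\sqrt{12\log d/n}$ and the same verification that the sub-exponential regime is not entered, H\"older plus (A4) and a concentration bound on $\|\nabla\ell(\thetas)\|_\infty$ for the cross term, and a final union bound merging the two $O(s^*\log d/n)$ pieces. The only cosmetic difference is that you invoke Hoeffding where the paper reuses Bernstein for $\|\nabla\ell(\thetas)\|_\infty$; both work.
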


\begin{remark}
	Under the Assumption \ref{assump:debiased-lasso-assumption}  and  $K^2s_0\sqrt{\frac{\log d}{n}} = o(1)$ we have a high probability bound $ \|\ttheta^d-\thetas\|_\infty \lesssim_\Pr \sqrt{\log d/n}. $ 
	\label{remark:hp-bound-local} 
\end{remark}

Before studying the theoretical properties of the integrated estimator defined in \eqref{eq:averaged-debiased-lasso} one might notice that the identification restriction \eqref{eq:redescendindIDRestriction} may not lead to a unique global parameter, in general. The whole business of studying the theoretical properties of the global estimator is not meaningful if one cannot uniquely identify the global parameter. Hence, it is necessary to make some assumptions about the parameter values, with the goal that they will suffice unique identification of the global parameter. Here, we present the assumptions on the finite number of machines to have a uniquely identified parameter. 
For $ j\le d $ we assume
\begin{figure}
    \centering
    \includegraphics[width=\linewidth]{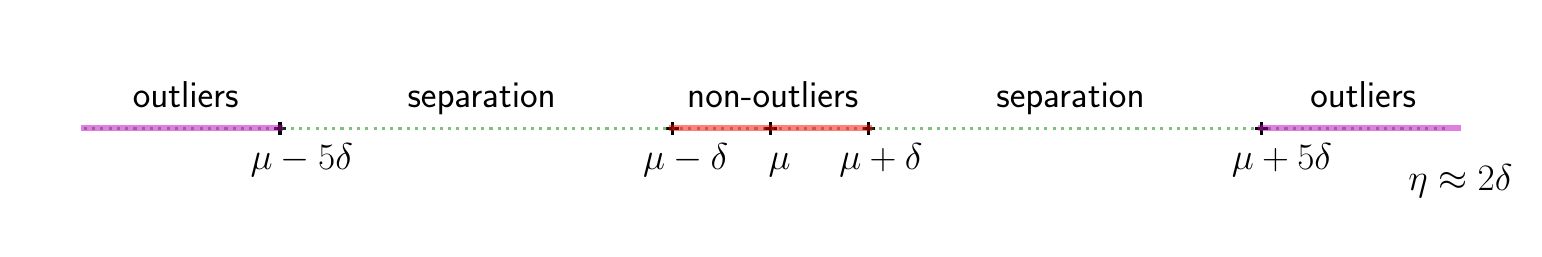}
    \caption{Bulk structure for the local parameters (Assumption \ref{assump:cluster-parameter}). This structure ensures the global parameter \eqref{eq:redescendindIDRestriction} is unique.}
    \label{fig:assump}
\end{figure}

\begin{assumption}
	\begin{enumerate}[label=(B\arabic*)]
		\item 
		Let $ I_j $ be the set of indices for $ (\thetas_{k})_j $'s which are considered as inliers. We assume $|I_j|/m \ge 4/7$.
		\item  Let $ \mu_j = \frac1{|I_j|}\sum_{ k \in  I_j}(\thetas_{k})_j . $ Let $  \delta  $ be the smallest positive real number such that $ (\thetas_k)_j \in [\mu_j-\delta,\mu_j+\delta] $ for all $ k\in I_j. $
		 We assume that none of the  $ (\thetas_k)_j $'s are in the intervals $ [\mu_j-5\delta,\mu_j-\delta) $ or $ (\mu_j+\delta,\mu_j+5\delta]. $ 
		\item Let $ \delta_2 = \min_{k_1\in I_j, k_2\notin I_j} |(\thetas_{k_1})_j - (\thetas_{k_2})_j| $ is the minimum separation between inliers and outliers.  Clearly, $ 4\delta<\delta_2. $ We choose $\eta_j$ such that $ 2\delta < \eta_j <\delta_2/2. $
	
	\end{enumerate}
\label{assump:cluster-parameter}
\end{assumption}

The Assumption \ref{assump:cluster-parameter} sets a co-ordinate wise clustering assumption for the local parameters. A visualization summary of the cluster Assumption \ref{assump:cluster-parameter} is seen in Figure \ref{fig:assump}. The conditions ensure enough separation between the bulk and the outliers  such that the re-descending loss \eqref{eq:redescendindIDRestriction} uniquely identifies the global parameter at mean of the bulk values, as suggested by Result \ref{result:global-uniqueness}. 
\begin{result}
	Under the Assumption \ref{assump:cluster-parameter} the objective function $ x \mapsto\sum_{ k= 1}^m \Psi_{\eta_j}\left((\thetas_{k})_j - x \right) $ has a unique minimizer $ (\thetas_0)_j = \mu_j. $
	\label{result:global-uniqueness}
\end{result}
Result \ref{result:global-uniqueness} implies that under the Assumption \ref{assump:cluster-parameter} the global parameter $ \thetas_0 $ is uniquely identified at the mean of the inlier parameter values. This has
the following nice implications: (1) the global parameter is not affected by outlier values; (2) the estimator of the global parameter can effectively borrow strength across datasets with inlier parameter values. The Lemma \ref{lemma:integrated-hp-bound}, where we study the coordinate-wise convergence rates for $ \ttheta_{0}$, is a step toward establishing effective borrowing strength for the estimator of global parameter.

\begin{lemma}
     Let the following hold:
\begin{enumerate}[label=(\roman*)]
	\item For any $ j, $ $ \left\{(\thetas_{k})_j\right\}_{k=1}^n $ satisfy the Assumption \ref{assump:cluster-parameter}.
	\item The datasets $ \left\{\cD_k\right\}_{k=1}^m $ satisfy  the Assumption \ref{assump:debiased-lasso-assumption} uniformly over $ k $.  
	\item Let $s_{k, 0}$ be the sparsity for $\theta_k^*$ and $s_{k}^*$ being the maximum sparsity for the rows of $\Theta_k \triangleq \{\Ex[\nabla^2 \ell_k(\theta^*_k)]\}^{-1}$. Define $s_{0, \max} = \max_k s_{k, 0}$,  $s_{\max}^* = \max_k s_k^*$ and $s_{\operatorname{eff}} = \max\{K s_{\max}^*, K^2 s_{0,\max}\}$.  We assume $m = o\left( \frac{n^2_{\min}}{s_{\operatorname{eff}}^2n_{\max}\log d }\right),$ where, $ n_{\max} = \max_{k \in [m]} n_k, $ and $n_{\min} = \min_{k \in [m]} n_k.$ 
\end{enumerate}

Then for   sufficiently large $ n_k $ we have the following bound for the co-ordinates of $ \ttheta_0 $ and $ \ell_\infty $ bound for $ \tdelta_k $:
\begin{equation}
	 \left| (\ttheta_0)_j - (\theta_0^*)_j \right| \le  4\sigma\sqrt{\frac{\log d}{|I_j|^2}\sum_{ k \in  I_j}\frac1{n_k}}, \text{ for all }j,\text{ and } \| \tdelta_k - \deltas_k \|_\infty  \le  4\sigma\sqrt{\frac{\log d}{n_k}}, \text{ for all }k,
	\label{eq:bound-co-ord-glob}
\end{equation} with probability at least $ 1-o(1). $
\label{lemma:integrated-hp-bound}
\end{lemma}

Condition $(i)$ in lemma \ref{lemma:integrated-hp-bound} is needed to identify the global parameter uniquely, as suggested by Result  \ref{result:global-uniqueness}. Condition $(ii)$ provides a high probability $\ell_\infty$ bound for the local debiased lasso estimates (see lemma \ref{lemma:local-estimate-tail-bound}). Finally, the condition $(iii)$ ensures the bias in local debiased estimates is of smaller order than the variance in integrated estimate. Since the bias term may not improve under averaging, violating this condition would result in higher-order for the bias in local estimates, and the estimation error for the global parameter stops improving (with a higher number of datasets).

In the following remark we introduce a weighted version of our data integration method \eqref{eq:averaged-debiased-lasso}.
\begin{remark}[A weighted integration]

One may also consider \emph{a weighted version} of the data integration \begin{equation}
	\left(\ttheta_0^{w}\right)_j = \argmin_{x} \sum_{k=1}^{m} w_k  \Psi_{\eta_j}\left(\left(\ttheta^d_k\right)_j-x\right),
	\label{eq:averaged-debiased-lasso-weighted}
\end{equation} where $w_k\ge 0$ is the weight corresponding to $k$-th dataset with $\sum_{k = 1}^m w_k = 1$. In that case the uniqueness of the corresponding global parameter \begin{equation}
	\left(\theta_0^{w, *}\right)_j = \argmin_{x} \sum_{k=1}^{m} w_k  \Psi_{\eta_j}\left(\left(\theta^*_k\right)_j-x\right),
	\label{eq:weighted-global-mrlasso}
\end{equation} at the value $\left(\theta_0^{w, *}\right)_j = \sum_{k \in I_j} w_k (\theta^*_k)_j / (\sum_{k\in I_j} w_k)$ is established (in Result \ref{result:global-uniqueness-nk}) by considering a weighted version of Assumption \ref{assump:cluster-parameter}  (see Assumption \ref{assump:cluster-parameter-nk}) where  we replace $|I_j|/m$ by $ \sum_{k\in I_j} w_k$ in condition (B1), and let $\mu_j = \sum_{k \in I_j} w_k (\theta^*_k)_j / (\sum_{k\in I_j} w_k)$ in condition (B2). Furthermore, under conditions (ii) and (iii) in Lemma \ref{lemma:integrated-hp-bound}, one can  establish (see Theorem \ref{th:integrated-hp-bound-nk}) the following high probability 
bound for $\left(\ttheta_0^{w}\right)_j$:
\[
\bbP \left[ \left| (\ttheta_0^w)_j - (\theta_0^{w, *})_j \right| \le 4 \sqrt{\frac{\log d}{\big(\sum_{k \in I_j }w_k\big)^2}\sum_{ k \in  I_j}\frac{w_k^2}{n_k}}, \text{ for all }j\right]  = 1 - o(1)\,.
\] Further details about the setup are provided in Appendix \ref{sec:weighted-mrlasso}. 

An interesting special case arises when $w_k = n_k/N$, and the above rate   simplifies to $\cO\big( \{\sum_{k \in I_j}  n_k\}^{-1}\big)$ and  yields the fastest rate among all possible weights: 
\[
\frac 1{\sum_{k \in I_j}  n_k} = \min_{w} \Big[\frac{1}{\big(\sum_{k \in I_j }w_k\big)^2}\sum_{ k \in  I_j}\frac{w_k^2}{n_k}\Big]\,.
\] Note that the rate $\cO\big( \{\sum_{k \in I_j}  n_k\}^{-1}\big)$ is faster than the corresponding one in \eqref{eq:bound-co-ord-glob}, especially when the $n_k$'s are significantly different. The downside to the weighted formulation is that the corresponding structural assumptions on the local parameter values depends on the sample size, which can be perceived as strange.

\end{remark}

\begin{proof}
	Here we shall provide a brief outline of the proof of Lemma \ref{lemma:integrated-hp-bound}. The detailed proof of this lemma is provided in appendix  \ref{subsec:supp-results}. We start with the Remark \ref{remark:hp-bound-local}, which gives us a high probability error bound for the individual local lasso estimators. From condition $ (i) $ we see that the $\thetas_j$'s satisfies Assumption \ref{assump:cluster-parameter}. As long as $ \ttheta_k^d $ concentrates around $ \thetas_k $, $\ttheta_k^d$ also satisfies Assumption \ref{assump:cluster-parameter} (with different $ \delta_j $ and $\delta_{2,j}$'s but identical $ \eta_j $'s).
	Using Result \ref{result:global-uniqueness} we see that the integrated estimator has the form
	\[ (\ttheta_0)_j = \frac{\sum_{ k \in  I_j} (\ttheta_k^d)_j}{|I_j|}.  \] Finally, we use the decomposition on the individual lasso estimators $ \tilde\theta_k^d $ in Lemma \ref{lemma:high-lebel-decomposition-debiased-lasso} and  concentration bound for sums of independent random variables to get a high probability bound for the estimation error of $ \ttheta_{0}. $
\end{proof}
Before we study the theoretical properties of the thresholded estimators, we would like to make a few remarks about the rates of convergence for the coordinates of $ \ttheta_0. $ If there is reason to believe that none of the local parameters  are outliers and we consider $ \eta_j $'s large enough, then the global parameter is identified as  mean of  the local parameters, \ie\ $ \thetas_0 = \frac1m \sum_{ k= 1}^m \thetas_k. $ In that case, the rate of convergence for $ \ttheta_0 $ in Lemma \ref{lemma:integrated-hp-bound} simplifies to:
\[  \| \ttheta_{0}- \thetas_0  \|_\infty \lesssim_\Pr \sqrt{\frac{\log d}{m^2}\sum_{ k =1}^m\frac1{n_k}}, \] where, $ \lesssim_\Pr $ implies with probability converging to $ 1. $ This kind of rate is not surprising. Probably the simplest situation, under which this kind of convergence rate arises is ANOVA model. Let $ \left\{Y_{ki}\right\}_{i\in[n_k],k\in[m]} $ is a set of  independent random variables, such that  $ Y_{ki}\sim N(\thetas_0+\deltas_k,\sigma^2), $ where $ \thetas_{0}, \deltas_1,\ldots, \deltas_m $ are some real numbers. Under the assumption $ \sum_{ k =1}^m \deltas_k=0, $ the estimator $ \widehat\theta_0= \frac1m\sum_{ k =1}^m \bar Y_{k\cdot} $, where, $ \bar Y_{k\cdot} = \frac1{n_k}\sum_{ i= 1}^{n_k}Y_{ki}, $ follows distribution $ N\left(\thetas_0, \frac{1}{m^2}\sum_{ k =1}^m\frac1{n_k}\right). $ This gives us the following convergence rate for $ \widehat\theta_0 :$
\[ \Pr\left( |\widehat\theta_0-\thetas_0|\le  3\sqrt{\frac{1}{m^2}\sum_{ k =1}^m\frac1{n_k}} \right) \approx 1.  \]
We notice that $ \frac1{\sum_{ k \in  I_j}n_k} \le \frac{1}{|I_j|^2} \sum_{ k \in  I_j}\frac1{n_k},$ and the equality holds when $ n_k $'s are equal, which gives us the best possible rate. In this case the  rate  simplifies  to:
\[ \left|  (\ttheta_0)_j- (\thetas_0)_j \right| \lesssim_\Pr\sqrt{\frac{\log d}{|I_j|n}} \le \sqrt{\frac{\log d}{(1-\alpha_j)N}}.  
\]

Let $ n_{\text{min}} = \min_k n_k $ and $ \alpha_{\max} = \max_j \alpha_j .$ Notice that $ \frac1{\sum_{ k \in  I_j}n_k} \le \frac{1}{|I_j|^2} \sum_{ k \in  I_j}\frac1{n_k} \le \frac1{|I_j|n_{\min}} \le \frac1{(1-\alpha_j)mn_{\min}}.  $ Hence, we can get the following simple (possibly naive) $ \ell_{\infty} $ high probability  bound for $ \ttheta_{0}: $
\[ \|\ttheta_{0}-\thetas_0 \|_\infty \le 4\sigma \sqrt{\frac{\log d}{(1-\alpha_{\max})mn_{\min}}}.  \]
The estimators in Lemma \ref{lemma:integrated-hp-bound} are dense, they have higher $ \ell_1 $ and $ \ell_2 $ error. We threshold the estimators at a suitable level. The thresholding is usually done at a level of the $ \ell_{\infty} $ error rate of the estimator.
We recall the following result from \cite{lee2017Communicationefficient} which ensures probability convergence for such thresholding. 
\begin{lemma}[\cite{lee2017Communicationefficient}, Lemma 11]
\label{lemma:thresholding-lee}
     As long as $t > \|\bar \beta - \beta^*\|_\infty$ $\bar\beta^{ht}_t = \operatorname{HT}_t(\bar \beta)$ satisfies the following. 
     \begin{enumerate}
         \item $\|\bar \beta^{ht} - \beta^*\|_\infty < 2t,$
         \item $\|\bar \beta^{ht} - \beta^*\|_2 < 2\sqrt{2s}t,$
         \item $\|\bar \beta^{ht} - \beta^*\|_2 < 2\sqrt{2}st,$
     \end{enumerate} where, $s$ is the sparsity of $\beta^*.$ The analogous results holds for $\bar\beta^{st}_t = \operatorname{ST}_t(\bar \beta).$
\end{lemma}
A combination of the Lemmas \ref{lemma:integrated-hp-bound} and \ref{lemma:thresholding-lee} establishes the rate of convergence for the estimates of global and heterogeneous effect parameters. 

\begin{theorem}
Define $ N_{\min} \coloneqq mn_{\min}. $	Assume that the threshold for   $ \ttheta_0 $ is set at $ t_{0} =  4\sigma\sqrt{\frac{\log d}{(1-\alpha_{{\max}})N_{\min}}}$ and for $ \tdelta_k $'s are set at $ t_k = 4\sigma\sqrt{\log d/n_k},  $ respectively.  Then under the conditions in Lemma \ref{lemma:integrated-hp-bound} and for sufficiently large $ n_k $ we have the following:
	\begin{enumerate}
		\item $ \|\widehat\theta_0 - \thetas_0\|_\infty \lesssim_\Pr \sqrt{\frac{\log d}{N_{\min}}}, $ 
		\item $ \|\widehat\theta_0 - \thetas_0\|_1 \lesssim_\Pr s(\thetas_0)\sqrt{\frac{\log d}{N_{\min}}}, $
		\item $ \|\widehat\theta_0 - \thetas_0\|_2 \lesssim_\Pr \sqrt{\frac{s(\thetas_0)\log d}{N_{\min}}}, $  
		\item $ \|\widehat\delta_k-\deltas_k\|_\infty \lesssim_\Pr  \sqrt{\frac{\log d}{n_k}}, $
		\item $ \|\widehat\delta_k-\deltas_k\|_1 \lesssim_\Pr  s(\deltas_k)\sqrt{\frac{\log d}{n_k}}, $
		\item $ \|\widehat\delta_k-\deltas_k\|_2 \lesssim_\Pr  \sqrt{\frac{s(\deltas_k)\log d}{n_k}}, $
	\end{enumerate}
where, for a vector $ \theta\in\reals^d, $ $ s(\theta) $ denotes its sparsity level. 
\label{th:bounds-estimators}
\end{theorem}
It is not necessary to threshold all the co-ordinates of $ \ttheta_{0} $ at a same level. Thresholding the $ j $-th co-ordinate of $ \ttheta_0 $ at  $ t_{0j} = 4\sigma\sqrt{\frac{\log d}{|I_j|^2}\sum_{ k \in  I_j}\frac1{n_k}} $ may give us a faster rate of convergence for  $ \widehat\theta_0 $ in terms of $ \ell_1 $ and $ \ell_2 $ errors. If one wish to use cross-validation to determine an appropriate choices of thresholding, setting different thresholding level for different co-ordinates may be computationally hectic. For computational simplicity, we consider same thresholding level for all the co-ordinates.

One important consequence of Theorem  \ref{th:bounds-estimators} is variable selection under beta-min assumption. For a vector $ \theta\in\reals^d $ let us define its sparsity set as $ \cS(\theta) = \{j \in[d]: \theta_j \neq 0 \}. $ Under the assumption that 
\begin{equation}
    \min_{j\in \cS(\thetas_0)} |\thetas_j| \gg \sqrt{\frac{\log d}{N_{\min}}}  \text{ and, } \min_{j\in \cS(\thetas_k)} |(\thetas_k)_j| \gg \sqrt{\frac{\log d}{n_k}} 
    \label{eq:beta-min}
\end{equation}
  Theorem \ref{th:bounds-estimators} implies $ \Pr\left( \cS(\widehat{\theta}_0) =  \cS (\thetas_0), \cS(\widehat{\theta}_k) =  \cS(\thetas_k) \right) \to 1, $ \ie, \ with very high probability all the active variables will be selected by the proposed estimators.

\section{Computational results}
\label{sec:computationalResults}

In this section, we compare the performance of MrLasso to several other global parameters considered in Section \ref{sec:identifiability} on simulated data:
 the Mean  \eqref{eq:anovaIdRestriction},
    the {Median} \eqref{eq:median}, 
    the square and absolute error trade-off in \eqref{eq:sq-abs} and the Huber estimator in \eqref{eq:huber-id}.
 The datasets are generated from a linear model with $d=2000$ covariates. In our simulation, the covariates are generated from auto-regressive model ($x = (x_1, \dots, x_d)^\top $, $x_1 \sim N(0, 1)$, $x_j = \sqrt{1-\rho^2} x_{j-1} + \rho \eps_j$, and $\eps_j \sim N(0, 1)$) with the correlation $\rho = 0.9$, and the response in the $k$-th local dataset is generated as 
\[
y_k = \bx_k^\top \beta_k^* + \eps_k, ~ \eps_k \sim N(0, 0.25),
\]
where $\beta_k^*$ is the vector of regression coefficients for the $k$-th dataset. The first $s$ co-ordinates of $\beta_k^*$'s are independent $ N(2, (1.5)^2) $ random scalars. The next 20 coordinates of $\beta_k^*$ are drawn from the mixture distribution $(1-\frac1m) \cdot \delta_0 + \frac1m\cdot U([15,16])$, where $\delta_0$ is the degenerate  distribution at zero. This endows the local coordinates with the cluster structure of Assumption \ref{assump:cluster-parameter}. This structure in the local coordinates highlights the difference between the identification restrictions in MrLasso and the other estimators.

\begin{figure}
	\centering
		\includegraphics[width=\textwidth]{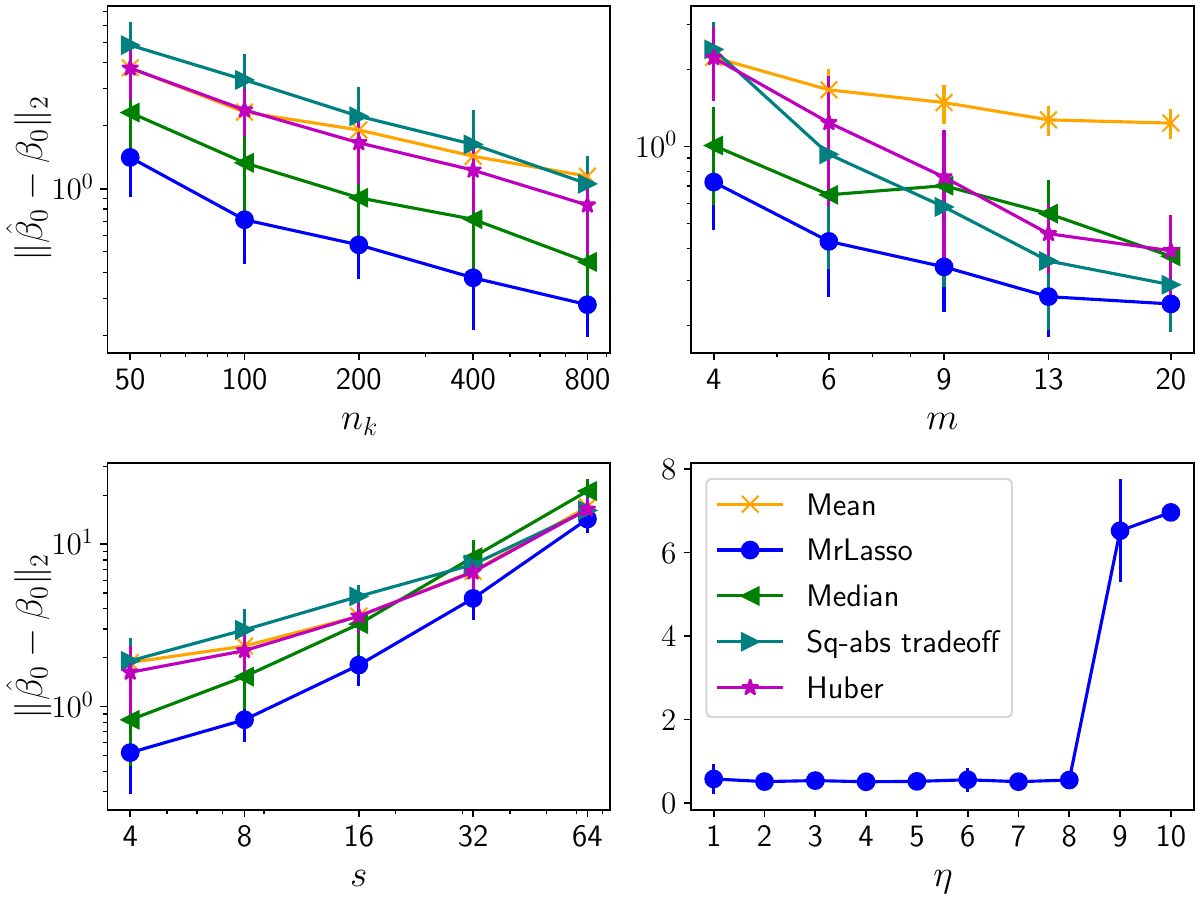}
	\caption{\textit{Upper left}: Errorbar plots for $\ell_2$-error in estimation of $\beta_0$ using different global estimators for varying sample sizes in each datasets ($n_k$) with $m = 5$ and $s = 5.$  \textit{Upper right}: For varying $m$ with $n_k = 200$ and $s = 5$. \textit{Lower left}: For different $s$ with $n_k = 200$ and $m = 5$.   \textit{Lower right}: For different $\eta$ with $n_k = 200$ and $s = 5$ and $m = 5.$}
	\label{fig:l2-m-n-s}
	
\end{figure}

Note that the definition MrLasso, square and absolute loss trade-off and Huber loss require additional tuning parameters.
In the simulation, we hold out $\frac15$ of the dataset as a validation set and we pick  those parameters to minimize test error on the validation set. The data dependent choice of tuning parameters are used to define the global parameters for square and absolute loss trade-off and Huber loss to keep the comparison fair, whereas for MrLasso the global parameters are  identified with  oracle tuning parameter values. 
The upper left plot in Figure \ref{fig:l2-m-n-s} shows the $\ell_2$-estimation error  of the global parameter with respect to the sample size for each dataset ($n_k$). We fix the number of datasets ($m$) at 5 and the sparsity of the MrLasso global parameter ($s$) at $5$. 
Note that the global parameters are different for different identification restrictions. 
The convergence rate of MrLasso is faster than the others. For denser global parameters, the performance discrepancy between the estimators reduces. We see this behavior in the lower-left plot of Figure \ref{fig:l2-m-n-s}.

A study on the effect of $m$ for fixed $n_k$ and $s$ is presented in the upper right plot of Figure \ref{fig:l2-m-n-s}. All the estimates are able to borrow strength across datasets ($\ell_2$-error decreases for higher $m$) but MrLasso produces lower $\ell_2$-errors than others. 

In the lower right panel of Figure \ref{fig:l2-m-n-s} we study the sensitivity of MrLasso to the misspecification of $\eta$. As seen in the plot, MrLasso produces a good estimate of the global parameter as long as $\eta$ falls within a reasonable range. This range depends on the separation between the bulk and outlier local coordinates (see Assumption \ref{assump:cluster-parameter}, condition (B3)). The plot also suggests that  (cross-)validation leads to a value of $\eta$ that falls in this range. We confirm this in Figure \ref{fig:eta-m-n-s}, which shows the chosen values of $\eta$ for the simulations in Figure \ref{fig:l2-m-n-s}.

The $\ell_2$ errors in Figure \ref{fig:l2-m-n-s} are calculated in terms of the corresponding global parameters of the estimators, but the global parameters are different. We complement this experiment by evaluating the prediction error of them on a test dataset that is generated without any outlier local regression coefficients. In particular, the first $s$ coordinates of the local regression coefficient vectors are generated independently from $N(2, (1.5)^2)$, while the rest of the coordinates are zero. In this setup, we expect the MrLasso global parameter to be closer to the population global parameter because it correctly identifies the coordinates of all but the first $s$ coordinates as zero. This is confirmed in Figure \ref{fig:error-m-n-s}.

We finally note that our method is both privacy-preserving (no raw data is sent) and communication efficient (only scalars and vectors are sent). Even the model selection protocol for $\eta$ can be carried out without having to send the raw data: we only need to transfer scalars and vectors. We first send the local debiased estimates to the central machine to calculate $\hat\beta_0$ and $\hat\delta_k$'s for several choices of $\eta$'s. These estimates are sent back to the local machines to calculate prediction error on the holdout datasets. Finally these errors can be transferred back to the global machine to determine the choice of $\eta.$

\begin{figure}
	\centering
		\includegraphics[width=\textwidth]{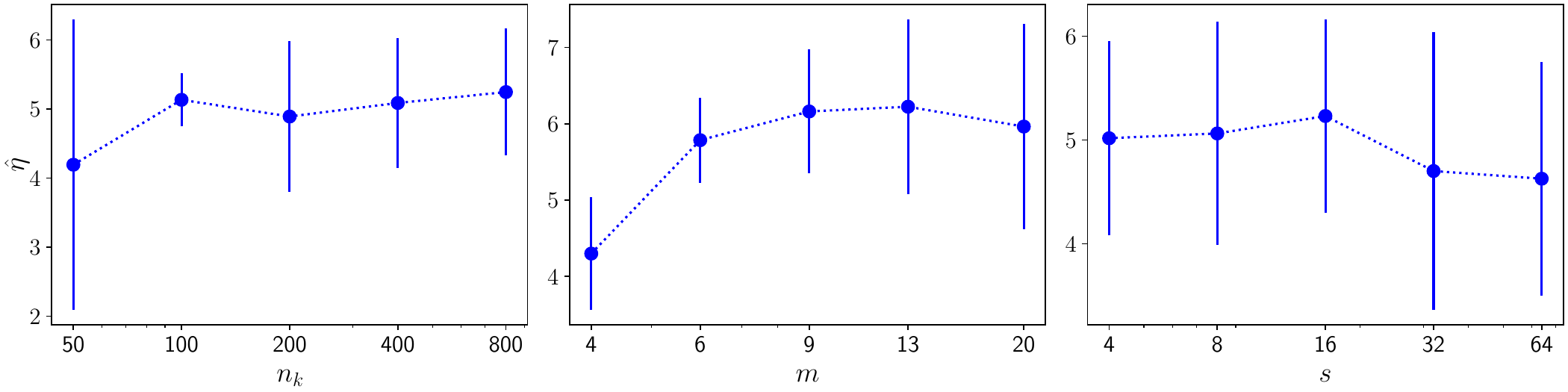}
	\caption{The choices of data dependent $\eta$'s for MrLasso in the experiments in Figure \ref{fig:l2-m-n-s}, upper-left, upper-right and lower-left plot.}
	\label{fig:eta-m-n-s}
	
\end{figure}

\begin{figure}
	\centering
		\includegraphics[width=\textwidth]{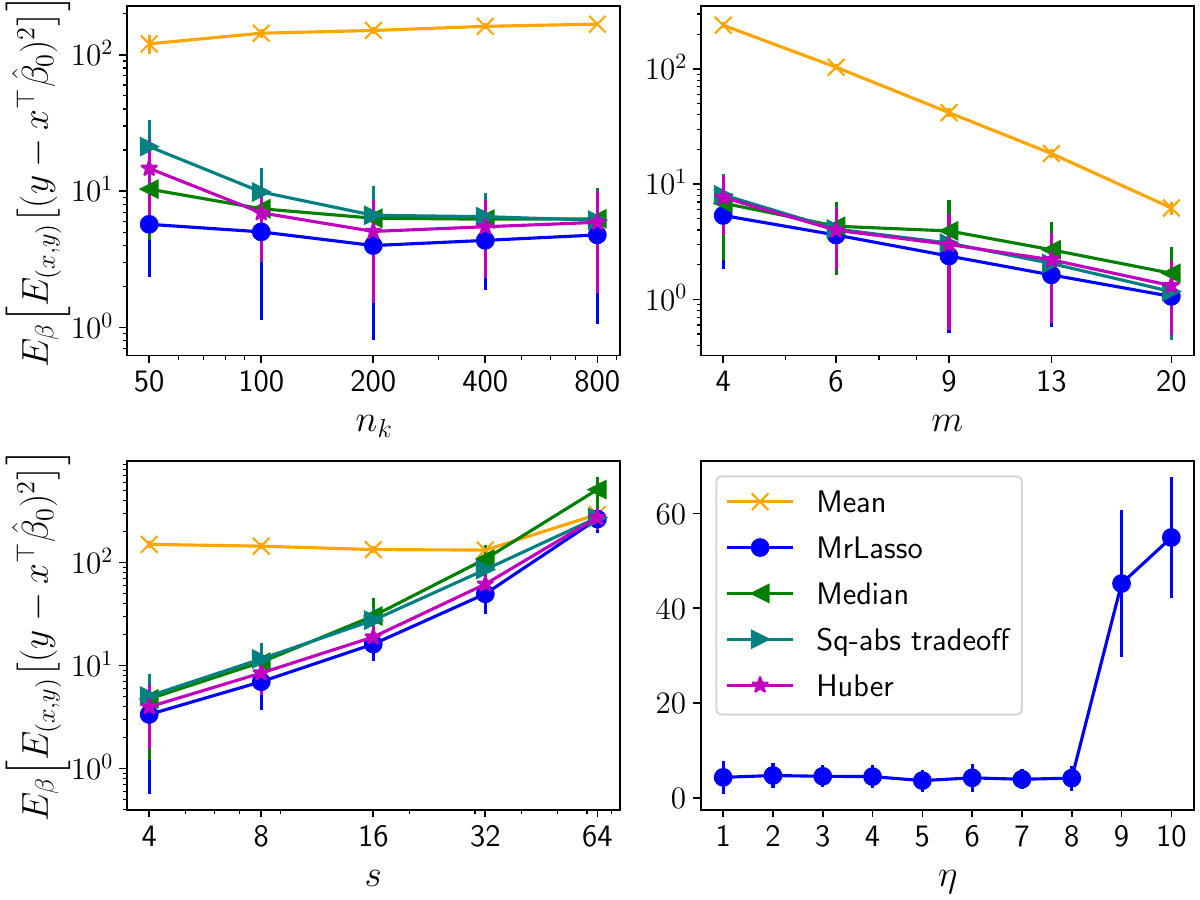}
	\caption{\textit{Upper left}: Errorbar plots for prediction errors in test data  using global estimators for varying sample sizes in each datasets ($n_k$) with $m = 5$ and $s = 5.$  \textit{Upper right}: For varying $m$ with $n_k = 200$ and $s = 5$. \textit{Lower left}: For different $s$ with $n_k = 200$ and $m = 5$.   \textit{Lower right}: For different $\eta$ with $n_k = 200$ and $s = 5$ and $m = 5.$ }
	\label{fig:error-m-n-s}
	
\end{figure}

\section{Cancer cell line study}
The Cancer Cell Line Encyclopedia is a database of gene expression, genotype, and drug sensitivity data for human cancer cell lines. We use our method to study the sensitivity of cancer cell lines to certain anti-cancer drugs. We treat the area above the dose-response curve as the response variable \citep{barretina2012cancer} and use expression levels of approximately $20000 $ genes ($d$) as features. More details about data source and pre-processing are provided in Appendix \ref{sec:supp-ccle}. After prepossessing we have 482 cancer cell lines. Each of these cell lines comes from a cancerous organ and corresponds to a specific cancer type. The lines are divided into different machines ($k$) according to cancer types.   For a given drug, our goal is to obtain a single regression coefficient vector which can be used to predict the response of different  cancer types to that  drug. We fit our model on the more common cancer cell types \footnote{The number of local datasets ($m$) that are used in meta-analysis is same as the number of common cancer cell types shown in the upper left panel of the corresponding figures (for example the drug PD-0325901 has $m = 7$ as seen in Figure \ref{fig:ccle-PD-0325901}), where the number of  data-points ($n_k$'s) in each of these common cancer types are also shown in the upper-left panel.} with at least 10 samples and evaluate the prediction accuracy of the models on rare cancer types (those with less than 10 samples).  The performance of three models are presented: soft thresholded mean of the local estimators (which we call Mean), the soft thresholded version of our integrated estimator (as in section \ref{sec:estimator}), and a global lasso estimator that simply fits the lasso to an aggregate dataset consisting of samples from all common cancer types.

\begin{figure}
 \centering
 \includegraphics[width=\textwidth]{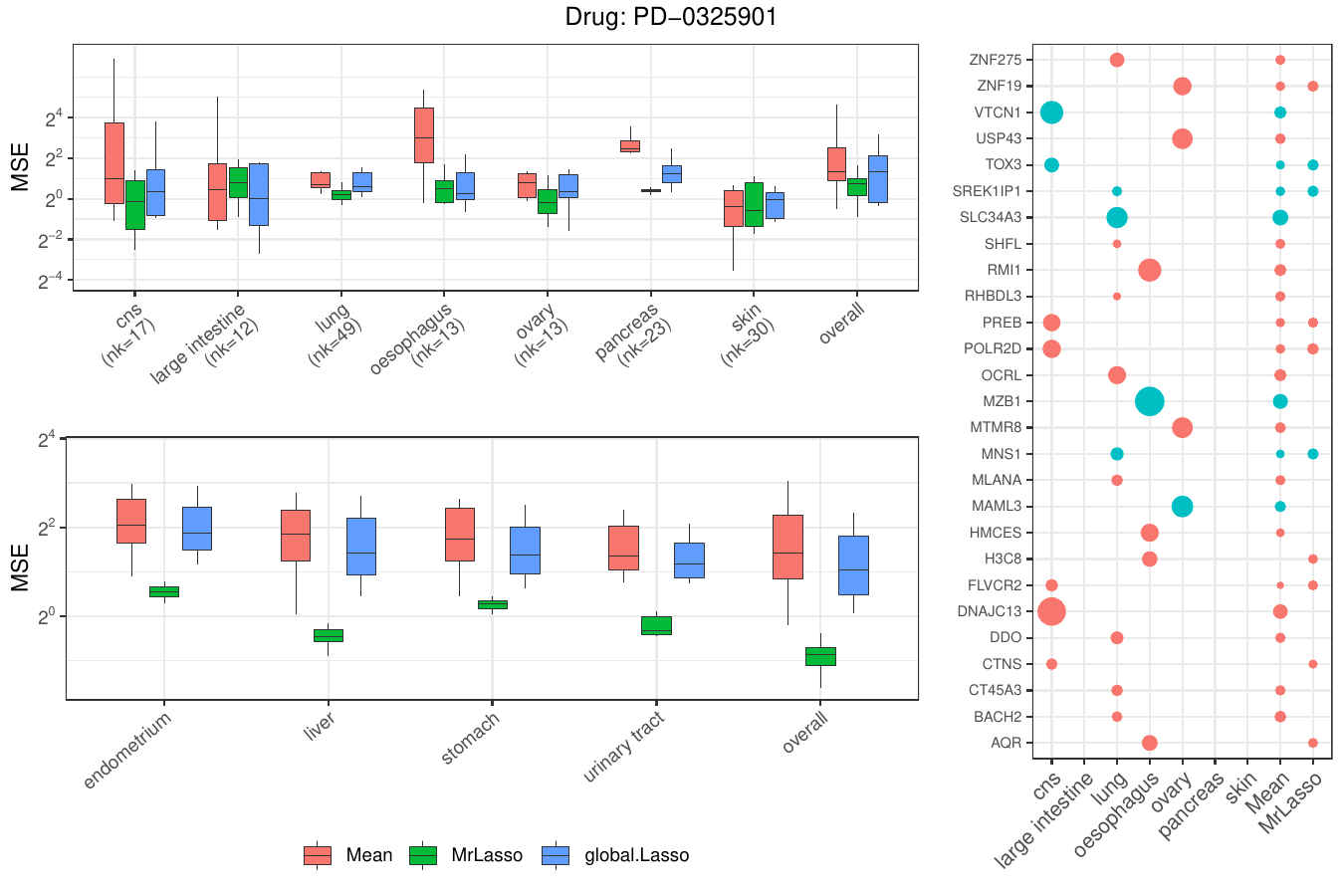}

 \caption{\textit{Left:} Drug-response prediction mean squared errors (MSEs) for  the most frequent cancer types (upper left) and rare cancer types (lower left panel) for the drug PD-0325901.  \textit{Right:} Coefficient plot for the selected genes in lasso estimates for most frequent cancer types (calculated locally on each machine) and Mean and Mr Lasso (integrated across machines). The size of the points is related to the magnitude of the coefficients, whereas the signs are represented by colour (red for negative and blue for positive).}
 \label{fig:ccle-PD-0325901}
\end{figure}

\begin{figure}
 \centering
 \includegraphics[width=\textwidth]{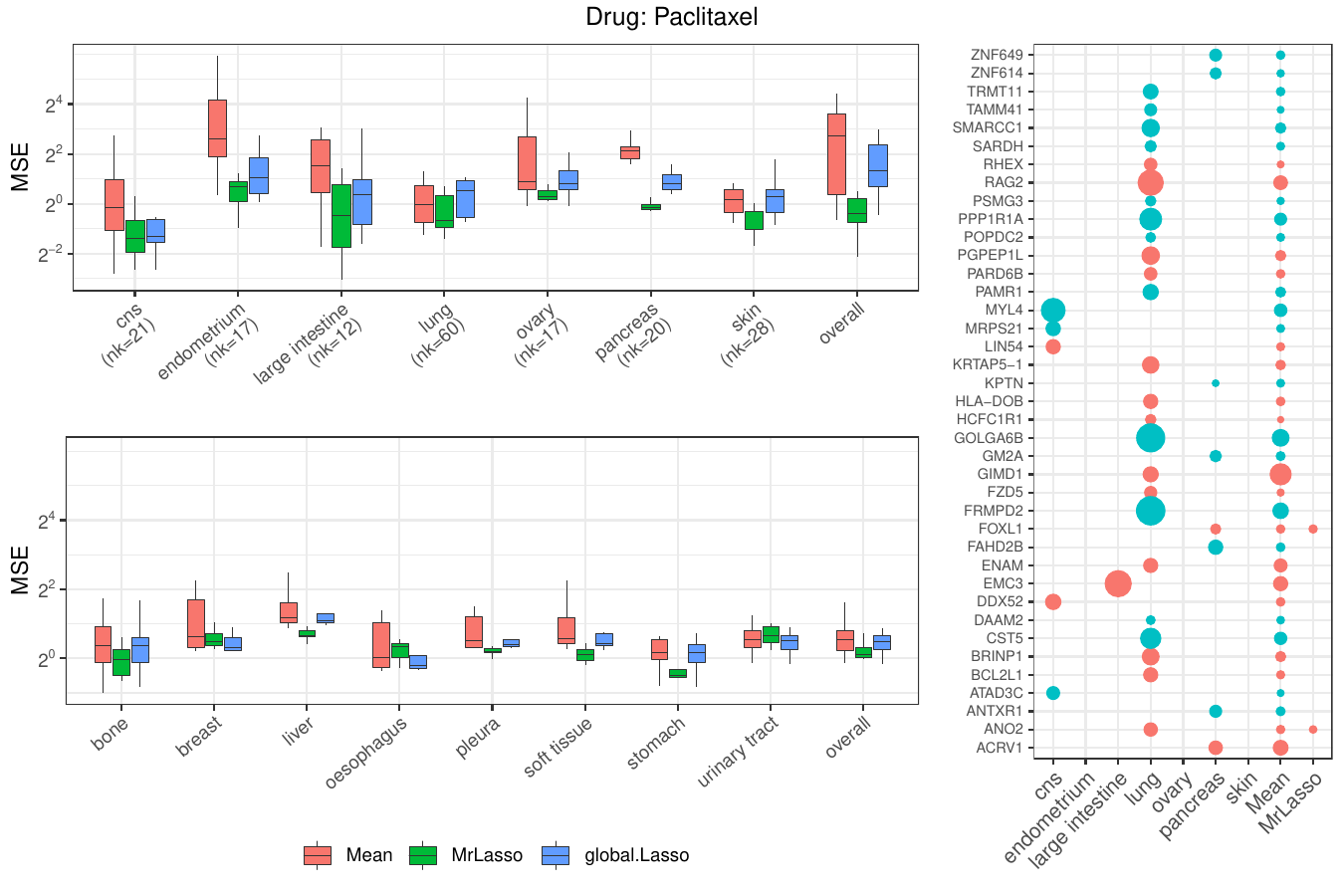}

 \caption{\textit{Left:} Drug-response prediction mean squared errors (MSEs) for  the most frequent cancer types (upper left) and rare cancer types (lower left panel) for the drug Paclitaxel.  \textit{Right:} Coefficient plot for the selected genes in lasso estimates for most frequent cancer types and Mean and Mr Lasso.}
 \label{fig:ccle-Paclitaxel}
\end{figure}

\begin{figure}
 \centering
 \includegraphics[width=\textwidth]{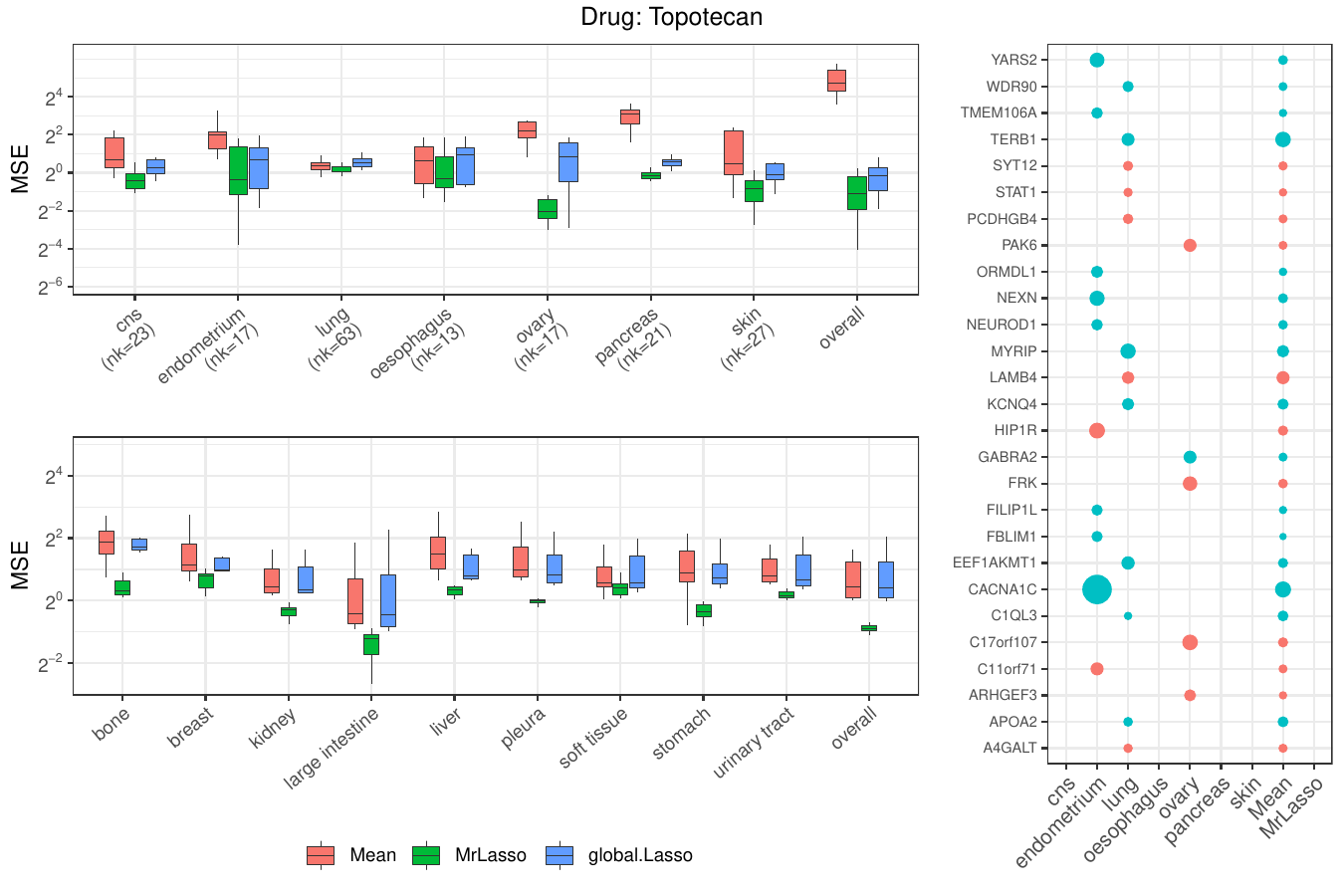}

 \caption{\textit{Left:} Drug-response prediction mean squared errors (MSEs) for  the most frequent cancer types (upper left) and rare cancer types (lower left panel) for the drug Topotecan.  \textit{Right:} Coefficient plot for the selected genes in lasso estimates for most frequent cancer types and Mean and Mr Lasso.}
 \label{fig:ccle-Topotecan}
\end{figure}

For integration purposes we first need to calculate the de-biased lasso estimate from each of these frequent cancer types. The regularization parameter for each of these cancer types is chosen using 7 fold cross-validation. After calculating the debiased lasso from each dataset, Mean is obtained by soft-thresholding their average, with the level of thresholding determined via  10 fold cross-validation. To calculate  MrLasso, we assume that the parameters $\eta_j$ are identical over the gene expressions. Besides determining this common parameter $\eta$ we  also require a proper soft-threshold level ($t$) for the integrated estimator. To this end the appropriate pair $(\eta,t)$ is again chosen  via cross validation (see the discussion in last paragraph of Section \ref{sec:estimator}).  
We compare the performances of the Mean, the MrLasso and the global lasso (obtained using combined data over the most-frequent cancer types with  best regularization parameter chosen via 10-fold cross-validation) for drugs PD-0325901 (Figure \ref{fig:ccle-PD-0325901}), Paclitaxel (Figure \ref{fig:ccle-Paclitaxel}) and Topotecan (Figure \ref{fig:ccle-Topotecan}).

To evaluate the efficacy of this approach, we fit the global parameter using Mean, MrLasso, and a global lasso on the more common cancer types and evaluate its predictive accuracy on rare cancer types. The lower left panels in Figures \ref{fig:ccle-PD-0325901}, \ref{fig:ccle-Paclitaxel} and \ref{fig:ccle-Topotecan} give comparative plots for the  prediction accuracy of drug-response of the three global parameter estimates (for the three drugs). The plots for several other drugs are provided in Appendix \ref{sec:supp-ccle}. For completeness, we also evaluate the predictive performance of the fitted global parameters on held out data from the more common cancer types in the top left panels.

We see that the predictive performance of MrLasso is generally superior to that of Mean and the global lasso. This is mostly due to the heterogeneity in the relationship between drug sensitivity and genotype on different cancer types, as evinced by the  considerable variation between the effect sizes of genetic variants on drug sensitivity from cancer to cancer: see for example,  Figures \ref{fig:ccle-PD-0325901}, \ref{fig:ccle-Paclitaxel} and \ref{fig:ccle-Topotecan}. This causes the global parameter fitted by the Mean to be considerably denser than the global parameter fitted by MrLasso. In other words, the Mean pools effect sizes across all cancer types regardless of whether the effect sizes are similar. This causes the global parameter fitted by the Mean to pick up many small effect sizes that are specific to certain cancer types, thereby detracting from the generalizability of the global parameter to other (rare) cancer types. On the other hand, MrLasso only aggregates effect sizes when the data suggests they are similar. In other words, the global parameter fitted by MrLasso only incorporates effects that \emph{persist} across cancer types. This leads to a parameter whose prediction performance generalizes across cancer types. In some extreme cases (\eg\ Figure \ref{fig:ccle-Topotecan}), the effect sizes are so dissimilar that MrLasso does not borrow strength at all, but the Mean continues to pool the effect sizes. This leads to overall poor prediction performance.
In comparison with Mean, we see that  MrLasso can  \emph{automatically adapt} to the intrinsic level of heterogeneity across datasets, leading to a better generalization.

\section{Discussion}
\label{sec:discussion}

We consider integrative regression in the high-dimensional setting with heterogeneous data sources. The two main issues that we address are (i) identifiability of the global parameter, and (ii) statistically and computationally efficient estimator of the global parameter. 
In many prior works on integrative regression in high dimensions, there is either no global parameter or a global parameter that is not properly identified (see Section \ref{sec:identifiability} for a discussion of global parameters in prior work). 

We suggest a way to identify the global parameter that addresses some of the drawback of prior approaches by appealing to ideas from robust location estimation. The main benefit of our suggestion is it is possible to estimate the global parameter at a rate that depends on the size of the combined dataset. We also proposed a statistically and computationally efficient estimator of the global parameter. By statistically efficient, we mean the estimation error vanishes at a rate that depends on the size of the combined datasets. By computationally efficient, we mean the communication cost of evaluating the estimator depends only on the dimension of the problem and the number of machines. Further, because no individual samples are communicated between machines, evaluating the proposed estimator does not compromise the privacy of the individuals in the data sources.

\bibliography{YK,sm}

\begin{thebibliography}{18}
\providecommand{\natexlab}[1]{#1}
\providecommand{\url}[1]{\texttt{#1}}
\expandafter\ifx\csname urlstyle\endcsname\relax
  \providecommand{\doi}[1]{doi: #1}\else
  \providecommand{\doi}{doi: \begingroup \urlstyle{rm}\Url}\fi

\bibitem[Asiaee et~al.(2018)Asiaee, Oymak, Coombes, and
  Banerjee]{asiaee2018High}
A.~Asiaee, S.~Oymak, K.~R. Coombes, and A.~Banerjee.
\newblock High {{Dimensional Data Enrichment}}: {{Interpretable}}, {{Fast}},
  and {{Data}}-{{Efficient}}.
\newblock \emph{arXiv:1806.04047 [cs, stat]}, June 2018.

\bibitem[Barretina et~al.(2012)Barretina, Caponigro, Stransky, Venkatesan,
  Margolin, Kim, Wilson, Leh{\'a}r, Kryukov, Sonkin,
  et~al.]{barretina2012cancer}
J.~Barretina, G.~Caponigro, N.~Stransky, K.~Venkatesan, A.~A. Margolin, S.~Kim,
  C.~J. Wilson, J.~Leh{\'a}r, G.~V. Kryukov, D.~Sonkin, et~al.
\newblock The cancer cell line encyclopedia enables predictive modelling of
  anticancer drug sensitivity.
\newblock \emph{Nature}, 483\penalty0 (7391):\penalty0 603--607, 2012.

\bibitem[Battey et~al.(2018)Battey, Fan, Liu, Lu, and
  Zhu]{battey2018distributed}
H.~Battey, J.~Fan, H.~Liu, J.~Lu, and Z.~Zhu.
\newblock Distributed testing and estimation under sparse high dimensional
  models.
\newblock \emph{Annals of statistics}, 46\penalty0 (3):\penalty0 1352, 2018.

\bibitem[B{\"u}hlmann and Van De~Geer(2011)]{buhlmann2011statistics}
P.~B{\"u}hlmann and S.~Van De~Geer.
\newblock \emph{Statistics for high-dimensional data: methods, theory and
  applications}.
\newblock Springer Science \& Business Media, 2011.

\bibitem[Cai et~al.(2021)Cai, Liu, and Xia]{cai2021individual}
T.~Cai, M.~Liu, and Y.~Xia.
\newblock Individual data protected integrative regression analysis of
  high-dimensional heterogeneous data.
\newblock \emph{Journal of the American Statistical Association}, pages 1--15,
  2021.

\bibitem[Cheng et~al.(2015)Cheng, Lu, and Liu]{cheng2015Identification}
X.~Cheng, W.~Lu, and M.~Liu.
\newblock Identification of homogeneous and heterogeneous variables in pooled
  cohort studies: {{Identification}} of {{Heterogeneity}} in {{Pooled
  Studies}}.
\newblock \emph{Biometrics}, 71\penalty0 (2):\penalty0 397--403, June 2015.
\newblock ISSN 0006341X.
\newblock \doi{10.1111/biom.12285}.

\bibitem[Consortium et~al.(2015)Consortium, of~Drug Sensitivity~in
  Cancer~Consortium, et~al.]{cancer2015pharmacogenomic}
C.~C. L.~E. Consortium, G.~of~Drug Sensitivity~in Cancer~Consortium, et~al.
\newblock Pharmacogenomic agreement between two cancer cell line data sets.
\newblock \emph{Nature}, 528\penalty0 (7580):\penalty0 84, 2015.

\bibitem[Corsello et~al.(2019)Corsello, Nagari, Spangler, Rossen, Kocak, Bryan,
  Humeidi, Peck, Wu, Tang, et~al.]{corsello2019non}
S.~M. Corsello, R.~T. Nagari, R.~D. Spangler, J.~Rossen, M.~Kocak, J.~G. Bryan,
  R.~Humeidi, D.~Peck, X.~Wu, A.~A. Tang, et~al.
\newblock Non-oncology drugs are a source of previously unappreciated
  anti-cancer activity.
\newblock \emph{BioRxiv}, page 730119, 2019.

\bibitem[DepMap(2021)]{depmap_2021}
B.~DepMap.
\newblock {DepMap 21Q2 Public}, May 2021.

\bibitem[DerSimonian and Laird(1986)]{dersimonian1986meta}
R.~DerSimonian and N.~Laird.
\newblock Meta-analysis in clinical trials.
\newblock \emph{Controlled clinical trials}, 7\penalty0 (3):\penalty0 177--188,
  1986.

\bibitem[Gross and Tibshirani(2016)]{gross2016Data}
S.~M. Gross and R.~Tibshirani.
\newblock Data {{Shared Lasso}}: {{A}} novel tool to discover uplift.
\newblock \emph{Computational Statistics \& Data Analysis}, 101:\penalty0
  226--235, Sept. 2016.
\newblock ISSN 01679473.
\newblock \doi{10.1016/j.csda.2016.02.015}.

\bibitem[Hampel(2005)]{hampel2005Robust}
F.~R. Hampel, editor.
\newblock \emph{Robust Statistics: The Approach Based on Influence Functions}.
\newblock Wiley Series in Probability and Mathematical Statistics. {Wiley},
  {New York}, digital print edition, 2005.
\newblock ISBN 978-0-471-73577-9.
\newblock OCLC: 255133771.

\bibitem[Hedges and Olkin(2014)]{hedges2014statistical}
L.~V. Hedges and I.~Olkin.
\newblock \emph{Statistical methods for meta-analysis}.
\newblock Academic press, 2014.

\bibitem[Huber(1964)]{huber1964Robust}
P.~J. Huber.
\newblock Robust {{Estimation}} of a {{Location Parameter}}.
\newblock \emph{The Annals of Mathematical Statistics}, 35\penalty0
  (1):\penalty0 73--101, Mar. 1964.
\newblock ISSN 0003-4851, 2168-8990.
\newblock \doi{10.1214/aoms/1177703732}.

\bibitem[Jordan et~al.(2016)Jordan, Lee, and
  Yang]{jordan2016CommunicationEfficient}
M.~I. Jordan, J.~D. Lee, and Y.~Yang.
\newblock Communication-{{Efficient Distributed Statistical Inference}}.
\newblock \emph{arXiv:1605.07689 [cs, math, stat]}, May 2016.

\bibitem[Lee et~al.(2017)Lee, Sun, Liu, and
  Taylor]{lee2017Communicationefficient}
J.~D. Lee, Y.~Sun, Q.~Liu, and J.~E. Taylor.
\newblock Communication-efficient sparse regression: A one-shot approach.
\newblock \emph{Journal of Machine Learning Research}, 18, Jan. 2017.

\bibitem[Leek et~al.(2010)Leek, Scharpf, Bravo, Simcha, Langmead, Johnson,
  Geman, Baggerly, and Irizarry]{leek2010Tackling}
J.~T. Leek, R.~B. Scharpf, H.~C. Bravo, D.~Simcha, B.~Langmead, W.~E. Johnson,
  D.~Geman, K.~Baggerly, and R.~A. Irizarry.
\newblock Tackling the widespread and critical impact of batch effects in
  high-throughput data.
\newblock \emph{Nature Reviews Genetics}, 11\penalty0 (10):\penalty0 733--739,
  Oct. 2010.
\newblock ISSN 1471-0064.
\newblock \doi{10.1038/nrg2825}.

\bibitem[{van de Geer} et~al.(2014){van de Geer}, B{\"u}hlmann, Ritov, and
  Dezeure]{vandegeer2014asymptotically}
S.~{van de Geer}, P.~B{\"u}hlmann, Y.~Ritov, and R.~Dezeure.
\newblock On asymptotically optimal confidence regions and tests for
  high-dimensional models.
\newblock \emph{The Annals of Statistics}, 42\penalty0 (3):\penalty0
  1166--1202, June 2014.
\newblock ISSN 0090-5364.
\newblock \doi{10.1214/14-AOS1221}.

\end{thebibliography}
\newpage
\appendix
\section{Identifiability}
\label{sec:counter-example}

\begin{lemma}[Contaminated normal model]
\label{lemma:contaminated-normal}
Let \[ x  \sim (1-\eps) \cN\big((\mu_0, \sigma^2\big) + \eps g,\] where $\eps<1/2$ and $g$ is continuous density of contamination distribution. The expected redescending loss $L(\mu) = E[(x - \mu)^2 \wedge \eta^2]$ is minimized at $\mu_0$ if 
\[
\int_{\mu_0 - \eta}^{\mu_0 + \eta} (x - \mu_0) g(x)dx = 0
\]
and 
\[ (1-\eps) \Bigg\{ \Phi\Big(\frac\eta\sigma\Big) - \Phi\Big(-\frac\eta\sigma\Big) - \frac{2\eta}\sigma \phi\Big(\frac\eta\sigma\Big)  \Bigg\} + \eps\Bigg\{ G(\mu + \eta) - G(\mu-\eta) - 2\eta \big(g(\mu + \eta) + g(\mu - \eta)\big)  \Bigg\} > 0, \] where $\phi$ and $\Phi$ are density and distribution functions of standard normal distribution. 
\end{lemma}

\begin{proof}
    We note that 
    \[
    \begin{aligned}
     L'(\mu)/2& = E\big[ (\mu - x) \mathbbm{1}(|x - \mu| \le \eta) \big]\\
     & = (1 - \eps )E_{(1/\sigma)\phi\big((x - \mu)/\sigma\big)}\big[ (\mu - x) \mathbbm{1}(|x - \mu| \le \eta) \big]\\
     & \quad+ \eps E_g\big[ (\mu - x) \mathbbm{1}(|x - \mu| \le \eta) \big]
    \end{aligned}
    \] 
    Here \[
    \begin{aligned}
    &  E_g\big[ (\mu_0 + \delta - x) \mathbbm{1}(|x - \mu_0 - \delta| \le \eta) \big]\\
    &\quad = \delta \int_{\mu_0 + \delta - \eta} ^ {\mu_0 + \delta + \eta} g(x)dx - \int_{\mu_0 + \delta - \eta} ^ {\mu_0 + \delta + \eta} (x - \mu_0)g(x)dx\\
     & \quad = \delta \Big[G(\mu_0 + \delta + \eta) - G(\mu_0 + \delta - \eta)\Big] - \Big[\Gamma(\mu_0 + \delta + \eta) - \Gamma(\mu_0 + \delta - \eta)\Big],
    \end{aligned}
    \] where $\Gamma'(x) = (x-\mu_0) g(x).$ From the first order Taylor expansion 
    \[
    \begin{aligned}
    &  E_g\big[ (\mu_0 + \delta - x) \mathbbm{1}(|x - \mu_0 - \delta| \le \eta) \big]\\
    & \quad \approx \delta \Big[G(\mu_0 + \eta) + \delta g(\mu_0 + \eta) - G(\mu_0  - \eta) - \delta g(\mu_0 - \eta) \Big]\\
    & \quad \quad - \Big[ \Gamma(\mu_0 + \eta) + \delta \eta g(\mu_0 + \eta ) - \Gamma(\mu_0 - \eta) + \delta \eta g(\mu_0 - \eta) \Big]\\
    & \quad \approx \delta \Big[ G(\mu_0 + \eta) - G(\mu_0 - \eta ) - \eta \big( g(\mu_0 + \eta) + g(\mu_0 - \eta) \big) \Big] -  \Big[ \Gamma(\mu_0 + \eta)  - \Gamma(\mu_0 - \eta)\Big]
    \end{aligned}
    \]

    If $g$ is the density of $N(\mu_0, \sigma^2)$ then 
   \[ \begin{aligned}
    &  E_g\big[ (\mu_0 + \delta - x) \mathbbm{1}(|x - \mu_0 - \delta| \le \eta) \big]\\
    & \quad \approx \delta \Bigg[\Phi\Big(\frac\eta\sigma\Big) - \Phi\Big(-\frac\eta\sigma\Big) - \frac{2\eta}\sigma \phi\Big(\frac\eta\sigma\Big) \Bigg].
    \end{aligned}
    \]
    
From the requirement that 
\[
L'(\mu_0+ \delta) = \begin{cases}
0 & \text{if}\ \delta = 0,\\
> 0 & \text{if}\ \delta > 0, \\
<0 & \text{if} \ \delta < 0 ,
\end{cases}
\] for any arbitrarily small $\delta$, we have the lemma. 
    
\end{proof}

\begin{example}
\label{ex:medianFail}
Consider an one dimensional case where we have $ 2m+1 $ datasets each of size $ n. $ The  local parameters of these datasets take the values $ \mu_k^* = 2k $ for $ k=1,2,\cdots,2m+1. $ In this case the global parameter $ \mu_0^* $ will be identified as $ 2(m+1). $ Suppose we have some estimators for the local parameters $ \widehat\mu_k $ which can be written as $ \widehat\mu_{k}= \mu_k^* + \eta_k $ and for simplicity we assume $ \eta_k $ are  normal random variables with mean zero standard deviation  $  \frac\sigma{\sqrt{n}}.$  The example can also be extended for sub-Gaussian random variables. 

 The parameter $ \sigma  $ depends on the variances of observations. For simplicity we assume that $ \sigma  $ is $ 1. $ For some $ 0<\delta <1 $ if  sample sizes $ n\ge  8\log \left( \frac{4m+2}\delta \right) $ for each datasets then by union bound over sub-Gaussian inequality we get 

\begin{align*}
	\Pr \left(  \left| \widehat\mu_{k}-\mu_k^* \right| \le \frac12 \  \text{for all } k=1,2,\cdots,2m+1 \right) & \ge 1- \sum_{ k =1}^{2m+1} \Pr \left( |\eta_k|>t  \right)\\& \ge 1- 2(2m+1) e^{-\frac{n}{8}}\\
	& \ge 1-\delta.
\end{align*}
 We can see that with probability at least $ 1-\delta $ the local estimators are  ordered as $ \widehat\mu_1< \widehat\mu_2 < \cdots <\widehat\mu_{2m+1}  $. Let this event be $ E. $  A natural choice for $ \widehat \mu_0 $ is the median of $ \widehat\mu_{k} $'s under  the identification condition   \eqref{eq:median}. Clearly, under the event $ E, $ $ \widehat\mu_0 $ is $ \widehat\mu_{m+1} $ and hence we have $ \widehat\mu_0 - \mu_0^* = \widehat\mu_{m+1} - \mu_{m+1}^* = \eta_{m+1}.$ As $ \eta_{m+1}  $ is  $ N(0,1/n) $, we see that
  \begin{align*}
  	\Pr\left( |\widehat\mu_0-\mu_0^*| > \frac C{\sqrt{mn}}\right) & \ge \Pr\left(|\eta_{m+1}|>\frac C{\sqrt{mn}}\right) - \Pr(E^c)\\
  	& \ge  2 - 2\Phi \left( \frac C{\sqrt{m}}  \right) - \delta .
  \end{align*}
  For  a fixed  $ C>0  $ the  above probability is larger than $ 1- \Phi \left( \frac C{\sqrt{m}}  \right) $ whenever $ n\ge 8\log \left( \frac{4m+2}{1 - \Phi \left( \frac C{\sqrt{m}}  \right)} \right) . $ This is a contradiction to the fact that  rate of convergence for $ \widehat\mu_0 $ to $ \mu_0^* $ is $ \frac{1}{\sqrt{mn}}. $

\end{example}

\begin{example}

 Let $\mu_1 = \dots = \mu_{m-1} = 0,$ and $\mu_m >0.$ For any choice of $\lambda>0,$ $\mu_0$ as defined in \eqref{eq:huber-id} is never zero. 
    \label{ex:huber-loss-counter}
\end{example}

\begin{proof}
Define \[L(x,\lambda) = \sum_{k=1}^m L_\lambda(x-\mu_k) = (m-1) L_\lambda (x) + L(\mu_m - x).\]

Since \[
L_\lambda (x)  = \begin{cases} \frac12 x^2 & \text{if } |x|\le \lambda \\
\lambda \left(|x| - \frac12 \lambda \right) & \text{otherwise},
\end{cases}
\] we note that 
\[
\partial_x L_\lambda (x)  = \begin{cases}  x & \text{if } |x|\le \lambda \\
\lambda \cdot \text{sign}(x)  & \text{otherwise},
\end{cases}
\] for $x \neq 0$ and $\partial_x L_\lambda(0) = 0$, we have 
\begin{equation}
 \partial_xL(0, \lambda) = (m-1) \partial_x L_\lambda (0) + \partial_x L(\mu_m) = \lambda\cdot \text{sign}(\mu_m)\,.
 \label{eq:huber-derivative}
\end{equation} Note that $\partial_xL(0, \lambda)$ is can never be zero, which concludes that the minimum is never achieved at zero.

\end{proof}

\begin{result}
	Suppose the covariate dimension $ d $ is fixed. Assume
	
	\begin{enumerate}
		\item $ n_1 = \cdots=n_m = n, $
		\item $ \text{Var}(\beps_{ki}) $ are same over different $ k $'s,
		\item $ \Sigma_k = \Ex \left(\bx_{ki}\bx_{ki}^T\right)  $'s  are invertible, 
		\item $ \sum_{ k =1}^m\|\thetas_k -\theta_0\|_1 + c\|\theta_0\|_1 $ has a unique minimizer $ \thetas_0. $
	\end{enumerate}
	   Define \[ (\htheta_0, \hdelta_1, \ldots, \hdelta_m)(n) = \argmin_{(\theta_0,\delta_1, \dots , \delta_m)} \frac{1}{2N} \sum_{k=1}^m  \left\| \bY_{k} - \bX_{k} (\theta_0+ \delta_k)  \right\|_2^2 + \lambda_n \sum_{k=1}^m \|\delta_k\|_1 + c\lambda_n\|\theta_0\|_1, \] for $ \lambda_n \sim \frac1{\sqrt{n}} $ some $ c>0 $  independent of $ n. $ Then \[ (\htheta_0, \hdelta_1, \ldots, \hdelta_m)(n) \to (\thetas_0, \deltas_1, \ldots, \deltas_m ), \] in probability, with respect to $\|\cdot\|_2$ norm, where, $ \thetas_0 = \argmin_{\theta_0} \left(\sum_{ k =1}^m \|\thetas_k -  \theta_0 \|_1 + c\|\theta_0\|_1  \right) $ and  $ \deltas_k = \thetas_k - \thetas_0 \text{ for } k \in [m]. $
	   \label{result:id-data-shared-lasso}
\end{result}
This result can be shown for any norm in $\reals^{d(m+1)}.$
\begin{proof}
	Define   \[ M_n(\phi) =  \frac{1}{2N} \sum_{k=1}^m  \left\| \bY_{k} - \bX_{k} (\theta_0+ \delta_k)  \right\|_2^2 + \lambda_n \sum_{k=1}^m \|\delta_k\|_1 + c\lambda_n\|\theta_0\|_1 \]  and \[ \widetilde M_n(\phi) =  \frac{1}{2m} \sum_{k=1}^m  (\theta_0 + \delta_k -\thetas_k)^T\Sigma_k (\theta_0+\delta_k - \thetas_k) + \lambda_n \sum_{k=1}^m \|\delta_k\|_1 + c\lambda_n\|\theta_0\|_1 ,\] 
	where, $ \phi = (\theta_0, \delta_1, \ldots, \delta_m). $ We define $ K_M  \subset \reals^{d(m+1)} $ to be the closed ball around $ \phi ^* = (\thetas_0,\thetas_1-\thetas_0, \dots, \thetas_m - \thetas_0)$ with radius $ M. $  Since, for each $ \phi\in K_M, \ M_n(\phi) \to \widetilde{M}_n(\phi) $ and $ \sup_{\phi \in K_M} \widetilde{M}_n(\phi)  $ is finite, $ \sup_{\phi \in K_M} |M_n(\phi) - \widetilde{M}_n(\phi)| \to 0 $ in probability. 
	
	For any $ \epsilon>0,  $ \[ \inf_{\|\phi - \phi^*\|_2\ge \epsilon}\widetilde M_n(\phi) > \widetilde{M}_n (\phi^*).  \] So, by argmin continuous mapping theorem \[ \left \|\argmin_{\phi \in K_M} M_n(\phi) - \argmin_{\phi\in K_M} \widetilde{M}_n(\phi)\right \|_2 \to 0  \] in probability. Taking the radius $ M \to \infty $ we get \[\left \| \argmin_{\phi \in \reals^{d (m+1)}} M_n(\phi) - \argmin_{\phi\in \reals^{d (m+1)}} \widetilde{M}_n(\phi)\right \|_2 \to 0 \]
	
	As $ n \to \infty , \ \lambda_n \to 0+. $ Hence, in the minimization problem of $ \widetilde{M}_n(\phi) $ $ \frac{1}{2m} \sum_{k=1}^m  (\theta_0 + \delta_k -\thetas_k)^T\Sigma_k (\theta_0+\delta_k - \thetas_k) $ becomes more and more important, and in the limiting case this becomes primary objective. Looking at the loss for primary objective we see that optimum achieved at $ \phi  $ for which $ \theta_0+\delta_k = \thetas_k $ for all $ k\in[m]. $ Hence, $ \argmin_{\phi\in \reals^{d\times (m+1)}} \widetilde{M}_n(\phi) \to \argmin_{\phi \in A}  \sum_{k=1}^m \|\delta_k\|_1 + c\|\theta_0\|_1, $ where, $ A = \{\phi: \theta_0 + \delta_k = \thetas_k \}. $

\end{proof}

\section{Supplementary Results}
\label{subsec:supp-results}

\begin{lemma}
\label{lemma:high-lebel-decomposition-debiased-lasso}
Under the Assumption \ref{assump:debiased-lasso-assumption} the followings hold. 
    \begin{enumerate}
		\item\label{assump:lin-decompose} The debiased lasso estimator $ \ttheta^d $  can be decomposed as 
		
	\[ 	 \ttheta^d = \theta^* - \Thetah \nabla \ell (\theta^*) + \Delta  \]

		where, for  and  some $ c_1>0$, \[\Pr \left(\|\Delta\|_\infty  >c_1   Ks_0\log d/n\right) \le o(d^{-1}).  \]
		\item\label{assump:row-theta} For some $ c_2>0, $  \[\Pr \left(\text{for some }j\in[d], \   \|\Thetah_{j,\cdot}-\Theta_{j,\cdot}\|_1 >c_2  \max \{K s_j, K^2 s_0\} \sqrt{\log d /n}\right) \le o(d^{-1}). \]
		\item There is some $ c_3>0$ such that $ \  \|\Theta_{j,\cdot}\|_1 \le c_3\sqrt{s_j} $ hold for any $ j\in[d]. $
    \end{enumerate}
\end{lemma}

\begin{sproof}{Proof of Lemma \ref{lemma:high-lebel-decomposition-debiased-lasso}}
   For the proof of  2. readers are suggested to see Theorem 3.2. in \cite{vandegeer2014asymptotically}.
	
	Denote $s^* = \max\{s_1, \dots, s_d\}$. 
	To verify 1. we start with the Taylor expansion of $ \ttheta^d $:
	\begin{align*}
		\ttheta^d &= \ttheta - \Thetah \nabla \ell(\ttheta) \\
		& = \ttheta - \Thetah \frac1n\sum_{ i= 1}^n \drho(\by_i,\bx_i^T \ttheta)\\
		& = \ttheta - \Thetah \frac1n\sum_{ i= 1}^n \drho(\by_i,\bx_i^T \thetas)\bx_i - \Thetah \frac1n\sum_{ i= 1}^n \ddrho(\by_i,\ta_i)\bx_i\bx_i^T (\ttheta-\thetas)\\
		& = \thetas - \Thetah\nabla \ell(\thetas) +  (I - \Thetah \bM)(\ttheta - \thetas) 
	\end{align*}
	where, $ \ta_i $ is some number between $  \bx_i^T\ttheta $ and $ \bx_i^T\thetas, $ and $  \bM =\frac1n\sum_{ i= 1}^n \ddrho(\by_i,\ta_i)\bx_i\bx_i^T.  $ We give a high probability $ \ell_\infty  $ bound  for $  \Delta = (I - \Thetah \bM)(\ttheta - \thetas).  $ By triangle inequality 
	\begin{align*}
		\|(I - \Thetah \bM)(\ttheta - \thetas)\|_\infty \le&  \left\| \left(I - \Thetah \nabla \ell(\ttheta) \right)(\ttheta - \thetas) \right\|_\infty + \left \| \Thetah \left( \nabla^2 \ell(\ttheta)- \bM \right)(\ttheta - \thetas) \right \|_\infty  \\
		\le &  \underbrace{\max_j \left\| e_j^T - \Thetah_{j,\cdot} \nabla^2 \ell(\ttheta)\right \|_\infty \|\ttheta -  \thetas\|_1}_{I}\\&  + \underbrace{\frac1n \sum_{ i= 1}^n \|\Thetah\bx_i\|_\infty |(\ddrho(\by_i, \bx_i^T\ttheta)- \ddrho(\by_i,\ta_i))\bx_i^T(\ttheta-\thetas)|}_{II} 
	\end{align*}
	
	From KKT condition for nodewise lasso we get 
	\[ -\frac1n \bX_{\ttheta, -j} (\bX_{\ttheta, j}- \bX_{\ttheta, -j}\widehat\gamma_j) + \lambda_j \widehat z_j = 0,  \] where, $ \|\widehat z_j\|_\infty \le 1. $ This implies  \[ -\frac1n \widehat\gamma_j^T\bX_{\ttheta, -j} \bX_{\ttheta}\Thetah_{j,\cdot}^T \widehat\tau_j^2 + \lambda_j \|\widehat\gamma_j\|_1 = 0 \] Now 
	\begin{align*}
		\widehat\tau_j^2 &= \frac1n \|\bX_{\ttheta, j}- \bX_{\ttheta, -j}\widehat{\gamma}_j\|_2^2 + \lambda_j\|\widehat{\gamma}_j\|_1\\
		& = \frac1n\bX_{\ttheta, j}^T \bX_{\ttheta}\Thetah_{j,\cdot}^T \widehat\tau_j^2  -\frac1n \widehat\gamma_j^T\bX_{\ttheta, -j} \bX_{\ttheta}\Thetah_{j,\cdot}^T \widehat\tau_j^2 + \lambda_j \|\widehat\gamma_j\|_1\\
		& =  \frac1n\bX_{\ttheta, j}^T \bX_{\ttheta}\Thetah_{j,\cdot}^T \widehat\tau_j^2.
	\end{align*}
	 Hence $ \frac1n\bX_{\ttheta, j}^T \bX_{\ttheta} \hTheta_{j,\cdot}^T = 1. $ This implies  \[ \|e_j - \nabla^2 \ell(\ttheta) \Thetah_{j,\cdot}^T\|_\infty= \left\| \frac1{n\widehat\tau_j^2} \bX_{\ttheta, -j} (\bX_{\ttheta, j}- \bX_{\ttheta, -j}\widehat\gamma_j)\right \|_\infty \le \frac{\lambda_j}{\widehat\tau_j^2} \lesssim \frac1{\widehat\tau_j^2} \left( \frac{\log d}{n}\right)^{\frac12}.  \]
	 By \cite{vandegeer2014asymptotically}, Theorem 3.2, 
	 \[ |\widehat\tau_j^2 - \tau_j^2 |  \lesssim \left( \frac{\max \{K s^*,K^2s_0\}\log d}{n}\right)^{\frac12} \] with probability at least $ 1-o(d^{-1}). $ Thus $ \max_j \left\| e_j^T - \Thetah_{j,\cdot} \nabla^2 \ell(\ttheta)\right \|_\infty \lesssim_\Pr  \left( \frac{\log d}{n}\right)^{\frac12}  $ and by 6. in the Assumption \ref{assump:debiased-lasso-assumption} \[ \max_j \left\| e_j^T - \Thetah_{j,\cdot} \nabla^2 \ell(\ttheta)\right \|_\infty \|\ttheta -  \thetas\|_1 \lesssim_\Pr \frac{s_0\log d}{n}, \] where, $ \lesssim_\Pr $ denotes $ \lesssim $ with probability at least $ 1-o(d^{-1}). $
	 
	 We turn our attention to $ (II). $ We see that 
	 \begin{align*}
	 	\|\Thetah \bX^T\|_\infty \le &  \max_j \|\Thetah_{j,\cdot} \bX^T\|_\infty \lesssim \max_j \|\Thetah_{j,\cdot} \bX_{\thetas}^T\|_\infty\\
	 \le	&  \max_j \frac1{\widehat\tau_j^2} \|\bX_{\thetas, j}-\bX_{\thetas, -j}\widehat\gamma_j\|_\infty.
	 \end{align*}
Again by \cite{vandegeer2014asymptotically}, Theorem 3.2,  
	\begin{align*}
	\lesssim_\Pr & \max_j \frac1{\tau_j^2} \|\bX_{\thetas, j}-\bX_{\thetas, -j}\widehat\gamma_j\|_\infty\\
		\le & \max_j \frac1{\tau_j^2} \|\bX_{\thetas, j}-\bX_{\thetas, -j}\gamma_j\|_\infty\\
		& + \frac1{\tau_j^2}\|\bX_{\thetas, -j}\|_\infty \|\widehat\gamma_j-\gamma_j\|_1,
	\end{align*}
	which, by 1. in the Assumption \ref{assump:debiased-lasso-assumption} and \cite{vandegeer2014asymptotically}, Theorem 3.2, 
	  \[ \lesssim_\Pr K+ K \times \max\{Ks^*, K^2 s_0\}\sqrt{\frac{\log d}{n}}. \]
	 
	  Now, \begin{align*}
	  	& \frac 1n \sum_{ i= 1}^n \|\Thetah\bx_i\|_\infty |(\ddrho(\by_i, \bx_i^T\ttheta)- \ddrho(\by_i,\ta_i))\bx_i^T(\ttheta-\thetas)|\\
	  	& \lesssim_\Pr K \frac1n \sum_{ i= 1}^n  |(\ddrho(\by_i, \bx_i^T\ttheta)- \ddrho(\by_i,\ta_i))\bx_i^T(\ttheta-\thetas)|
	  \end{align*}
	  which, by 5. and 6. in the Assumption \ref{assump:debiased-lasso-assumption}, is at most \[ \lesssim\frac Kn \|\bX(\ttheta- \thetas)\|_2^2 \lesssim_\Pr \frac{K s_0\log d}{n}.  \] By union bound \[ \|\Delta\|_\infty \lesssim_\Pr \frac{(1 + K)s_0\log d}{n} = \cO \big(Ks_0 \log d/n\big). \] This shows assumption (A3).
	  
	  From 3. in the Assumption \ref{assump:debiased-lasso-assumption} that the minimum eigen-value of $ \Sigma_\thetas $ is bounded away from zero for all $ d, $ we get that, for some $ \kappa>0, $ which doesn't depend on $ d, $ the  largest eigne-value of $ \Theta^2 \le \kappa. $  Hence, \begin{align*}
	  	\|\Theta_{j,\cdot}\|_1^2 & \le s_j \|\Theta_{j,\cdot}\|_1^2 \\
	  	& =  s_j e_j^T \Theta^2 e_j, \ \text{where, } \{e_j\}_{j= 1}^d \text{ is the standard basis of }\reals^d\\
	  	& \le s_j \kappa.
	  \end{align*}
	  This shows 3. in lemma \ref{lemma:high-lebel-decomposition-debiased-lasso}
\end{sproof}

\begin{sproof}{Proof of Lemma \ref{lemma:local-estimate-tail-bound}}
We start with  the  linear decomposition,  1. in lemma \ref{lemma:high-lebel-decomposition-debiased-lasso}, which give us	
\[ \|\ttheta^d - \theta^*\|_\infty \le  \| \Thetah \nabla \ell (\theta^*)\|_\infty + \|\Delta\|_\infty. \] 

\begin{align*}
	\|\ttheta^d - \theta^*\|_\infty & \le  \| \Thetah \nabla \ell (\theta^*)\|_\infty + \|\Delta\|_\infty\\
	&\le \underbrace{\| \Theta\nabla \ell (\theta^*) \|_\infty}_{I} +  \underbrace{\|(\Thetah-\Theta)\nabla \ell (\theta^*)\|_\infty}_{II}  + \underbrace{\|\Delta\|_\infty}_{III}.
\end{align*}

 From the fact that $ \thetas $ is the unique minimizer of $ \Ex \rho(\by_i, \bx_i^T \theta) $ we get \[ \Ex \left[\drho(\by_i,\bx_i^T \thetas) \bx_i\right] =0. \] 
 
 Using 3. in lemma \ref{lemma:high-lebel-decomposition-debiased-lasso} and 7. in the Assumption \ref{assump:debiased-lasso-assumption} we have  $  \|\Theta \bx_i \drho(\by_i,\bx_i)\|_\infty\le  MK_3\sqrt{s^*} $ for some $M>0.$ Using  Bernstein's inequality we get,

\[ \Pr \left(\left| \sum_{ i= 1}^n \Theta_{j,\cdot} \bx_i \drho (\by_i,\bx_i^T\thetas)  \right|>t \right) \le 2\exp\left(- \frac{\frac12t^2}{n\sigma_j^2 + \frac13 \sqrt{s^*} MK_3t }  \right), \] where, $ \sigma_j^2 = \Theta_{j,\cdot} \Ex \left[ \nabla \ell(\thetas)\nabla \ell(\thetas)^T \right] \Theta_{j,\cdot}^T. $ For $ t\le \frac{3n\sigma_j^2}{\sqrt{s^*}MK_3} $ we get sub-Gaussian bound 
\[ \Pr \left(\left| \sum_{ i= 1}^n \Theta_{j,\cdot} \bx_i \drho (\by_i,\bx_i^T\thetas)  \right|>t \right) \le 2\exp\left( -\frac{t^2}{4n\sigma_j^2} \right). \]
By union bound \[ \Pr\left( \|\Theta\nabla \ell(\thetas)\|_\infty > t \right) \le 2d \exp\left( -\frac{nt^2}{4\sigma^2} \right), \] 
where, $ \sigma^2=\max_j \sigma_j^2.  $ Setting $ t = \sigma\sqrt{\frac{12\log d}{n}} $ we get $\Pr \left(\|\Theta\nabla \ell(\thetas)\|_\infty > \sigma\sqrt{\frac{12\log d}{n}}\right) \le \frac2{d^2} .$  Since, for sufficiently large $ n $ we have $ \sigma\sqrt{\frac{12\log d}{n}}\le\frac{3n\sigma_j^2}{\sqrt{s^*}MK_3},  $ such a choice for $ t $ is justified.

Since, $ \left\{ \bx_i \drho (\by_i,\bx_i^T\thetas)\right\}_{i=1}^n $ is bounded and zero mean random vectors, we get the following high probability bound for $ \|\nabla \ell(\thetas)\|_\infty: $
\[ \text{For some } c_4>0,\  \Pr \left( \|\nabla\ell(\thetas) \|_\infty >  c_4\sqrt{\frac{\log d}{n}}\right) \le \frac1{d^2}. \]

Let $ A $ the event that the followings hold:
\begin{enumerate}
	\item $  \|\Delta\|_\infty  >c_1  Ks_0\log d/n, $
	
	\item $ \max_j \|\Thetah_{j,\cdot}-\Theta_{j,\cdot}\|_1 \le c_2  \max\{Ks^*, K^2 s_0\} \sqrt{\log d /n}, $
	
	\item $  \|\Theta\nabla \ell(\thetas)\|_\infty \le \sigma\sqrt{\frac{12\log d}{n}}, $
	\item  $ \|\nabla\ell(\thetas) \|_\infty \le   c_4\sqrt{\frac{\log d}{n}}. $
\end{enumerate}
Then $ \Pr(A) \ge 1 - o(d^{-1}). $ Under the event $ A, $ $ (I) \le \sigma\sqrt{\frac{12\log d}{n}}, \text{ and } (III) \le c_1 Ks_0\log d/n. $ We also notice that $ (II) \le \max_j \|\Thetah_{j,\cdot}-\Theta_{j,\cdot}\|_1\|\nabla\ell(\thetas) \|_\infty \le c_2 c_ 4 \max\{Ks^*, K^2 s_0\} \log d/n.  $

Hence, $ \Pr \left(\|\ttheta^d - \theta^*\|_\infty \le \sigma \sqrt{\frac{12\log d}{n}}  + c \max\{Ks^*, K^2 s_0\} \frac{s^*\log d}{n}\right) \ge 1- o(d^{-1}). $

\end{sproof}

\begin{sproof}{Proof of Result \ref{result:global-uniqueness}}
	We shall prove uniqueness of the global parameter for each co-ordinates separately. Fix $j\in[d].$
	For simplicity of the notation we ignore the index $ j, $ and denote $ I_j, (\thetas_k)_j,\mu_j, $ and, $ \eta_j $ as $ I, \thetas_k,\mu $ and, $ \eta, $ respectively.
	
	For $ x\in\reals $ let us define $ I_x = \{k : \thetas_k\in [x-\eta,x+\eta]\}, $ and  $ N_x = \sum_{k\in I_x} n_k. $

	When  $ I_x =  I, $ we have 
\[ \sum_{k=1}^m \Psi_{\eta}\left(\thetas_{k} - x \right) = (m - |I|)\eta^2 + \sum_{ k \in  I} (\thetas_k-x)^2  \] which	is uniquely minimized at $ x=\mu. $  
	
	Under the case $ I_x \neq I $ we must have $ |\mu -x|>\delta. $ We consider the case $ \delta<|\mu -x|\le 3\delta.$ $ |\mu -x|>\delta $ implies either $ x $ is larger or smaller than  $ \thetas_k $'s which are in $ I $. We assume that $ x>\thetas_k $ for all $ k\in I. $ The other case will follow similarly. In that case, \[ \left(\thetas_{\text{max}}- \thetas_k\right) \le (2\delta)\wedge (x-\thetas_{k}) , \] for all $ k\in I, $ where $ \thetas_{\text{max}} = \max_{j\in I}\thetas_k. $ Since, $ \eta>2\delta, $ we have
	 \[ \sum_{k\in I} \Psi_{\eta}\left(\thetas_{k} - x \right) \ge \sum_{k\in I} \Psi_{\eta}\left(\thetas_{k} -\thetas_{\text{max}}  \right).  \]
	  Notice that, for $ |x-\mu|\le 3\delta $ we have $ \Psi_\eta(\thetas_k-x) = \eta^2 $ whenever $ k\notin I. $ Hence, 
	\[ \sum_{k=1}^m \Psi_{\eta}\left(\thetas_{k} - x \right) \ge \sum_{k=1}^m \Psi_{\eta}\left(\thetas_{k} -\thetas_{\text{max}}  \right) > \sum_{k=1}^m \Psi_{\eta}\left(\thetas_{k} -\mu  \right),  \] where the last inequality follows form the fact that $ I_{x} = I $ for $ x = \thetas_{\text{max}}. $
	
	Now, consider the case $ |x-\mu|>3\delta, $ under which we have $ I_x \cap I = \emptyset. $

	 Then 
	\[ 	\sum_{k=1}^m \Psi_{\eta}\left(\thetas_{k} - x \right)  > 4|I|\delta^2 \] and 
 \[ \sum_{k=1}^m \Psi_{\eta}\left(\thetas_{k} - \mu \right) = (m- |I|)\eta^2 + \sum_{ k \in  I}  (\thetas_{k}-\mu)^2 . \] Since, the range of $ \{\thetas_k\}_{k\in I} $ is less than $ 2\delta, $ we have $ \sum_{ k \in  I}  (\thetas_{k}-\mu)^2 \le \delta^2 |I|.  $ This implies 
 \[ 
 \begin{aligned}
 \sum_{k=1}^m \Psi_{\eta}\left(\thetas_{k} - \mu \right)& \le (m- |I|)\eta^2 +  \delta^2 |I|\\
 &\le 4(m- |I|) \delta^2 + |I|\delta^2\\
 &\le (4m-3|I|)\delta^2\le 4|I|\delta^2 
 \end{aligned}
 \] and hence, $ \mu  $ is the unique minimizer in this case. 
\end{sproof}

\begin{sproof}{Proof of Lemma \ref{lemma:integrated-hp-bound}}

We start with  remark \ref{remark:hp-bound-local} that under the assumption $ (ii) $ for sufficiently large $ n_k $ we have  
\[ \|\ttheta^d_{k} - \thetas_k\|_\infty \le 2\sigma  \sqrt{\log d/n_k}   \] with probability at least $ 1-o(d^{-1}). $
By union bound, with probability $ 1-o(1), $ simultaneously for all $ k $ we have \[ \|\ttheta^d_{k} - \thetas_k\|_\infty \le 2\sigma \sqrt{\log d/n_k}.  \] 

 If $ 2\sigma \sqrt{\log d/n_k} \le \frac14(\eta_j-2\delta)\wedge (\eta_j-\delta_2/2) $ for all $ k, $ then there exists $ \hat \delta  $ and $ \hat \delta_2 $ such that the following holds, \[ 2\delta\le 2\hat\delta <\eta_j<\hat{\delta}_2/2 \le \delta_2/2. \] and assumptions \ref{assump:cluster-parameter} holds with $ \hat{\delta} $ and $ \hat{\delta}_2. $
 Hence, by result \ref{result:global-uniqueness} we have $ (\ttheta_0)_j = \frac{\sum_{ k \in  I_j} (\ttheta_k^d)_j}{|I_j|}.  $ 
 
 Let  $  \gamma^{(j)}_k = \frac{1}{|I_j|}\indicator_{I_j}(k) ,$ where, $ \indicator_A $ is the indicator function over the set $ A. $ From Lemma \ref{lemma:high-lebel-decomposition-debiased-lasso}
 \begin{align*}
 	(\ttheta_0)_j & = \sum_{ k= 1}^m \gamma^{(j)}_k (\ttheta^d_k)_j\\
 	& = \sum_{ k= 1}^m \gamma^{(j)}_k \left((\thetas_k)_j - (\Thetah_k)_{j\cdot} \nabla  \ell_k (\thetas_k ) + (\Delta_k)_j    \right)\\
 	& = (\thetas_0)_j -  \sum_{ k= 1}^m \gamma^{(j)}_k  (\Thetah_k)_{j\cdot} \nabla  \ell_k (\thetas_k ) +  	 \sum_{ k= 1}^m \gamma^{(j)}_k (\Delta_k)_j.
 \end{align*}
  
 Hence,
 \begin{align*}
 	\left|(\ttheta_0)_j-(\thetas_0)_j \right| \le & \left | \sum_{ k= 1}^m \gamma^{(j)}_k  (\Thetah_k)_{j\cdot} \nabla  \ell_k (\thetas_k ) \right |  + \left | \sum_{ k= 1}^m \gamma^{(j)}_k (\Delta_k)_j \right | \\
 	\le & \left | \sum_{ k= 1}^m \gamma^{(j)}_k  (\Theta_k)_{j\cdot} \nabla  \ell_k (\thetas_k ) \right | + \left | \sum_{ k= 1}^m \gamma^{(j)}_k  \left((\Thetah_k)_{j\cdot} -  (\Theta_k)_{j\cdot}\right) \nabla  \ell_k (\thetas_k ) \right |  \\
 	&+ \left | \sum_{ k= 1}^m \gamma^{(j)}_k (\Delta_k)_j \right |\\
 	\le & \underbrace{\left | \sum_{ k= 1}^m \gamma^{(j)}_k  (\Theta_k)_{j\cdot} \nabla  \ell_k (\thetas_k ) \right |}_{I}\\
 	&\quad+ \underbrace{\sum_{ k= 1}^m \gamma^{(j)}_k  \left( \|(\Thetah_k)_{j\cdot} -  (\Theta_k)_{j\cdot}\|_1 \| \nabla  \ell_k (\thetas_k )\|_\infty + |(\Delta_k)_j|   \right)}_{II}.
 \end{align*}

 To bound $ (I) $ we notice that $ \left\{  (\Theta_k)_{j,\cdot} \bx_{ki} \drho(\by_{ki},\bx_{ki}^T\thetas_{k})  \right\}_{i\in [n_k], k\in[m]} $ are zero mean random variables bounded by $ M\sqrt{s^*}>0. $ By Bernstein inequality,
  \[ \Pr\left( \left | \sum_{ k= 1}^m \gamma^{(j)}_k  (\Theta_k)_{j\cdot} \nabla  \ell_k (\thetas_k ) \right |>t   \right) \le 2\exp\left( - \frac{\frac12 t^2}{\sum_{ k= 1}^m \left(\gamma_k^{(j)}\right)^2 \frac{\sigma_{jk}^2}{n_k}  + \frac13MtM\sqrt{s^*} }  \right), \] where $ \sigma_{jk}^2 = \Ex \left[ (\Theta_k)_{j,\cdot} \nabla \ell_k(\thetas_k) \nabla \ell_k(\thetas_k)^T (\Theta_k)_{j,\cdot}^T  \right]. $ 
  Let $ a_{j}^2 = \sum_{ k= 1}^m \left(\gamma_k^{(j)}\right)^2 \frac{\sigma_{jk}^2}{n_k} = \frac1{|I_j|^2}  \sum_{ k= 1}^m \frac{\sigma_{jk}^2}{n_k}. $ For $  t \le \frac{3a_{jk}^2}{M\sqrt{s^*}}$ we get a probability bound with subgaussian tail 
  \[ \Pr\left( \left | \sum_{ k= 1}^m \gamma^{(j)}_k  (\Theta_k)_{j\cdot} \nabla  \ell_k (\thetas_k ) \right |>t   \right) \le 2\exp\left( -\frac{ t^2}{4a_{j}^2}  \right). \] 
  Letting $  t = 2a_{j} \sqrt{3\log d} $ we get \[ \left | \sum_{ k= 1}^m \gamma^{(j)}_k  (\Theta_k)_{j\cdot} \nabla  \ell_k (\thetas_k ) \right | \le 2a_{j} \sqrt{3\log d} \] with probability at least $ 1- d^{-2}. $ Taking union bound over all co-ordinates we get the above bound for each co-ordinates with probability at least $  1- d^{-1}. $

  We shall apply Bennett's concentration inequality \ref{lemma:bennett} on $ (I). $ Fix some $ j\in[d]. $ For $ k\in[m], \ i \in[n_k] $ define $ g_{ik} = \frac{\Theta_{j,\cdot} \nabla\rho(\by_{ki},\bx_{ki}^T \thetas_k)}{MK_3\sqrt{s^*}}, $ and $ a_{ki} = \frac{\gamma^{(j)}_kMK_3\sqrt{s^*}}{n_k}. $ Then $ \Ex g_{ki} = 0 $ and $ \Ex g_{ki}^2 \le \frac{\sigma^2}{M^2K_3^2 s^*} := \delta  $ where, $ \sigma^2 \ge \max_{k \in [m], j\in[d]} \Theta_{j,\cdot} \Ex [\nabla\rho(\by_{ki},\bx_{ki}^T \thetas_k)\nabla\rho(\by_{ki},\bx_{ki}^T \thetas_k)^T] \Theta_{j,\cdot}^T,  $ from assumption (A5). Also, $ |g_{ki}| \le \frac{\|\Theta_{j,\cdot}\|_1 \|\nabla\rho(\by_{ki},\bx_{ki}^T \thetas_k)\|_\infty}{MK_3\sqrt{s^*}} \le 1. $  Let us define $ a \coloneqq (a_{ki}, k \in [m], \ i \in [n_k] ) .$ Now we apply Bennett's inequality \ref{lemma:bennett} on $ \sum_{ k= 1}^m \gamma^{(j)}_k  (\Theta_k)_{j\cdot} \nabla  \ell_k (\thetas_k ) = \sum_{ k =1}^m \sum_{ i= 1}^{n_k}a_{ki}g_{ki}.  $ We notice that for $ t \le \frac{\delta\|a\|_2^2}{\|a\|_\infty }e^2 $ we have the following bound:
  \[ \Pr \left( \left| \sum_{ k =1}^m \sum_{ i= 1}^{n_k}a_{ki}g_{ki} \right| > t  \right) \le 2\exp \left(-\frac{t^2}{2\delta\|a\|_2^2e^2}\right). \] For $ t = 2 \sqrt{\delta \log d} \|a\|_2 e $ we get \[ \Pr \left( \left| \sum_{ k =1}^m \sum_{ i= 1}^{n_k}a_{ki}g_{ki} \right| > 2 \sqrt{\delta \log d} \|a\|_2 e  \right) \le \frac2{d^2}. \] We need to confirm that such a choice of $ t $ is valid, \ie, \ $  2 \sqrt{\delta \log d} \|a\|_2 \le \frac{\delta\|a\|_2^2}{\|a\|_\infty }e^2 $ or $ \|a\|_\infty \sqrt{\log d} \le c \sqrt{s^*}\|a\|_2 $ for some $ c  $ independent of $ m, d $ and $ n_k , \in [m]. $ We notice that 
  \[ \|a\|_2 = \frac{MK_3\sqrt{s^*}}{|I_j|}\sqrt{\sum_{ k \in  I_j}\frac1{n_k}} \ge \frac{1}{\sqrt{|I_j| n_{\max}}}, \text{and } \|a\|_\infty \le \frac{MK_3\sqrt{s^*}}{n_{\min}|I_j|} \]
  and hence, \[ \frac{\sqrt{s^*}\|a\|_2}{\|a\|_\infty \sqrt{\log d}} \ge \sqrt{\frac{s^*n_{\min}^2|I_j|}{n_{\max}\log d}} \ge \sqrt{\frac{n_{\min}^2}{{s^*}^2n_{\max}\log d}}. \] From assumption $ (iii) $ we get that $ \sqrt{\frac{n_{\min}^2}{{s^*}^2n_{\max}\log d}} $ asymptotically goes to infinity. Hence, such a choice of $ t  $ is valid.

  We notice that $ \|a\|_2 $ is dependent of $ j. $ We define $ b_j =: 2 \sqrt{\delta} \|a\|_2 e \le \frac{2e\sigma}{|I_j|}\sqrt{\sum_{ k \in  I_j}\frac1{n_k}}. $
  
  To get a bound for $  (II) $ we see that $ \max_j \|(\Thetah_k)_{j\cdot} -  (\Theta_k)_{j\cdot}\|_1 \lesssim  \max\{K s_k^*, K^2 s_{k,0}\} \sqrt{\log d/n_k } , $ and $  \|\Delta_k\|_\infty \lesssim K s_{k,0}\log d /n_k.   $ We  also notice that $ \{\bx_{ki} \drho (\by_{ki}, \bx_{ki}^T \thetas_k)  \}_{i \in [n_k], k\in [m]} $ are mean zero bounded random vectors. Hence, by Bernstein inequality we can show that for sufficiently large $ n_k $'s we can get a bound \[ \|\nabla \ell_k(\thetas_k)\|_\infty \lesssim \sqrt{\frac{\log d}{n_k}} \] with probability at least $  1- d^{-2}. $ Hence, by union bound we get \[ \|(\Thetah_k)_{j\cdot} -  (\Theta_k)_{j\cdot}\|_1 \| \nabla  \ell_k (\thetas_k )\|_\infty + |(\Delta_k)_j| \lesssim \frac{\max\{K s_{\max}^*, K^2 s_{0,\max}\}\log d}{n_{\text{min}}} \text{ for all }j,k  \] with probability at least $  1-o(1). $
  
  Again by union bound, we get 
  \begin{equation}
  	|(\ttheta_0)_j - (\thetas_0)_j | \le b_j \sqrt{\log d} + C \frac{\max\{K s_{\max}^*, K^2 s_{0,\max}\}\log d}{n_{\text{min}}}, \text{ for all } j
  	\label{eq:bound-theta-0}
  \end{equation}
   with probability at least $ 1-o(1), $ where $ C>0 $ is some constant. 
   Since, $b_j \ge \sigma' \sqrt{\frac{1}{mn_{\max}}},$ for some $\sigma',$ from assumption $(iv)$ we get $\frac{\max\{K s_{\max}^*, K^2 s_{0,\max}\}\log d}{n_{\min}} \lesssim b_j \sqrt{\log d}. $ Hence, for sufficiently large $\frac{n^2_{\min}}{n_{\max}}$,  the above high probability bound reduces to
  \[ |(\ttheta_0)_j - (\thetas_0)_j | \le b_j \sqrt{\log d} \] with probability at least $ 1-o(1). $ 
  
  Under the assumption $ (iii) $ we have $ b_j \le 2e\sigma\sqrt{\frac{1}{|I_j|^2}\sum_{ k \in  I_j}\frac1{n_k}}, $ which gives us the first result. 
  
  For the second inequality we get a bound for each $ \ttheta_{k}^d $ of the form:
  \[ \|\ttheta_k^d - \thetas_{k}\|_\infty \le 2 \sigma\sqrt{\frac{2\log d}{n_k}} + C_k \frac{\max\{K s_{k}^*, K^2 s_{k,0}\}\log d}{n_k}, \text{ for each }k \] with probability at least $ 1-o(1). $ 
  Form the above bound and  \eqref{eq:bound-theta-0} we get \[ 
  \begin{aligned}
  \|\tdelta_k-\deltas_k\|_\infty  & \le 2\sigma \sqrt{\frac{2\log d}{n_k}} + C_k \frac{\max\{K s_{\max}^*, K^2 s_{0,\max}\}\log d}{n_k}\\
  & ~~~+ 2e\sigma\sqrt{\frac{\log d}{|I_j|^2}\sum_{ k \in  I_j}\frac1{n_k}} + C \frac{\max\{K s_{k}^*, K^2 s_{k,0}\}\log d}{n_{\text{min}}}.
  \end{aligned}
   \] hence, for sufficiently large $ n_k $'s, we have \[ \|\tdelta_k-\deltas_k\|_\infty \le 4\sigma \sqrt{\frac{\log d}{n_k}} \text{ for all }k,\] with probability at least $ 1-o(1). $

\end{sproof}

\begin{result}
	For each $ k\in[m] $ let $ \|\widetilde\theta_{k}-\thetas_k\|_\infty \le \xi,  $ where $ \xi < \min_j \frac14(\eta_j - 2\delta_j)\wedge (\delta_{2j}/2 -\eta_j). $ Then the assumptions \ref{assump:cluster-parameter} are satisfied for $ \{\widetilde\theta_{k}\}_{k=1}^m $ for $ \{\widetilde{\delta}_j = \delta_j + \xi/2 \}_{j =1}^d $, $ \{ \widetilde{\delta}_{2j} = \delta_{2j} -2\xi  \}_{j=1}^d $ and $ \{\eta_j\}_{j = 1}^d. $
\end{result}

\begin{proof}
	Let $ j \in [d]. $ Consider the same $ I_j $ as in assumption \ref{assump:cluster-parameter}, $ (i). $ Let $ \widetilde{\mu} = \frac1{I_j}\sum_{ k \in  I_j} \widetilde{\theta}_k. $ Notice that for $ \|\widetilde\theta_{k}-\thetas_k\|_\infty \le \xi  $ we have $ |\widetilde{\mu}-\mu|\le \xi. $   We further notice that $ 2(\delta_j + 2\xi) < \eta_j < \frac{\delta_{2j}-2\xi}2. $
	This implies assumption  \ref{assump:cluster-parameter}, $ (ii) $ and $ (iii) $ are satisfied with $ \widetilde{\delta}_j = \delta_j + 2\xi $ and $ \widetilde{\delta}_{2j} = \delta_{2j} -2\xi. $
\end{proof}

\begin{result}[Bernstein's inequality]
	Let $ \{\bx_i\}_{i=1}^n $ be independent random variables with $ \Ex \bx_i = 0,\ \Ex \bx_i ^2 \le \sigma^2 . $ Suppose for some $ \lambda_0>0 $ and $ M>0 $ the following holds for all $ i \in [n]: $
	\[ \Ex \exp (\lambda_0|\bx_i|) \le M. \] Then for any $ a \in \reals^n $ and $ t \le \frac{M\sigma^2 \|a\|_2^2\lambda_0 }{\|a\|_\infty} $ the following holds:\[ \Pr\left(\left| \sum_{ i= 1}^n a_i\bx_i  \right| > t  \right) \le 2 \exp\left(- \frac{t^2}{2\|a\|_2^2 \sigma^2 M }  \right). \]
	\label{result:bernstein}
\end{result}

\begin{lemma}[Bennett's Inequality]
	Let $X_1,X_2,\ldots X_n$ be independent random variables such that (1) $\Ex X_i = 0,$ (2) $\Ex X_i^2 \le  \delta ,$ and (3) $|X_i|\le 1.$ Then for any $a_1,a_2,\ldots ,a_n\in \reals$ and for any $t>0$
	\[
	\Pr \left(\left|\sum_{i=1}^n a_iX_i\right|>t\right) \le 
	\begin{cases}
	2\exp \left(-\frac{t^2}{2\delta\|a\|_2^2e^2}\right), &\text{if}\ t\le t^*\\
	2\exp \left(-\frac{t}{4\|a\|_\infty}\log\left(\frac{t\|a\|_\infty}{\delta\|a\|_2^2}\right)\right), &\text{if}\ t>t^*
	\end{cases}
	\]
	where $t^* = \frac{\delta\|a\|_2^2}{\|a\|_\infty }e^2.$
	\label{lemma:bennett}
\end{lemma}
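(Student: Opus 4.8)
The plan is to use the exponential-moment (Chernoff) method on $S := \sum_{i=1}^n a_i X_i$, bounding the moment generating function once and then choosing the free parameter $\lambda>0$ differently in the two regimes $t\le t^*$ and $t>t^*$. The first step is to show that for every $\lambda>0$,
\[
\Ex e^{\lambda S}\ \le\ \exp\!\left(\tfrac{\delta\|a\|_2^2\lambda^2}{2}\,e^{\lambda\|a\|_\infty}\right).
\]
This comes from a term-by-term expansion of $\Ex e^{\lambda a_i X_i}$: since $\Ex X_i=0$ and $|X_i|\le 1$ we have $|\Ex X_i^k|\le \Ex X_i^2\le\delta$ for $k\ge 2$, so $\Ex e^{\lambda a_i X_i}\le 1+\delta\bigl(e^{|\lambda a_i|}-1-|\lambda a_i|\bigr)\le \exp\bigl(\delta(e^{|\lambda a_i|}-1-|\lambda a_i|)\bigr)$; multiplying over $i$ (independence) and using the elementary inequality $e^u-1-u\le \tfrac{u^2}{2}e^u$ for $u\ge 0$, together with $|a_i|\le\|a\|_\infty$ and $\sum_i a_i^2=\|a\|_2^2$, gives the display. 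Markov's inequality then yields $\Pr(S>t)\le \exp\bigl(-\lambda t+\tfrac{\delta\|a\|_2^2\lambda^2}{2}e^{\lambda\|a\|_\infty}\bigr)$ for every $\lambda>0$; since $-X_i$ satisfies the same three hypotheses, the identical bound holds for $\Pr(-S>t)$, and a union bound produces the factor $2$ in the final statement.

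In the sub-Gaussian regime $t\le t^*=\tfrac{\delta\|a\|_2^2}{\|a\|_\infty}e^2$, I would take $\lambda=\tfrac{t}{\delta\|a\|_2^2 e^2}$. The point of the threshold $t^*$ is precisely that this forces $\lambda\|a\|_\infty\le 1$, hence $e^{\lambda\|a\|_\infty}\le e$; substituting, the exponent is at most $\tfrac{t^2}{\delta\|a\|_2^2 e^2}\bigl(-1+\tfrac{1}{2e}\bigr)\le -\tfrac{t^2}{2\delta\|a\|_2^2 e^2}$, which is the claimed first-case bound.

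In the Poisson-tail regime $t>t^*$, note that $t>t^*$ forces $u:=\tfrac{t\|a\|_\infty}{\delta\|a\|_2^2}>e^2>1$, so I may take $\lambda=\tfrac{1}{2\|a\|_\infty}\log u>0$, which gives $e^{\lambda\|a\|_\infty}=\sqrt u$. The elementary inequality $\log u\le 2\sqrt u$ then yields $\tfrac{\delta\|a\|_2^2\lambda^2}{2}e^{\lambda\|a\|_\infty}\le \tfrac{\lambda t}{2}$, so the exponent is at most $-\lambda t+\tfrac{\lambda t}{2}=-\tfrac{\lambda t}{2}=-\tfrac{t}{4\|a\|_\infty}\log\!\bigl(\tfrac{t\|a\|_\infty}{\delta\|a\|_2^2}\bigr)$, the claimed second-case bound.

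The statement is essentially classical Bennett, $\Pr(S>t)\le\exp(-\tfrac{v}{b^2}h(\tfrac{bt}{v}))$ with $v=\delta\|a\|_2^2$, $b=\|a\|_\infty$, $h(x)=(1+x)\log(1+x)-x$; all that remains is packaging it into the two explicit regimes with these particular constants. The only mildly delicate point is selecting $\lambda$ in each regime so the remainder term is dominated by (half of) the linear term — this is exactly where $t^*$ and the inequality $\log u\le 2\sqrt u$ enter — but there is no real obstacle. One could instead quote classical Bennett and invoke $h(x)\ge\tfrac{x^2}{2(1+x/3)}$ for $x\le e^2$ and $h(x)\ge x\log x-x\ge\tfrac{x}{4}\log x$ for $x>e^2$; I would prefer the self-contained MGF computation above, which needs only the two elementary inequalities just mentioned.
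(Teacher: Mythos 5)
Your proof is correct and follows essentially the same route as the paper's: the identical MGF bound $\Ex e^{\lambda S}\le\exp\bigl(\tfrac{\delta\|a\|_2^2\lambda^2}{2}e^{\lambda\|a\|_\infty}\bigr)$ followed by Chernoff with the same two choices of $\lambda$ (your $\log u\le 2\sqrt u$ in the Poisson regime is exactly the paper's $\lambda<e^\lambda$ applied at $\lambda=\tfrac12\log u$). The only cosmetic differences are that the paper normalizes $\|a\|_\infty=1$ and derives the MGF bound from the pointwise inequality $e^y\le 1+y+\tfrac{y^2}{2}e^{|y|}$ rather than your term-by-term moment expansion.
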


\begin{proof}
	Without loss of generality assume $\|a\|_\infty=1.$ Let $\lambda>0.$ Set $Y=\sum_{i=1}^na_iX_i.$ Then 
	\[
	\Ex \exp(\lambda Y)=\prod_{i=1}^{n} \Ex \exp (\lambda a_i X_i).
	\]
	For any $y\in\reals,$ $e^y\le 1+y+\frac{y^2}2 e^{|y|}.$
	
	Hence
	\begin{align*}
	\Ex \exp (\lambda a_i X_i)&\le 1+ \Ex \frac{\lambda^2 a_i^2 X_i^2}{2} e^{|\lambda a_i X_i|}\\
	&\le 1+\frac{\lambda^2 a_i^2 \delta}{2}e^\lambda\\
	&\le \exp\left(\frac{\lambda^2 a_i^2 \delta}{2}e^\lambda\right)
	\end{align*}
	Therefor $\Ex \exp(\lambda Y)\le \exp\left(\frac{\lambda^2 \|a\|_2^2 \delta}{2}e^\lambda\right).
	$

	By Markov's inequality 
	\[
	\Pr(Y>t) \le \exp\left(\frac{\lambda^2 \|a\|_2^2 \delta}{2}e^\lambda-\lambda t\right).
	\]
	We have to minimize $\phi(\lambda)=\lambda t-\frac{\lambda^2 \|a\|_2^2 \delta}{2}e^\lambda.$
	
	(1) Consider the case $\lambda\le 2 .$ Then $\phi(\lambda)\ge \lambda t-\frac{\lambda^2 \|a\|_2^2 \delta}{2}e^2.$ minimization with respect to $ \lambda $ gives us $ \phi(\lambda)\ge \lambda t-\frac{\lambda^2 \|a\|_2^2 \delta}{2}e^2 \ge \frac{t^2}{2\|a\|_2^2 \delta e^2}. $ Since the minimum attains at $ \lambda = \frac{t}{\|a\|_2^2 \delta e^2}, $ under the condition $ \lambda \le 2 $  we have $ t \le \delta \|a\|_2^2 2e^2. $ Hence, for $ t \le \delta \|a\|_2^2 2e^2 $ we get the upper bound \[ \Pr(Y>t) \le \exp\left(- \frac{t^2}{2\|a\|_2^2 \delta e^2} \right). \]

	(2) Consider the case of $\lambda>1.$ This implies $ \lambda<e^\lambda.$ Then $\phi(\lambda)\ge \lambda t-\frac{\lambda \|a\|_2^2 \delta}{2}e^{2\lambda}=\lambda\left(t-\frac{\|a\|_2^2 \delta}{2}e^{2\lambda}\right).$ Choose $\lambda_2$ such that $\frac{\|a\|_2^2 \delta}{2}e^{2\lambda_2}=\frac t2.$ Then $ \phi(\lambda)\ge \lambda_2 t/2 = \frac t4 \log\left(\frac t{\|a\|_2^2 \delta}  \right). $ Also from  $ \lambda_2>1 $ we get $ \frac12 \log\left(\frac t{\|a\|_2^2 \delta}  \right) > 1  $ which gives us $ t \ge \|a\|_2^2 \delta e^{2}. $ Hence, for $ t \ge \|a\|_2^2 \delta e^{2} $ we get the upper bound \[  \Pr(Y>t) \le \exp\left(- \frac t4 \log\left(\frac t{\|a\|_2^2 \delta}  \right) \right).\]
	
	So, there is an overlap in the interval $ (\delta \|a\|^2 e^2, 2\delta \|a\|_2^2 e^2) $ under which we get both the tails. We just choose $ t^* = \delta \|a\|_2^2 e^2. $
\end{proof}

\section{A weighted data integration}
\label{sec:weighted-mrlasso}
In this section we consider a setup of weighted data integration where the weight  $w = (w_1, \dots, w_m)$ ($\sum_{k=1}^m w_k = 1$) are applied in integration step the debiased lasso estimators: 
\begin{equation}
    \textstyle 
    	\left(\ttheta_0^w\right)_j = \argmin_{x} \sum_{k=1}^{m}  w_k\Psi_{\eta_j}\left(\left(\ttheta^d_k\right)_j-x\right).
	\label{eq:averaged-debiased-lasso-wt}
\end{equation}
The corresponding global parameter $(\theta_0^{w, *})$ is identified via weighted re-descending loss: 
\begin{equation}
    \textstyle
    	\left(\theta_0^* \right)_j = \argmin_{x\in\reals} \sum_{ k= 1}^m w_k\Psi_{\eta_j} (\left(\theta_{k}^*\right)_j-x),
 	\label{eq:globalParameter-nk}
\end{equation} and the unique identification of such global parameter is guaranteed by a similar set of structural conditions as in Assumption \ref{assump:cluster-parameter}. We state the conditions below:  

\begin{assumption}
\begin{enumerate}[label = (P\arabic*)]
	\item 
		Let $ I_j $ be the set of indices for $ (\thetas_{k})_j $'s which are considered as non-outliers. We assume $(\sum_{k \in I_j} w_k) \ge 4/7$.
		\item  Let $ \mu_j = \big(\sum_{ k \in  I_j}w_k(\thetas_{k})_j\big)/(\sum_{ k \in  I_j}w_k) . $ Let $  \delta  $ be the smallest positive real number such that $ (\thetas_k)_j \in [\mu_j-\delta,\mu_j+\delta] $ for all $ k\in I_j. $
		 We assume that none of the  $ (\thetas_k)_j $'s are in the intervals $ [\mu_j-5\delta,\mu_j-\delta) $ or $ (\mu_j+\delta,\mu_j+5\delta]. $ 
		\item Let $ \delta_2 = \min_{k_1\in I_j, k_2\notin I_j} |(\thetas_{k_1})_j - (\thetas_{k_2})_j|. $ Clearly, $ 4\delta<\delta_2. $ We choose $\eta_j$ such that $ 2\delta < \eta_j <\delta_2/2. $
\end{enumerate}
\label{assump:cluster-parameter-nk}
\end{assumption}

\begin{result}
	Under the conditions (P1)-(P3) in Assumption \ref{assump:cluster-parameter-nk}, the objective functions $ \sum_{ k= 1}^m w_k\Psi_{\eta_j}\left((\thetas_{k})_j - \theta \right) $ is uniquely minimized at  $ (\theta_0^{*, w})_j = \mu_j, $ for all $j.$
	\label{result:global-uniqueness-nk}
\end{result}

\begin{proof}
This proof is exactly same as the proof of Result \ref{result:global-uniqueness} if we assume that there are $w_k$ proportitions of $\thetas_k$.
\end{proof}

\begin{theorem}
Let the followings hold:
\begin{enumerate}[label=(\roman*)]
	\item For any $ j, $ $ \left\{(\thetas_{k})_j\right\}_{k=1}^n $ satisfy the assumption \ref{assump:cluster-parameter-nk}.
	\item The datasets $ \left\{\cD_k\right\}_{k=1}^m $ satisfy  the Assumption \ref{assump:debiased-lasso-assumption} uniformly over $ k $
	\item Let  $s_{k, 0}$ be sparsity for $\theta_k^*$ and $s_{k}^*$ being the maximum sparsity for the rows of the inverse of $\Ex[\ell_k(\theta_k^*)]$.  Define $s_{0, \max} = \max_k s_{k, 0}$,  $s_{\max}^* = \max_k s_k^*$ and $s_{\operatorname{eff}} = \max\{K s_{\max}^*, K^2 s_{0,\max}\}$.  We assume $m = o\left( \frac{n^2_{\min}}{s_{\operatorname{eff}}^2n_{\max}\log d }\right),$ where, $ n_{\max} = \max_{k \in [m]} n_k, $ and $n_{\min} = \min_{k \in [m]} n_k.$ 
\end{enumerate}

Then for   sufficiently large $ n_k $ we have the following bound for the co-ordinates of $ \ttheta_0 $ and $ \ell_\infty $ bound for $ \tdelta_k $:
\begin{equation} \textstyle
\begin{aligned}
 \left| (\ttheta_0^w)_j - (\theta_0^{w, *})_j \right| \le  4\sigma\sqrt{\frac{\log d}{(\sum_{k \in I_j}w_k)^2}\sum_{ k \in  I_j}\frac{w_k^2}{n_k}}, \text{ for all }j,\\
 \text{ and } \| \tdelta_k^w - \delta_k^{w, *} \|_\infty  \le  4\sigma\sqrt{\frac{\log d}{n_k}}, \text{ for all }k,\\
 \tdelta^w_k = \ttheta_k^d - \ttheta^w_0,~~ \delta_k^{w, *} 
 = \theta_k^{*} - \theta_0^{w, *} 
 \end{aligned}
\end{equation} with probability at least $ 1-o(1). $
	\label{eq:averaged-debiased-lasso-wt}


\label{th:integrated-hp-bound-nk}
\end{theorem}

\begin{proof}
The proof is similar to the proof of Lemma \ref{lemma:integrated-hp-bound} if we replace $  \gamma^{(j)}_k $ by $  \frac{w_k}{\sum_{k\in I_j}w_k}\indicator_{I_j}(k).$
\end{proof}

\section{Computational results: generalized linear model}

In this section we perform synthetic experiments similar to Section \ref{sec:computationalResults} on binary classification setup. The covariates are $d = 100$ dimensional and again generated from AR(1) model with correlation $\rho = 0.9$ and variance 0.25 ($x = (x_1, \dots. x_d)$, $x_1 \sim N(0, 0.25)$ and $x_j = \rho x_{j - 1} + \sqrt{1 - \rho^2} \eps_j$, $\eps_j \sim N(0, 0.25)$), and the response $y_k$ for $k$-th dataset is generated as:
\[
y_{k,i} \sim \text{Bernoulli} (\sigma(\bx_{k, i}^\top \beta_k^*)), ~~ \sigma(t) = \frac{e^t}{e^t + 1}\,.
\] We preform the debiased lasso on logistic regression to get the local estimates and keep rest of the setup exactly same as before. 

In Figure \ref{fig:l2-glm} compare the $\ell_2$-error of global parameter estimates with corresponding global parameters (note that they are different). The behaviors are similar to the linear models. For a fair comparison with the same baseline, we examine the prediction performance of the global parameters on new test data (not seen before), which one can find in Figure \ref{fig:error.glm}.
The results for the logistic regression model have similar behavior as the linear regression model.

\begin{figure}
	\centering
		\includegraphics[width=\textwidth]{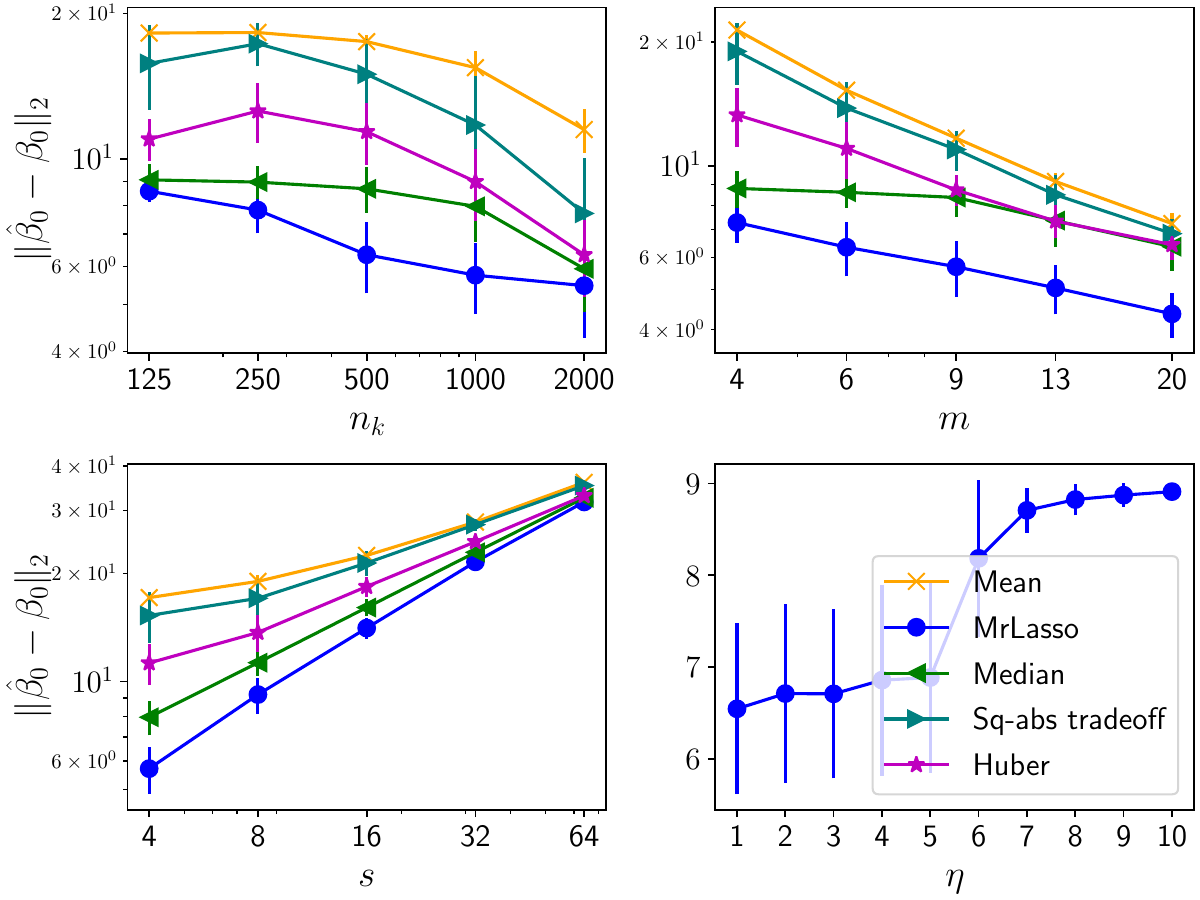}
	\caption{\textit{Upper left}: Errorbar plots for $\ell_2$-error in estimation of $\beta_0$ using different global estimators for varying sample sizes in each datasets ($n_k$) with $m = 5$ and $s = 5.$  \textit{Upper right}: For varying $m$ with $n_k = 200$ and $s = 5$. \textit{Lower left}: For different $s$ with $n_k = 200$ and $m = 5$.   \textit{Lower right}: For different $\eta$ with $n_k = 200$ and $s = 5$ and $m = 5.$}
	\label{fig:l2-glm}
	
\end{figure}

\begin{figure}
	\centering
		\includegraphics[width=\textwidth]{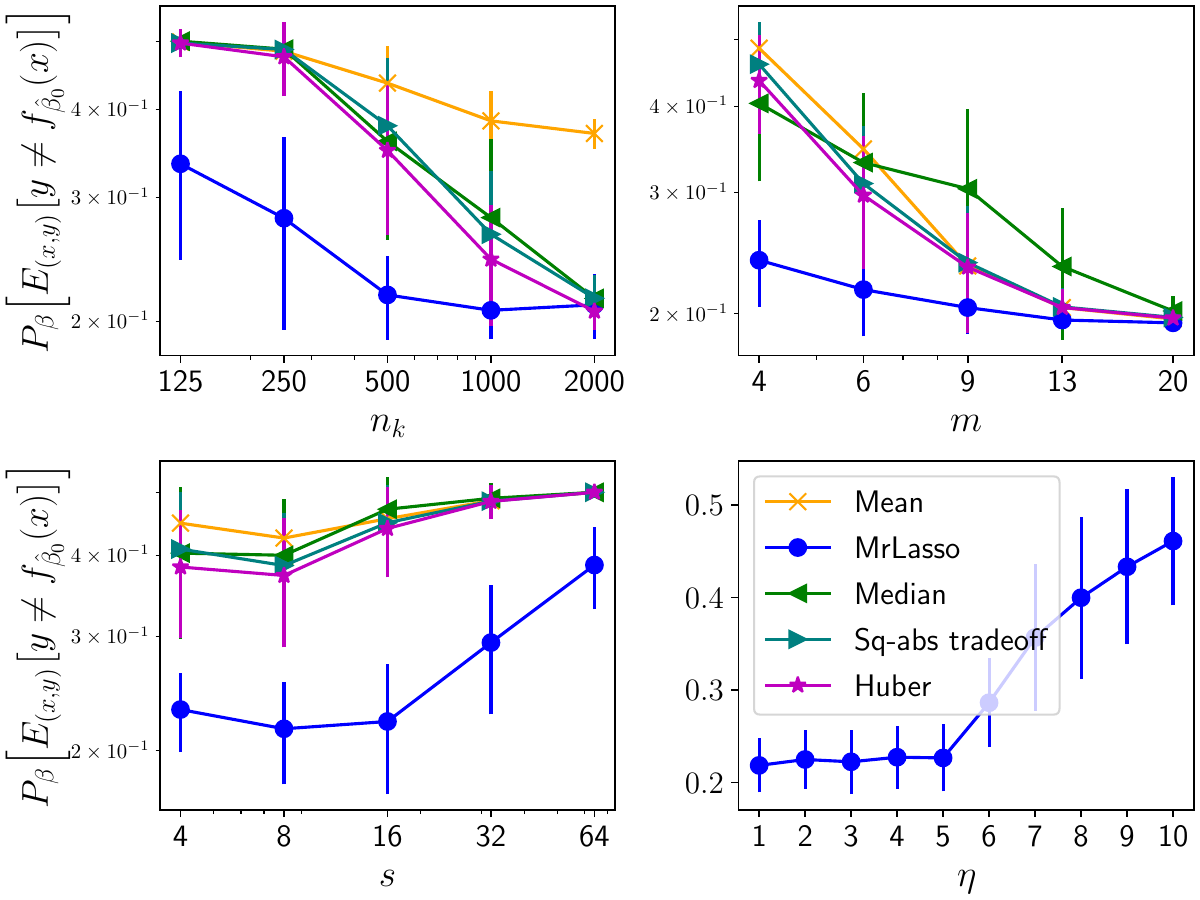}
	\caption{\textit{Upper left}: Errorbar plots for prediction errors in test data  using global estimators for varying sample sizes in each datasets ($n_k$) with $m = 5$ and $s = 5.$  \textit{Upper right}: For varying $m$ with $n_k = 200$ and $s = 5$. \textit{Lower left}: For different $s$ with $n_k = 200$ and $m = 5$.   \textit{Lower right}: For different $\eta$ with $n_k = 200$ and $s = 5$ and $m = 5.$ }
	\label{fig:error.glm}
	
\end{figure}

\section{Supplementary details for cancer cell line study}
\label{sec:supp-ccle}
In the following section we provide the detail about cancer cell line dataset. 

\subsection{Data source and preprocessing}
The Cancer Cell Line Encyclopedia is a database on gene expression, genotype and drug sensitivity data. As covariates, we use the RNAseq TPM gene expression data \cite{depmap_2021} for just protein coding genes (DepMap 21Q2 Public release) and the pharmacologic profiles for 24 anticancer drugs \cite{cancer2015pharmacogenomic} across 504 across lines as the responses. The cell-lines in the gene expression data and the drug response data are matched using the pairs of cell line name and depmap id from the secondary screen dose response curve parameters file \cite{corsello2019non} (secondary-screen-dose-response-curve.csv). These data files are publicly available in DepMap portal\footnote{Depmap portal: https://depmap.org/portal/}

The gene expression data has  genetic expression  for $d = 19177$ genes across 1379 cell lines. After matching them with the drug response data we only keep the  cell lines for which the dose-response curve was fitted using sigmoid function. We use the area above dose-response curve (ActArea) as the drug response measure. Finally we end up with $482$ cell lines in total. 
The codes are available in \href{https://github.com/smaityumich/MrLasso}{https://github.com/smaityumich/MrLasso}.

\subsection{Drug-response prediction plots}
Next we present prediction error plots and gene selections for the drugs 17-AAG, Irinotecan, AZD0530, AEW541, ZD-6474, TKI258, RAF265 and PF2341066. 

\begin{figure}
 \centering
 \includegraphics[width=\textwidth]{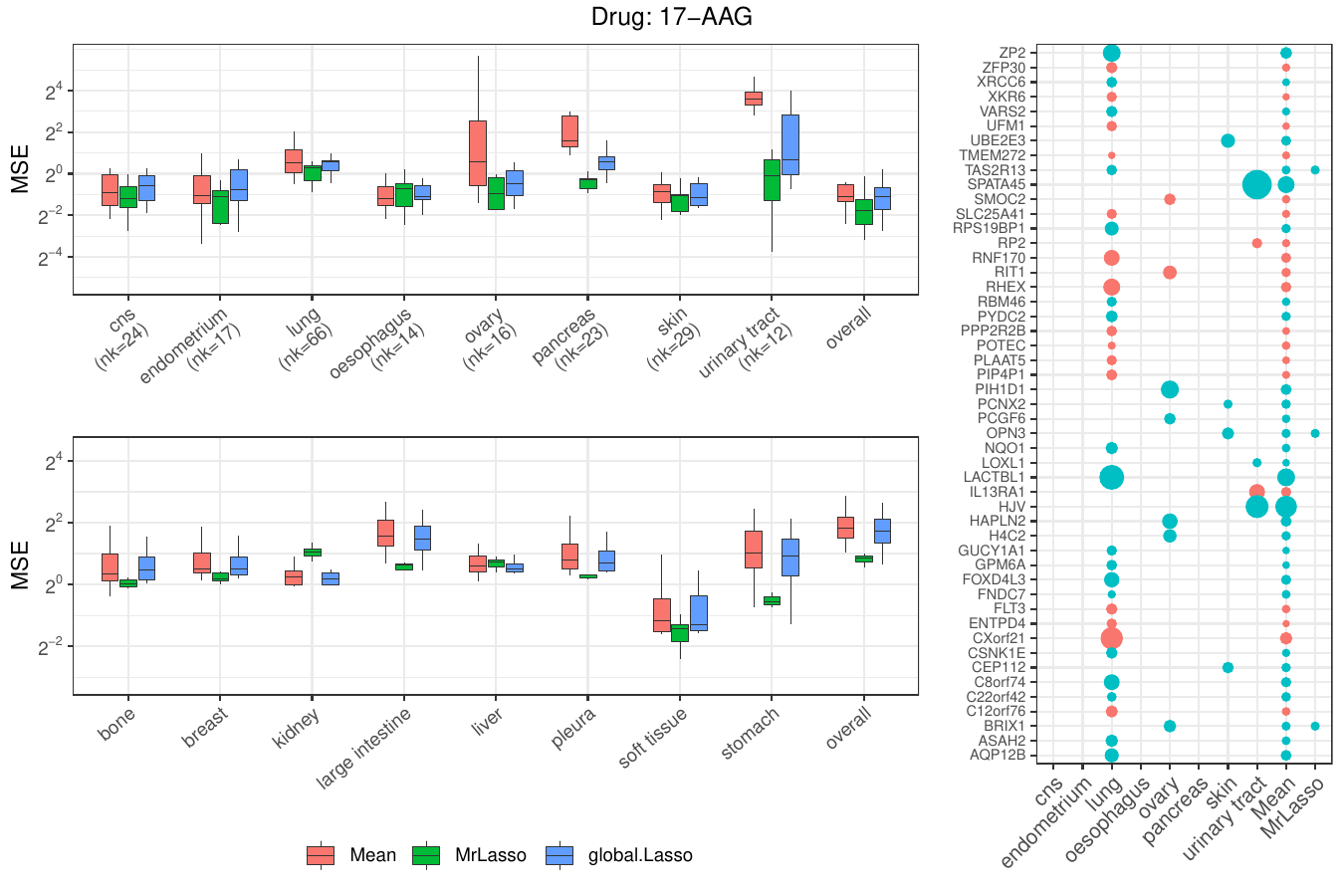}

 \caption{\textit{Left:} Drug-response prediction mean squared errors (MSEs) for  the most frequent cancer types (upper left) and rare cancer types (lower left panel) for the drug 17-AAG.  \textit{Right:} Coefficient plot for the selected genes in lasso estimates for most frequent cancer types and ADELE and Mr Lasso.}
 \label{fig:ccle-17-AAG}
\end{figure}

\begin{figure}
 \centering
 \includegraphics[width=\textwidth]{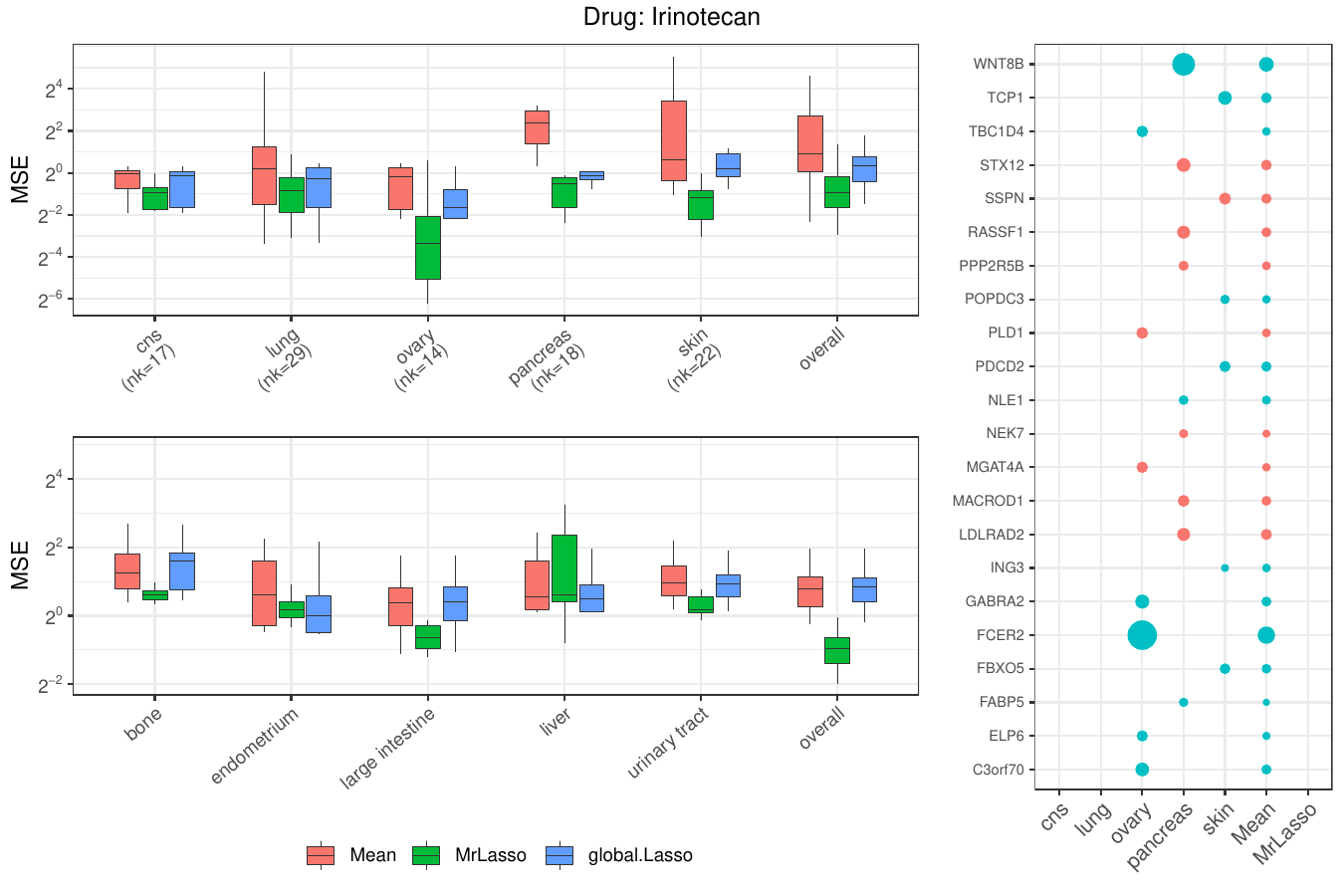}

 \caption{\textit{Left:} Drug-response prediction mean squared errors (MSEs) for  the most frequent cancer types (upper left) and rare cancer types (lower left panel) for the drug Irinotecan.  \textit{Right:} Coefficient plot for the selected genes in lasso estimates for most frequent cancer types and ADELE and Mr Lasso.}
 \label{fig:ccle-Irinotecan}
\end{figure}

\begin{figure}
 \centering
 \includegraphics[width=\textwidth]{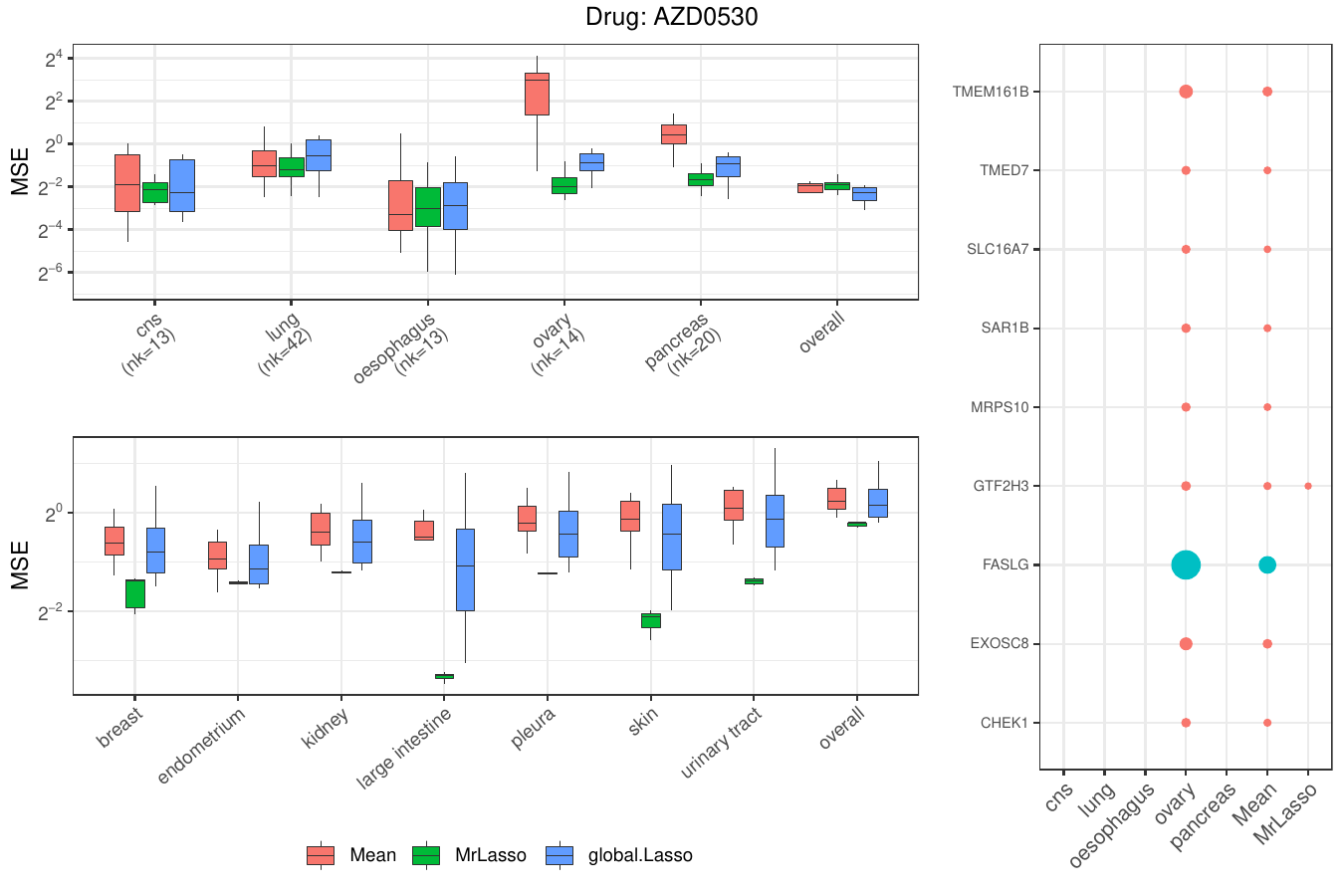}

 \caption{\textit{Left:} Drug-response prediction mean squared errors (MSEs) for  the most frequent cancer types (upper left) and rare cancer types (lower left panel) for the drug AZD0530.  \textit{Right:} Coefficient plot for the selected genes in lasso estimates for most frequent cancer types and ADELE and Mr Lasso.}
 \label{fig:ccle-AZD0530}
\end{figure}

\begin{figure}
 \centering
 \includegraphics[width=\textwidth]{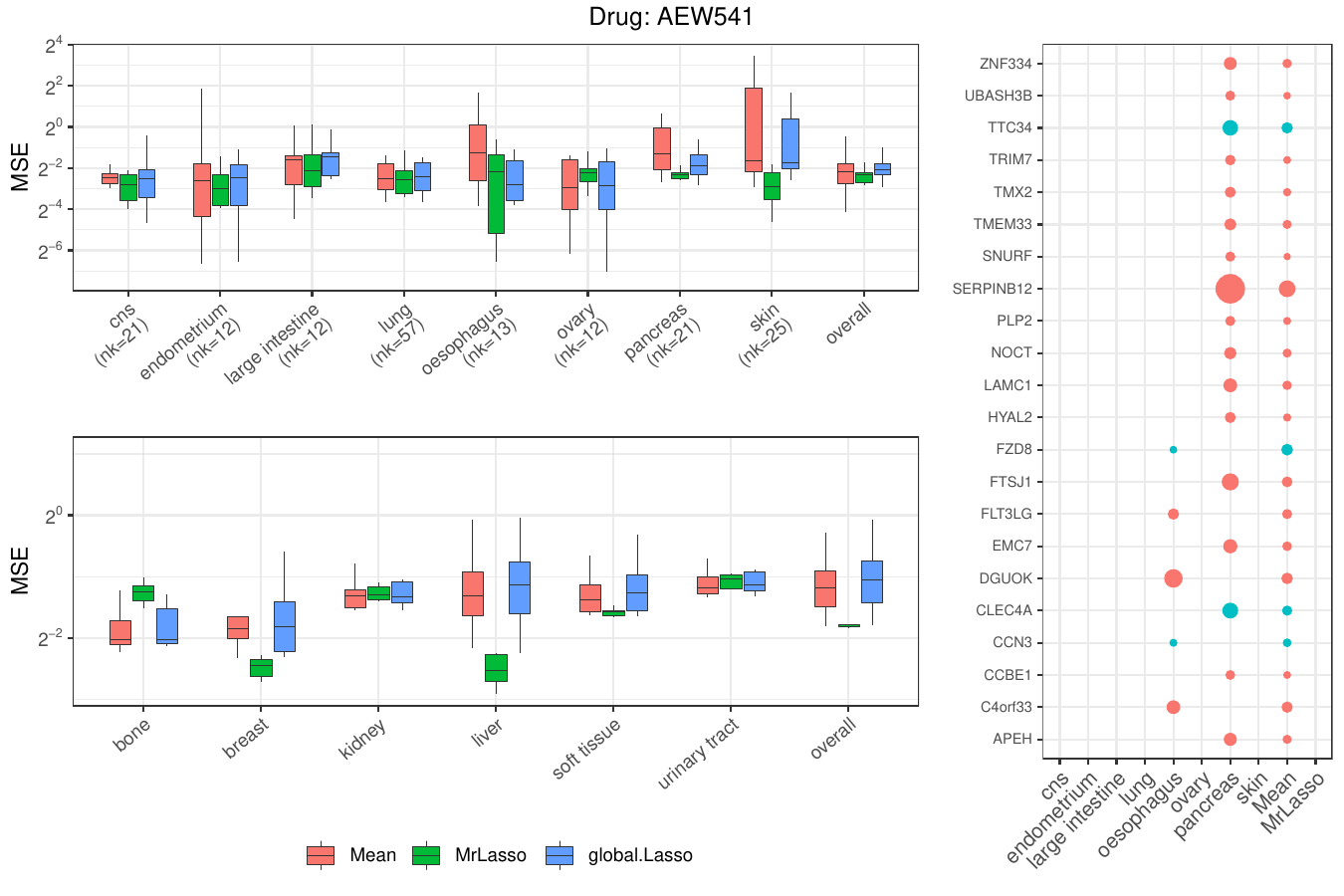}

 \caption{\textit{Left:} Drug-response prediction mean squared errors (MSEs) for  the most frequent cancer types (upper left) and rare cancer types (lower left panel) for the drug AEW541.  \textit{Right:} Coefficient plot for the selected genes in lasso estimates for most frequent cancer types and ADELE and Mr Lasso.}
 \label{fig:ccle-AEW541}
\end{figure}

\begin{figure}
 \centering
 \includegraphics[width=\textwidth]{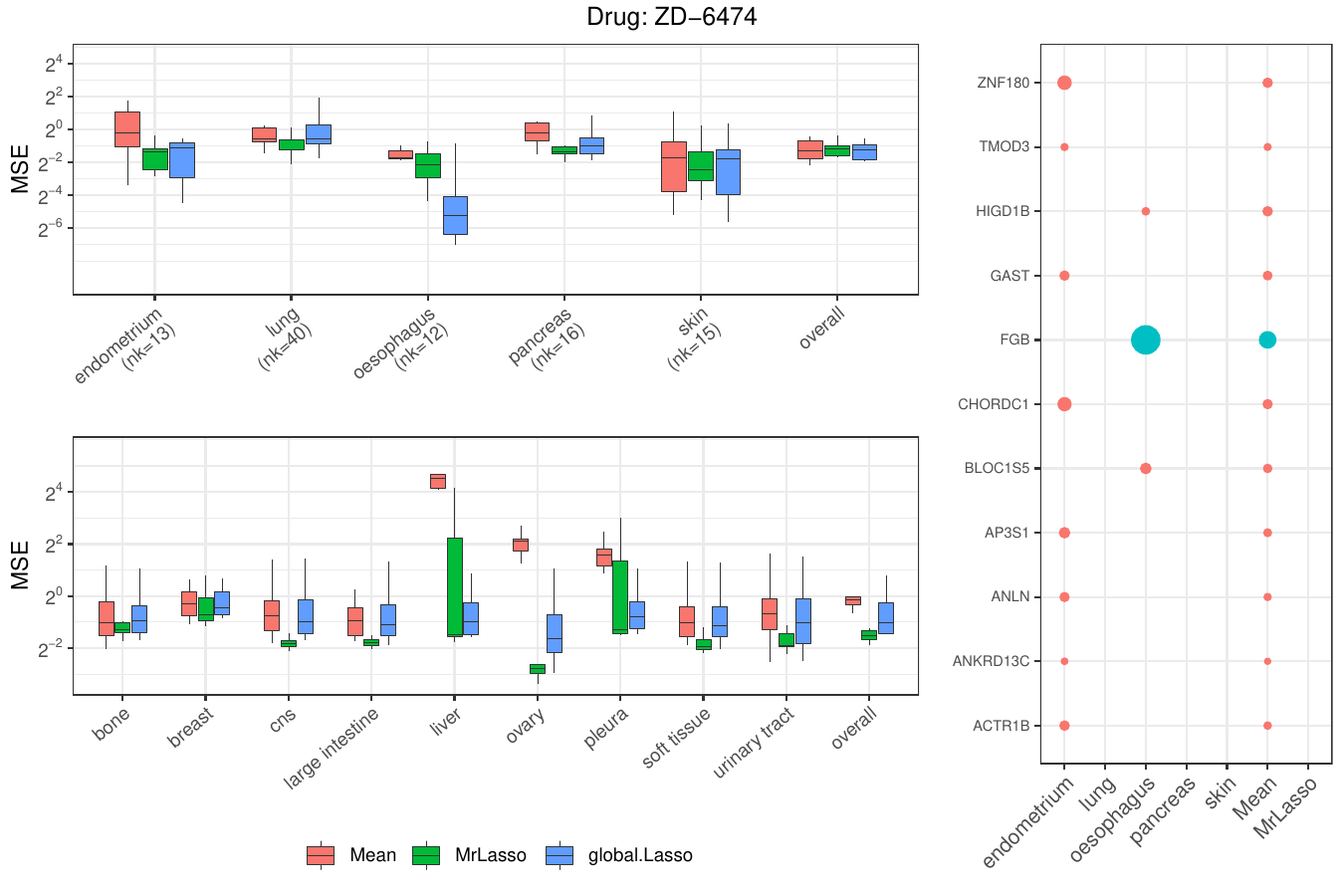}

 \caption{\textit{Left:} Drug-response prediction mean squared errors (MSEs) for  the most frequent cancer types (upper left) and rare cancer types (lower left panel) for the drug ZD-6474.  \textit{Right:} Coefficient plot for the selected genes in lasso estimates for most frequent cancer types and ADELE and Mr Lasso.}
 \label{fig:ccle-ZD-6474}
\end{figure}

\begin{figure}
 \centering
 \includegraphics[width=\textwidth]{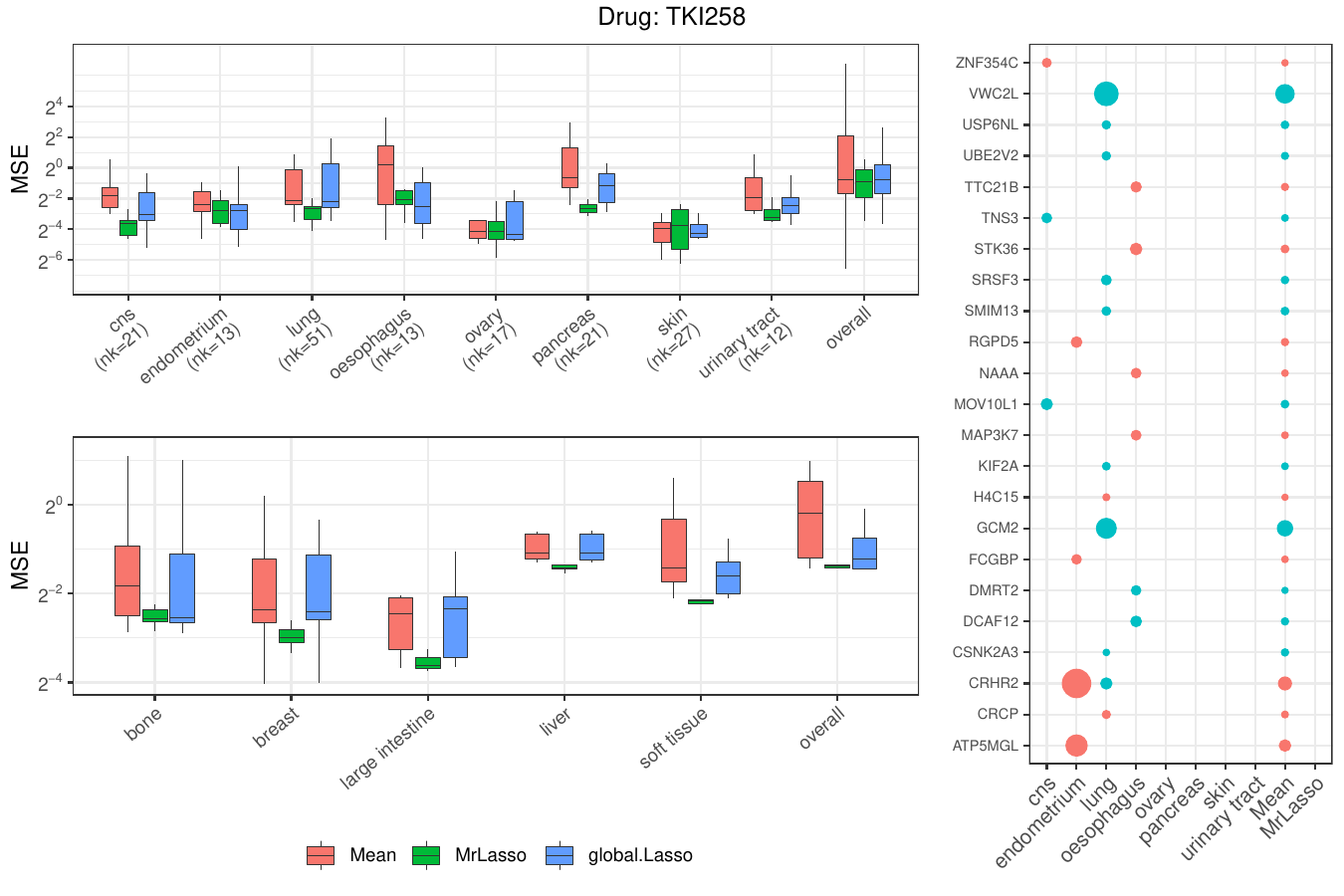}

 \caption{\textit{Left:} Drug-response prediction mean squared errors (MSEs) for  the most frequent cancer types (upper left) and rare cancer types (lower left panel) for the drug TKI258.  \textit{Right:} Coefficient plot for the selected genes in lasso estimates for most frequent cancer types and ADELE and Mr Lasso.}
 \label{fig:ccle-TKI258}
\end{figure}

\begin{figure}
 \centering
 \includegraphics[width=\textwidth]{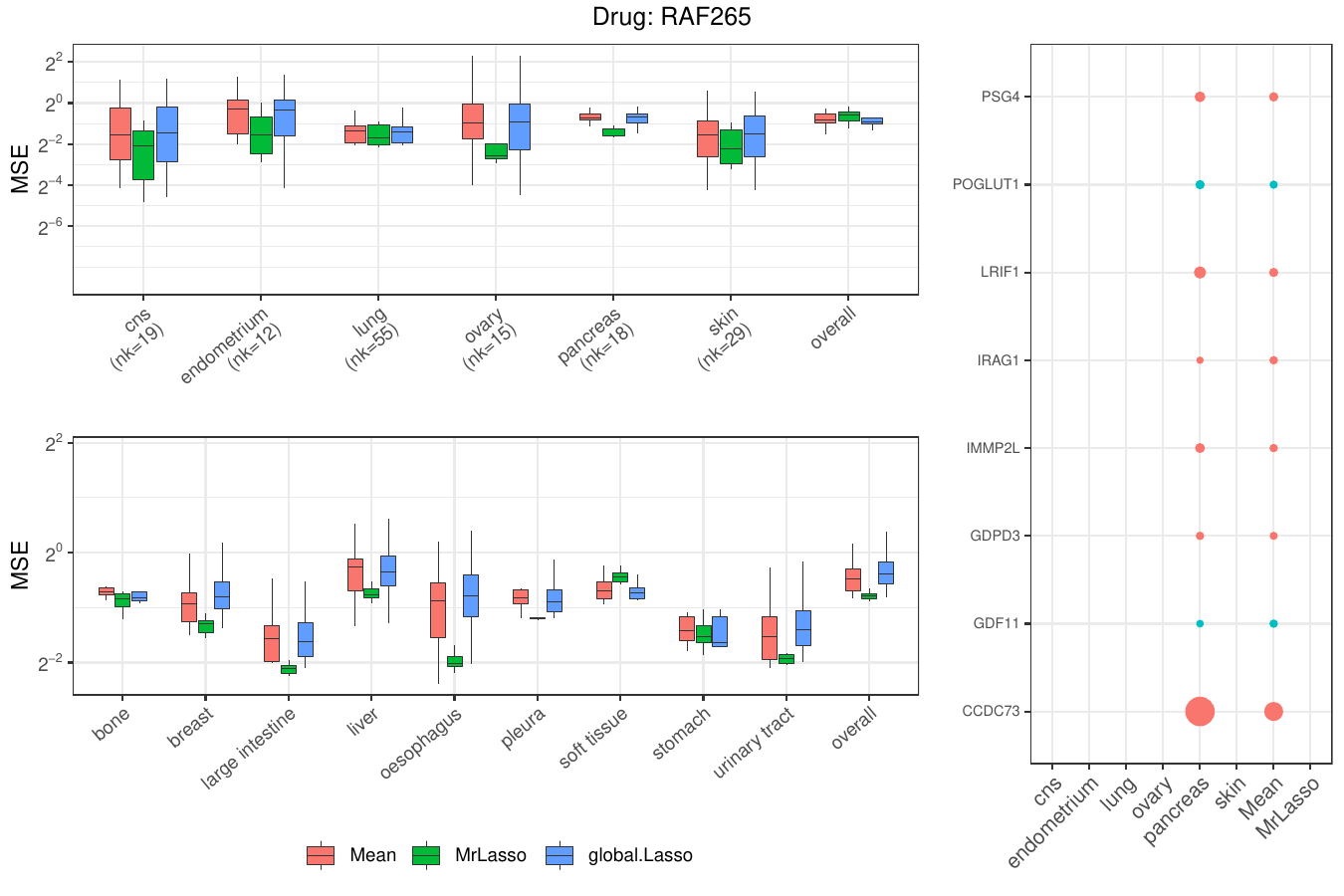}

 \caption{\textit{Left:} Drug-response prediction mean squared errors (MSEs) for  the most frequent cancer types (upper left) and rare cancer types (lower left panel) for the drug RAF265.  \textit{Right:} Coefficient plot for the selected genes in lasso estimates for most frequent cancer types and ADELE and Mr Lasso.}
 \label{fig:ccle-RAF265}
\end{figure}

\begin{figure}
 \centering
 \includegraphics[width=\textwidth]{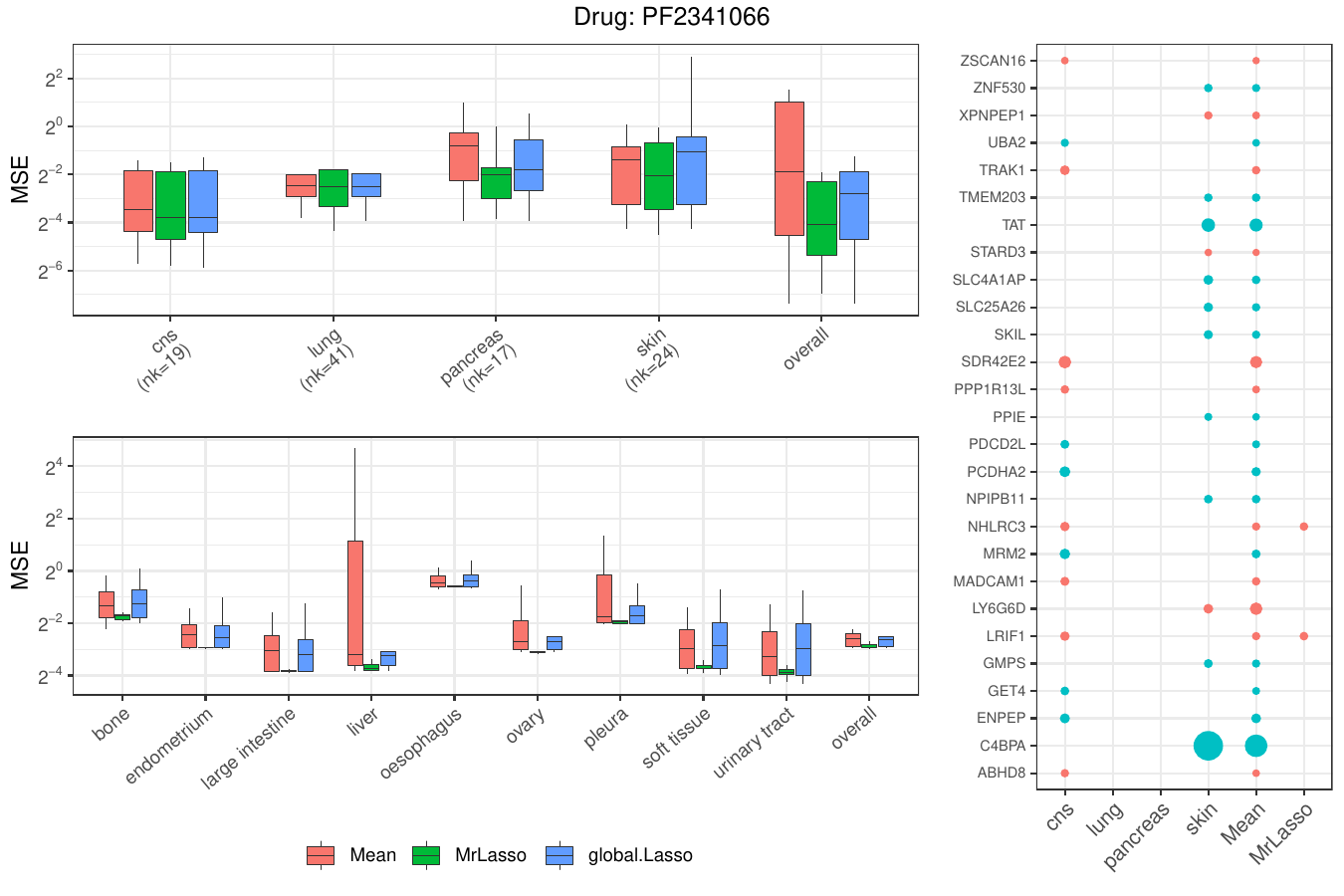}

 \caption{\textit{Left:} Drug-response prediction mean squared errors (MSEs) for  the most frequent cancer types (upper left) and rare cancer types (lower left panel) for the drug PF2341066.  \textit{Right:} Coefficient plot for the selected genes in lasso estimates for most frequent cancer types and ADELE and Mr Lasso.}
 \label{fig:ccle-PF2341066}
\end{figure}

\end{document}